\newcommand{\googlebooks}[1]{(preview at \href{https://books.google.com/books?id=#1}{google books})}
\newcommand{\numdam}[1]{}
\DeclareMathAlphabet{\mathpzc}{OT1}{pzc}{m}{it}
\def\semicolon{;}
\def\applytolist#1{
    \expandafter\def\csname multi#1\endcsname##1{
        \def\multiack{##1}\ifx\multiack\semicolon
            \def\next{\relax}
        \else
            \csname #1\endcsname{##1}
            \def\next{\csname multi#1\endcsname}
        \fi
        \next}
    \csname multi#1\endcsname}
\def\calc#1{\expandafter\def\csname c#1\endcsname{{\mathcal #1}}}
\def\bbc#1{\expandafter\def\csname bb#1\endcsname{{\mathbb #1}}}
\def\bfc#1{\expandafter\def\csname bf#1\endcsname{{\mathbf #1}}}
\def\sfc#1{\expandafter\def\csname s#1\endcsname{{\sf #1}}}
\def\fc#1{\expandafter\def\csname f#1\endcsname{{\mathfrak #1}}}
\def\fixtikzforbreqn#1#2{%
  \protected\edef#1{\noexpand\ifmmode\mathchar\the\mathcode`#2 \noexpand\else#2\noexpand\fi}%
}
\tikzset{vertex/.style = {shape=circle,draw,fill=black,inner sep=0pt,minimum size=5pt}}
\tikzset{edge/.style = {->,> = latex', bend right}}
\tikzset{
	super thick/.style={line width=3pt}
}
\tikzset{
    quadruple/.style args={[#1] in [#2] in [#3] in [#4]}{
        #1,preaction={preaction={preaction={draw,#4},draw,#3}, draw,#2}
    }
}
\tikzstyle{shaded}=[fill=red!10!blue!20!gray!30!white]
\tikzstyle{unshaded}=[fill=white]
\tikzstyle{empty box}=[circle, draw, thick, fill=white, opaque, inner sep=2mm]
\tikzstyle{annular}=[scale=.7, inner sep=1mm, baseline]
\tikzstyle{rectangular}=[scale=.75, inner sep=1mm, baseline=-.1cm]
\tikzstyle{mid>}=[decoration={markings, mark=at position 0.5 with {\arrow{>}}}, postaction={decorate}]
\tikzstyle{mid<}=[decoration={markings, mark=at position 0.5 with {\arrow{<}}}, postaction={decorate}]
\tikzstyle{over}=[double, draw=white, super thick, double=]
\tikzstyle{snake}=[decorate, decoration={snake, segment length=1mm, amplitude=.3mm}]
\tikzstyle{saw}=[decorate, decoration={saw, segment length=.7mm, amplitude=.25mm}]
\tikzstyle{coupon}=[draw, very thick, rectangle, rounded corners=5pt]
\tikzset{Rightarrow/.style={double equal sign distance,>={Implies},->},
triplecd/.style={-,preaction={draw,Rightarrow}},
quadruplecd/.style={preaction={draw,Rightarrow,
shorten >=0pt
},
shorten >=1pt,
-,double,double
distance=0.2pt}}
\tikzset{
    tripleline/.style args={[#1] in [#2] in [#3]}{
        #1,preaction={preaction={draw,#3},draw,#2}
    }
}
\tikzstyle{triple}=[tripleline={[line width=.15mm,black] in
\tikzset{
    quadrupleline/.style args={[#1] in [#2] in [#3] in [#4]}{
        #1,preaction={preaction={preaction={draw,#4},draw,#3}, draw,#2}
    }
}
\tikzstyle{quadruple}=[quadrupleline={[line width=.3mm,white] in
\theoremstyle{plain}
\newtheorem{thm}{Theorem}[section]
\newtheorem*{thm*}{Theorem}
\newtheorem{thmalpha}{Theorem}
\newtheorem{cor}[thm]{Corollary}
\newtheorem{coralpha}[thmalpha]{Corollary}
\newtheorem*{cor*}{Corollary}
\newtheorem*{conj*}{Conjecture}
\newtheorem{lem}[thm]{Lemma}
\newtheorem*{lem*}{Lemma}
\newtheorem{prop}[thm]{Proposition}
\newtheorem{problem}[thm]{Problem}
\newtheorem*{quest*}{Question}
\newtheorem*{claim*}{Claim}
\theoremstyle{definition}
\newtheorem{defn}[thm]{Definition}
\newtheorem{ex}[thm]{Example}
\newtheorem{sub-ex}[thm]{Sub-Example}
\newtheorem{counter-ex}[thm]{Counter-Example}
\newtheorem{rem}[thm]{Remark}
\newtheorem*{rem*}{Remark}
\definecolor{dark-red}{rgb}{0.7,0.25,0.25}
\definecolor{dark-blue}{rgb}{0.15,0.15,0.55}
\definecolor{medium-blue}{rgb}{0,0,.8}
\definecolor{DarkGreen}{RGB}{0,150,0}
\definecolor{rho}{named}{red}
\def\altdb{\vadjust{\vbox to 0pt{\vss\hbox{\kern \hsize
\quad{\dbend}}\kern\baselineskip\kern-10pt}}}
\newcommand{\noshow}[1]{}
\begin{document} 

\title{DHR bimodules of quasi-local algebras and symmetric quantum cellular automata }
\author{Corey Jones}

\maketitle

\begin{abstract}
For a net of C*-algebras on a discrete metric space, we introduce a bimodule version of the DHR tensor category and show that it is an invariant of quasi-local algebras under isomorphisms with bounded spread. For abstract spin systems on a lattice $L\subseteq \mathbbm{R}^{n}$ satisfying a weak version of Haag duality, we construct a braiding on these categories. Applying the general theory to quasi-local algebras $A$ of operators on a lattice invariant under a (categorical) symmetry, we obtain a homomorphism from the group of symmetric QCA to $\textbf{Aut}_{br}(\textbf{DHR}(A))$, containing symmetric finite depth circuits in the kernel. For a spin chain with fusion categorical symmetry $\mathcal{D}$, we show that the DHR category of the quasi-local algebra of symmetric operators is equivalent to the Drinfeld center $\mathcal{Z}(\mathcal{D})$ . 
We use this to show that for the double spin flip action $\mathbbm{Z}/2\mathbbm{Z}\times \mathbbm{Z}/2\mathbbm{Z}\curvearrowright \mathbbm{C}^{2}\otimes \mathbbm{C}^{2}$, the group of symmetric QCA modulo symmetric finite depth circuits in 1D contains a copy of $S_{3}$, hence it is non-abelian, in contrast to the case with no symmetry.
\end{abstract}

\tableofcontents

\section{Introduction}

In the algebraic approach to quantum spin systems on a lattice, a fundamental role is played by the quasi-local C*-algebra generated by local operators \cite{MR1441540}. In ordinary spin systems, this is an infinite tensor product of matrix algebras. Upon restricting to operators invariant under a global (categorical) symmetry or when considering the operators acting on the boundary of a topologically ordered spin system, the resulting quasi-local algebras can be more complicated approximately finite dimensional (AF) algebras. The work of Bratteli \cite{MR312282} and Elliott \cite{MR397420} gives a classification of AF algebras up to isomorphism. However, arbitrary isomorphisms between quasi-local algebras are not always physically relevant since, in general, they do not map local Hamiltonians to local Hamiltonians\footnote{Locality means many different things in different contexts. Here, by local Hamiltonian we mean the terms in the Hamiltonian have supports with uniformly bounded diameters \cite[Chapter 4]{MR3929747}.} 

A physically natural condition to impose on isomorphisms between quasi-local algebras defined on the same metric space is \textit{bounded spread}. For two nets of algebras defined on a discrete metric space $L$, an isomorphism $\alpha$ between their quasi-local algebras has bounded spread if there exists an $R\ge 0$ such that operators localized in a finite region $F\subseteq L$ are mapped to operators localized in the $R$ neighborhood of $F$ by $\alpha$ and $\alpha^{-1}$. Unlike generic isomorphisms between quasi-local algebras, isomorphisms with bounded spread map local Hamiltonians to local Hamiltonians, making them more natural from a physical perspective. This raises the general problem of classifying quasi-local algebras up to bounded spread isomorphism.

Bounded spread isomorphisms are also interesting as objects in their own right. Automorphisms of the quasi-local algebra of a spin system (without symmetry) with bounded spread are called \textit{quantum cellular automata} (QCA) \cite{ https://doi.org/10.48550/arxiv.quant-ph/0405174}, and have been extensively studied in the physics literature (we refer the reader to the review article \cite{Farrelly2020reviewofquantum} and references therein). These can be viewed as a natural class of symmetries of the moduli space of local Hamiltonians, but also are natural models for discrete-time unitary dynamics. Finite depth quantum circuits (FDQC) are a normal subgroup of QCA which are implemented by local unitaries, and are used as to operationally define equivalence for topologically ordered states \cite{PhysRevB.82.155138}. There is significant interest in understanding the quotient group QCA/FDQC, which can be interpreted as the collection of topological phases of QCA \footnote{see Section \ref{physical interpretation} for further discussion and references.} \cite{MR2890305, MR4103966, MR4381173, MR4544190, https://doi.org/10.48550/arxiv.2211.02086,https://doi.org/10.48550/arxiv.2205.09141, PRXQuantum.3.030326}. While there has been recent progress on studying symmetry protected QCA \cite{IgnacioCirac_2017, PhysRevLett.124.100402}, relatively little is known about the structure of topological phases of QCA defined only on symmetric operators. 

We can approach both the problem of finding bounded spread isomorphism invariants of quasi-local algebras and of finding invariants of QCA/FDQC simultaneously, by looking for \textit{functorial} invariants of quasi-local algebras. To be more precise, consider the groupoid $\textbf{Net}_{L}$ whose objects are general nets of C*-algebras on a discrete metric space $L$ (Definition \ref{Discretenetdef}), and whose morphisms are isomorphisms of quasi-local algebras with bounded spread. Then any functor from $\text{Net}_{L}$ to another groupoid which contains finite depth circuits in the kernel will yield algebraic invariants of both general quasi-local algebras and of topological phases  of QCA.

An important component of an algebraic quantum field theory is its DHR category of superselection sectors \cite{MR258394, MR297259, MR334742}. Motivated by finding functorial invariants for discrete nets of C*-algebras, we develop a version of DHR theory suitable for our setting. For a net of C*-algebras $A$ over a discrete metric space $L$, we introduce the C*-tensor category $\textbf{DHR}(A)$, which consists of localizable bimodules of the quasi-local algebra (Definition \ref{localizable bimodule}). This is a direct generalization of localized, transportable endomorphisms from the usual DHR formalism \cite{MR1405610, Halvorson2006-HALAQF}. Our formalism extends the ideas of \cite{MR1463825}, who consider the special case of 1D spin chains with Hopf algebra symmetry and utilize the formalism of unital amplimorphisms rather than bimodules.

To state the first main result of the paper, let $\textbf{C*-Tens}$ denote the groupoid of C*-tensor categories and unitary tensor equivalences (up to unitary monoidal natural isomorphism). Then we have the following theorem:

\begin{thmalpha}\label{ThmA}
Let $L$ be a countable metric space with bounded geometry. There is a canonical functor $\textbf{DHR}: \textbf{Net}_{L}\rightarrow \textbf{C*-Tens}$, containing finite depth quantum circuit in the kernel. In particular

\begin{enumerate}
\item 
The monoidal equivalence class of $\textbf{DHR}(A)$ is an invariant of the quasi-local algebra up to bounded spread isomorphism.
\item
We have a homomorphism 
$$\textbf{DHR}: \textbf{QCA}(A)/\textbf{FDQC}(A)\rightarrow \textbf{Aut}_{\otimes}(\textbf{DHR}(A)).$$

\end{enumerate}
\end{thmalpha}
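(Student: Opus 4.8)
The plan is to define the functor on objects by the assignment $A \mapsto \textbf{DHR}(A)$ of Definition \ref{localizable bimodule}, and on a bounded spread isomorphism $\alpha \colon A \to B$ by \emph{pushforward of bimodules}: given a localizable $A$-$A$ bimodule $X$, let $\textbf{DHR}(\alpha)(X)$ be the $B$-$B$ bimodule with the same underlying Hilbert space but with the two actions precomposed with $\alpha^{-1}$, i.e.\ $b \triangleright \xi \triangleleft b' := \alpha^{-1}(b)\,\xi\,\alpha^{-1}(b')$; on intertwiners one keeps the same underlying linear map. The first thing to check is that this lands in $\textbf{DHR}(B)$, and this is exactly where bounded spread enters: if $X$ is localized in a finite region $F$ and $\alpha^{-1}$ has spread $R$, then operators of $B$ localized away from the $R$-neighborhood $N_R(F)$ are carried by $\alpha^{-1}$ into operators of $A$ localized away from $F$, on which the two actions of $X$ already agree; hence the twisted bimodule is localized in $N_R(F)$. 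Transportability is inherited in the same way, by conjugating the transporter charges through $\alpha$.

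Next I would verify the monoidal and functorial structure. Because $\alpha$ is an algebra isomorphism, the relative tensor product (Connes fusion) is natural under pushforward, giving canonical unitaries $\textbf{DHR}(\alpha)(X) \otimes_B \textbf{DHR}(\alpha)(Y) \cong \textbf{DHR}(\alpha)(X \otimes_A Y)$, and the associativity and unit coherences are routine since all actions are merely precomposed with $\alpha^{-1}$. Functoriality $\textbf{DHR}(\alpha \circ \beta) \cong \textbf{DHR}(\alpha) \circ \textbf{DHR}(\beta)$ and $\textbf{DHR}(\id) = \id$ follow from $(\alpha\circ\beta)^{-1} = \beta^{-1}\circ\alpha^{-1}$, while invertibility of $\alpha$ (with $\alpha^{-1}$ again of bounded spread) shows $\textbf{DHR}(\alpha)$ is a unitary tensor equivalence with inverse $\textbf{DHR}(\alpha^{-1})$. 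Since the target groupoid $\textbf{C*-Tens}$ identifies equivalences up to unitary monoidal natural isomorphism, these coherence isomorphisms become honest equalities of morphisms.

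The crux is showing finite depth circuits lie in the kernel, i.e.\ that $\textbf{DHR}(\alpha) \cong \id$ monoidally whenever $\alpha$ is a finite depth circuit. I would first treat an inner automorphism $\alpha = \Ad(u)$ with $u$ a unitary of $A$: the maps $\eta_X \colon X \to \textbf{DHR}(\Ad u)(X)$, $\eta_X(\xi) = u^{*}\,\xi\, u$, are unitary bimodule isomorphisms, and a direct computation shows they are natural in $X$ and compatible with Connes fusion, exhibiting a unitary monoidal natural isomorphism to the identity. For a genuine finite depth circuit the implementing unitary $U = U_n \cdots U_1$ is only quasi-local and need not lie in $A$, so $\alpha$ is not literally inner; the key idea is to exploit localization. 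For a bimodule $X$ localized in $F$, factor $U = U_{\mathrm{near}}\,U_{\mathrm{far}}$ into the finite product of gates meeting $N_R(F)$ and the remaining gates supported away from $F$. The near part is an honest unitary of $A$ and contributes an inner twist as above, while for the far part the identity $U_{\mathrm{far}}^{*}\,\xi\, U_{\mathrm{far}} = \xi$ holds, because the left and right actions of a far-localized operator coincide on $X$; thus $U^{*}\,\xi\,U = U_{\mathrm{near}}^{*}\,\xi\,U_{\mathrm{near}}$ is well defined even though $U$ is not. Assembling these region-by-region isomorphisms into a single natural transformation, and checking that the result is independent of the auxiliary region up to the localization equivalence and compatible with fusion, is the main technical obstacle, since it demands coherence of the local choices across differing localization data.

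Finally, the two itemized consequences are formal. Statement (1) is the observation that a functor of groupoids sends isomorphic objects to isomorphic objects, so bounded spread isomorphic quasi-local algebras have monoidally equivalent DHR categories. Statement (2) is the restriction of the functor to the automorphism group $\textbf{QCA}(A) = \Aut_{\textbf{Net}_{L}}(A)$, yielding a homomorphism into $\textbf{Aut}_{\otimes}(\textbf{DHR}(A))$; since $\textbf{FDQC}(A)$ lies in the kernel by the previous paragraph, it descends to $\textbf{QCA}(A)/\textbf{FDQC}(A)$.
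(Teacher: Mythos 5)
Your overall architecture is the same as the paper's: the identical pushforward $\alpha_{*}$ (same underlying space, actions twisted by $\alpha^{-1}$), localization of $\alpha_{*}(X)$ via bounded spread, conjugation by circuit unitaries to trivialize circuits, and the same formal derivation of items (1) and (2). However, there is a genuine gap in the kernel step, and it sits exactly where you flag ``the main technical obstacle.'' Under Definition \ref{localizable bimodule}, localizability of $X$ in $F$ only provides a \emph{projective basis} $\{b_{i}\}$ with $ab_{i}=b_{i}a$ for $a\in A_{F^{c}}$; it does \emph{not} say that the left and right actions of $A_{F^{c}}$ agree on all of $X$. For a general vector $\xi=\sum_{i} b_{i}\triangleleft \langle b_{i}\,|\,\xi\rangle$, conjugation by far gates disturbs the coefficients $\langle b_{i}\,|\,\xi\rangle\in A$, which need not commute with those gates, so your identity $U_{\mathrm{far}}^{*}\,\xi\,U_{\mathrm{far}}=\xi$ fails on general $\xi$ (and $U_{\mathrm{far}}$ is not an element of $A$, so the expression itself needs interpretation). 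Consequently the formula $\eta_{X}(\xi):=U_{\mathrm{near}}^{*}\,\xi\,U_{\mathrm{near}}$ is not well defined on all of $X$ by the argument you give.

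The paper's proof repairs exactly this point: it introduces $X_{G}:=\{x\in X : ax=xa\ \text{for all}\ a\in A_{G^{c}}\}$, observes that $\bigcup_{G} X_{G}$ is dense in $X$ (this is where localizability enters: each basis element $b_{i}$ lies in $X_{F}$, and $b_{i}\triangleleft a\in X_{F\cup G_{0}}$ for $a\in A_{G_{0}}$, so the $A$-span of the basis is inside the union), defines $\eta_{X}$ on each $X_{G}$ by conjugation with the \emph{finite} product of gates meeting $G$ --- checking that the answer is stable under enlarging $G$, which is precisely your ``coherence of local choices across differing localization data'' --- and then extends by continuity, since the formula is isometric on this dense domain. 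Naturality and monoidality are likewise verified on these dense subspaces. So the coherence issue you defer is not a routine afterthought: it, together with the density argument you omit, is the actual content of the proof, and with the incorrect ``actions agree on $X$'' claim in its place the kernel statement remains unproven. A smaller organizational point: your factorization $U=U_{\mathrm{near}}U_{\mathrm{far}}$ only makes sense within a single depth-one layer, since gates in different layers need not commute; one should first reduce to depth-one circuits (legitimate because the kernel is a subgroup and $\textbf{FDQC}(A)$ is generated by depth-one circuits), as the paper does explicitly.
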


The first item above allows us to distinguish quasi-local algebras that are isomorphic as C*-algebras but not by bounded spread isomorphisms, while the second gives us a topological invariant of QCA. In particular, we can conclude that a QCA is \textit{not} a quantum circuit if it has a non-trivial image in $\textbf{Aut}_{\otimes}(\textbf{DHR}(A))$. We will exploit both of these consequences in the case of 1D symmetric spin systems (see Examples \ref{genspinflip} and \ref{non-abelian}). 

First, we address the issue of braidings. In the usual DHR theory the resulting categories are braided, which plays a significant role in many applications. In our context, this additional structure provides a finer invariant for quasi-local algebras and restricts the image of the DHR homomorphisms from QCA. Under some additional assumptions on the lattice (namely, that it is a discrete subspace of $\mathbbm{R}^{n}$) and the net itself (weak algebraic Haag duality, Definition \ref{algebraicHaag}), our DHR categories admit canonical braidings, and bounded spread isomorphisms induce braided equivalences on DHR categories.

\begin{thmalpha}\label{ThmB} Suppose $L\subseteq \mathbbm{R}^{n}$ is a lattice. If a net $A$ over $L$ satisfies weak algebraic Haag duality, there exists a canonical braiding on $\textbf{DHR}(A)$. Furthermore, if $A$ and $B$ are two such nets, then for any isomorphism $\alpha:A\rightarrow B$ with bounded spread, $\textbf{DHR}(\alpha)$ is a braided equivalence. As a consequence, we obtain \begin{enumerate}
\item 
The \textit{braided} monoidal equivalence class of $\textbf{DHR}(A)$ is an invariant of the quasi-local algebra up to bounded spread isomorphism.
\item
We have a homomorphism $\textbf{DHR}: \textbf{QCA}(A)/\textbf{FDQC}(A)\rightarrow \textbf{Aut}_{br}(\textbf{DHR}(A))$.

\end{enumerate}
\end{thmalpha}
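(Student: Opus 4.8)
The strategy is to transplant the classical DHR statistics operator into the bimodule framework of Theorem~\ref{ThmA}, with weak algebraic Haag duality (Definition~\ref{algebraicHaag}) playing the role that spacelike commutativity plays in the relativistic setting. To construct the braiding, I would first fix a reference direction in $\mathbbm{R}^{n}$, which equips any pair of sufficiently separated finite regions with a handedness (a consistent notion of one region lying ``before'' the other, rotationally in the case $n=2$). Given localizable bimodules $X,Y\in\textbf{DHR}(A)$ localized in regions $F,G$, transportability lets me choose isomorphic bimodules $\tilde X,\tilde Y$ localized in disjoint regions $F',G'$ with $F'$ before $G'$, together with unitary charge transporters $u\colon X\to\tilde X$ and $v\colon Y\to\tilde Y$. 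When two bimodules are localized in disjoint regions, weak algebraic Haag duality provides a \emph{canonical} isomorphism $c_{\tilde X,\tilde Y}\colon \tilde X\otimes_{A}\tilde Y\to\tilde Y\otimes_{A}\tilde X$, since the structure of one may be detected only in its own region and hence commutes with that of the other. Conjugating $c_{\tilde X,\tilde Y}$ by the transporters $u,v$ produces the candidate braiding $\beta_{X,Y}\colon X\otimes_{A} Y\to Y\otimes_{A} X$.

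The main work is to show $\beta$ is well defined and canonical. I would prove that $\beta_{X,Y}$ is independent of the auxiliary data $(\tilde X,\tilde Y,u,v)$: any two admissible choices of separated regions of the same handedness can be interpolated, and Haag duality forces the corresponding canonical commutations to agree after transport, so the resulting unitary is insensitive to the choices. Here the linear structure of $\mathbbm{R}^{n}$ is essential, as it is what guarantees that the family of separated configurations of a fixed handedness is suitably connected. With canonicity in hand, naturality of $\beta$ in each variable and the two hexagon identities follow from the standard DHR computations rewritten in terms of the relative tensor product: one moves the relevant localization regions far apart, applies the canonical commutations, and uses that charge transporters are intertwiners together with associativity of $\otimes_{A}$.

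For the functoriality statement, recall from Theorem~\ref{ThmA} that a bounded spread isomorphism $\alpha\colon A\to B$ of spread $R$ induces a unitary tensor equivalence $\textbf{DHR}(\alpha)$ carrying a bimodule localized in $F$ to one localized in the $R$-neighborhood $F^{(R)}$. The key geometric observation is that $\alpha$ preserves separatedness and handedness up to the fixed spread: if $F'$ lies before $G'$ with a gap exceeding $2R$, then $(F')^{(R)}$ still lies before $(G')^{(R)}$, and no bounded spread map can globally reverse the handedness (a reflection fails to be bounded spread). Consequently $\textbf{DHR}(\alpha)$ sends the braiding data computed for $A$ to the braiding data computed for $B$, so it intertwines the two braidings and is a braided equivalence. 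Conclusions (1) and (2) are then formal: (1) records that $\textbf{DHR}$ lands in the groupoid of braided C*-tensor categories and braided equivalences, and (2) specializes this to automorphisms, the containment of $\textbf{FDQC}(A)$ in the kernel being inherited from Theorem~\ref{ThmA}.

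The principal obstacle is the canonicity argument in the second paragraph. Unlike the endomorphism formalism, where localized endomorphisms in disjoint regions literally commute as maps, in the bimodule setting the commutation $c_{\tilde X,\tilde Y}$ is an isomorphism of relative tensor products that must be constructed from Haag duality and shown to be natural and compatible with the associator; verifying that it is independent of the separated regions chosen, so that one obtains a single well-defined braiding rather than a family, is the delicate point, and it is precisely where the \emph{weak} form of Haag duality, rather than its strict version, must be handled with care.
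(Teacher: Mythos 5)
Your proposal reproduces the overall shape of the argument, but it leaves unproved exactly the step that constitutes the paper's proof, so as written there is a genuine gap. The entire technical content of the braiding construction is the existence of your ``canonical isomorphism'' $c_{\tilde X,\tilde Y}\colon \tilde X\boxtimes_{A}\tilde Y\to \tilde Y\boxtimes_{A}\tilde X$ for disjointly localized bimodules, which you assert weak Haag duality ``provides'' and then defer, in your final paragraph, as the delicate point. In the endomorphism formalism this step is free (disjointly localized endomorphisms literally commute as maps), but for bimodules the two relative tensor products are completions of different algebraic quotients, and a unitary between them must be built by hand. The paper builds it in Lemma \ref{defineintertwine} as the basis swap $\sum_{i,j} b_{i}\boxtimes c_{j}\,a_{ij}\mapsto \sum_{i,j} c_{j}\boxtimes b_{i}\,a_{ij}$ on localizing projective bases $\{b_{i}\}$, $\{c_{j}\}$, and the proof that this map is well defined on the relative tensor product, isometric, and basis-independent rests on Corollary \ref{localization of inner product}: weak Haag duality forces the inner products $\langle b_{i}\,|\,a b_{j}\rangle$ of localized basis elements to lie in a bounded enlargement of the localization region, so these coefficients commute with everything localized far away and can be slid across $\boxtimes_{A}$. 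Without this (or an equivalent) construction there is no braiding whose coherences one could check; naming the obstacle is not the same as overcoming it.

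A secondary point: the charge-transporter superstructure you import from classical DHR theory is superfluous in this formalism, and keeping it adds a well-definedness burden (independence of the auxiliary data $(\tilde X,\tilde Y,u,v)$) that the paper never has to face. A DHR bimodule is here \emph{by definition} localizable in every sufficiently large ball, so one never passes to an isomorphic copy $\tilde X$; one simply chooses a different localizing basis of the same bimodule $X$ in a far-away ball. The role of transporter-independence is then played by Lemma \ref{Independenceofpoints} and Corollary \ref{biggerballs} (independence of the swap unitary under change of basis, enlargement of balls, and motion of the base points, with the handedness constraint $x<y$ entering only when $n=1$), and the hexagon identities reduce to the observation that $\{c_{j}\boxtimes d_{k}\}$ is a localizing basis for $Y\boxtimes Z$. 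Likewise, for functoriality no geometric argument about $\alpha$ preserving handedness is needed: since $\alpha_{*}(X)$ has the same underlying space as $X$ and any $F$-localizing basis of $X$ is an $N_{S}(F)$-localizing basis of $\alpha_{*}(X)$ when $\alpha$ has spread at most $S$, the braiding unitaries for $A$ and for $B$ are computed on literally the same elements and agree on the nose; this is how the paper concludes that $\textbf{DHR}(\alpha)$ is braided.
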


We proceed to apply the general theory to the case of 1D spin systems with fusion categorical symmetry. Categorical symmetry can be formalized either in terms of matrix product operators (MPOs) or weak Hopf algebra actions. In either case, we can realize the quasi-local algebra of symmetric operators as a net over $\mathbbm{Z}$, where the local algebras are endomorphisms of tensor powers of an object $X$ internal to a unitary fusion category $\cD$. Recall that $\mathcal{Z}(\cD)$ denotes the \textit{Drinfeld center} of $\cD$.

 \begin{thmalpha}\label{ThmC}
Let $\cD$ be a unitary fusion category and suppose $X\in \cD$ is strongly tensor generating. Then the net $A$ over $\mathbbm{Z}\subseteq \mathbbm{R}$ of tensor powers of $X$ satisfies (weak) algebraic Haag duality, and $\textbf{DHR}(A)\cong \mathcal{Z}(\cD)$ as braided C*-tensor categories. In particular, there exists a canonical homomorphism  $$\textbf{DHR}: \textbf{QCA}(A)/\textbf{FQDC}(A)\rightarrow \textbf{Aut}_{br}(\mathcal{Z}(\cD))\cong \textbf{BrPic}(\cD).$$

Furthermore, if $X$ is a characteristic object\footnote{we call an object characteristic if it is fixed up to isomorphism by any monoidal autoequivalence}, then the image of $\textbf{DHR}$ contains the subgroup $\textbf{Out}(\cD)\subseteq \textbf{Aut}_{br}(\mathcal{Z}(\cD))$.
 \end{thmalpha}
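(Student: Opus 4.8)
The plan is to establish weak algebraic Haag duality by a direct categorical computation, then build an explicit braided unitary tensor equivalence $F:\mathcal{Z}(\cD)\to\textbf{DHR}(A)$; the homomorphism of the ``In particular'' clause and the characteristic-object statement then follow by combining Theorem \ref{ThmB} with standard facts about $\textbf{BrPic}$. First I would fix notation: for a finite interval $I\subseteq\mathbbm{Z}$ of cardinality $n$ the local algebra is $A(I)=\End_\cD(X^{\otimes n})$, with inclusions $A(I)\hookrightarrow A(J)$ ($I\subseteq J$) given by tensoring with identities on the sites of $J\setminus I$, and $\mathfrak{A}$ denotes the resulting quasi-local algebra. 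For Haag duality I would compute the relative commutant of a half-line algebra inside a finite two-sided algebra. Because $\cD$ is semisimple and $X$ is strongly tensor generating, every simple object occurs in $X^{\otimes m}$ for all large $m$; decomposing each half-line factor into isotypic components $X^{\otimes k}\cong\bigoplus_{W\in\Irr(\cD)} V_W\otimes W$ and using that endomorphisms respect this decomposition, the commutant of the left half-line is identified, up to the bounded spread allowed in Definition \ref{algebraicHaag}, with the right half-line algebra. This is the categorical analogue of the classical DHR estimate, and strong tensor generation is exactly what guarantees the two halves share enough simple objects for the duality to hold with a uniform constant.

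Next the functor. To $(Z,e_Z)\in\mathcal{Z}(\cD)$ I assign the DHR bimodule of a localized transportable endomorphism $\rho_Z$: on operators supported to the right of a chosen basepoint it acts by $a\mapsto\id_Z\otimes a$ (interpreted within the $\textbf{DHR}$ bimodule formalism), and it is extended across operators supported to the left by using the half-braiding $e_Z$ to commute $Z$ past them. Naturality of $e_Z$ produces precisely the unitary intertwiners implementing transportability, so $\rho_Z$ is localizable and transportable, while a center morphism $f:(Z,e_Z)\to(Z',e_{Z'})$, being a morphism of $\cD$ compatible with the half-braidings, induces a bimodule intertwiner $F(f)$. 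The monoidal structure of $F$ is inherited from $\otimes$ in $\cD$. The essential point is that the DHR braiding --- defined by transporting one endomorphism far past another and comparing --- is computed exactly by the half-braiding $e_Z$, so that $F$ matches the braiding of $\mathcal{Z}(\cD)$; verifying this compatibility is the heart of the braided statement.

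That $F$ is an equivalence splits into fully faithfulness and essential surjectivity. Fully faithfulness is a computation: a localizable intertwiner between $F(Z,e_Z)$ and $F(Z',e_{Z'})$ lies, by Haag duality, in a local algebra, where it reduces to a morphism $Z\to Z'$ in $\cD$ commuting with the half-braidings, i.e.\ a morphism in the center. Essential surjectivity is the main obstacle: from an arbitrary localizable transportable bimodule I must reconstruct an object $Z\in\cD$ together with a half-braiding. The object is recovered by restricting the left action to a single site and decomposing via semisimplicity; the half-braiding is read off from the transport cocycle of the bimodule, and its hexagon axioms follow from compatibility of transport with the tensor product. The delicate part is showing that this reconstruction is independent of the chosen localization region and that the extracted data genuinely satisfies the center axioms, which again rests on Haag duality to pin down the relevant commutants.

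The homomorphism then follows by feeding the net $A$, now known to satisfy weak algebraic Haag duality, into Theorem \ref{ThmB} and invoking the isomorphism $\textbf{Aut}_{br}(\mathcal{Z}(\cD))\cong\textbf{BrPic}(\cD)$ of Etingof--Nikshych--Ostrik. For the final statement, given $\phi\in\textbf{Out}(\cD)$ I use that $X$ is characteristic to fix an isomorphism $u:\phi(X)\xrightarrow{\sim}X$; combining $\phi$ with $u^{\otimes n}$ and the monoidal structure of $\phi$ yields compatible automorphisms of each $\End_\cD(X^{\otimes n})$, hence an automorphism $\beta_\phi$ of $\mathfrak{A}$. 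Since $\phi$ acts sitewise and only the fixed monoidal-structure isomorphisms cause any leakage, $\beta_\phi$ has bounded spread and defines a QCA. Tracing through the equivalence $F$, I would check that $\textbf{DHR}(\beta_\phi)$ acts on $\mathcal{Z}(\cD)$ as the canonical lift of $\phi$; since $\textbf{Out}(\cD)\to\textbf{Aut}_{br}(\mathcal{Z}(\cD))$ is injective, this places a copy of $\textbf{Out}(\cD)$ inside the image of $\textbf{DHR}$, as claimed.
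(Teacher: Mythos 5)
Your high-level architecture (Haag duality, a functor out of $\mathcal{Z}(\cD)$ built by letting the half-braiding carry $Z$ past operators on one side, full faithfulness plus essential surjectivity, then Theorem \ref{ThmB} and a stabilizer-of-$X$ construction for $\textbf{Out}(\cD)$) is the paper's, and your functor and $\textbf{Out}(\cD)$ steps are essentially correct. However, your two load-bearing steps conceal genuine gaps. The first is Haag duality: it cannot be established by the ``direct categorical computation'' you describe, because the identification you want fails at every finite volume. Writing $X^{\otimes k}\cong\bigoplus_{Y\in\Irr(\cD)}V_{Y}\otimes Y$, the relative commutant of $\End_{\cD}(X^{\otimes k})\otimes 1$ inside $\End_{\cD}(X^{\otimes (k+m)})$ is $\bigoplus_{Y}\mathbbm{C}1_{V_{Y}}\otimes\End_{\cD}(Y\otimes X^{\otimes m})$, which strictly contains $1_{X^{\otimes k}}\otimes\End_{\cD}(X^{\otimes m})$ as soon as a simple $Y\neq\mathbbm{1}$ occurs in $X^{\otimes k}$. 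Hence isotypic decomposition gives only the easy inclusion; the hard inclusion $Z_{A}(A_{F^{c}})\subseteq A_{F}$ holds only in the infinite-volume limit, and controlling that limit is precisely where the paper invokes Ocneanu compactness (Remark \ref{Fully faithful}), packaged as full faithfulness of the standard actions $L^{a},R^{b}$ together with the Q-system identification $A\cong|Q_{a,b}|$ of Theorem \ref{TheDrinfeldCenter}. Without an argument of this strength your duality claim, and with it your proof of full faithfulness (which cites that duality), is unsupported.

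The second gap is essential surjectivity. Restricting the left action of a DHR bimodule ``to a single site'' lands you in bimodules over the finite-dimensional C*-algebra $\End_{\cD}(X)$, which remembers only the multiplicities $\dim\cD(Y,X)$ and not the tensor structure of $\cD$; no object $Z\in\cD$, let alone a half-braiding, can be reconstructed from that restriction. The paper's route is different and avoids hand-building a half-braiding altogether: given $W\in\textbf{DHR}(A)$ with a projective basis localized in $[c,d]$, restrict $W$ to the complement algebra $A_{(-\infty,c)}\otimes A_{(d,\infty)}$; centrality of the localized basis vectors forces $W$ to decompose as a direct sum of the irreducible bimodules $(L^{c}\boxtimes R^{d})(Y_{i}\boxtimes Y_{j}^{mp})$, and Theorem \ref{TheDrinfeldCenter} characterizes the image of $F_{c,d}$ as exactly the bimodules restricting in this way, so $W$ lies in that image. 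Your alternative of reading a half-braiding off the ``transport cocycle'' and verifying the center axioms by hand is the classical amplimorphism route (as in Szlach\'anyi--Vecserny\'es), and it can in principle be carried out, but it is a substantial argument rather than a remark, and it again requires the strong relative-commutant control that only the compactness/Q-system machinery supplies; as written, the ``delicate part'' you defer is most of the theorem.
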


The equivalence of the DHR category with the Drinfeld center generalizes the main result of \cite{MR1463825} from the context of Hopf algebra symmetries to general fusion categorical symmetry. This family of categorical nets was recently studied from a physical perspective in \cite{https://doi.org/10.48550/arxiv.2112.09091, https://doi.org/10.48550/arxiv.2211.03777}. In these works, bounded spread isomorphisms between nets are constructed from categorical data which implement duality transformations on symmetric Hamiltonians using matrix product operators. A key role is played by their notion of topological sector, which we expect to be closely related to our DHR bimodules.   

Our analysis of $\textbf{DHR}(A)$ makes heavy use of the techniques of subfactor theory \cite{MR1642584,MR1473221,MR1339767,math.QA/9909027} recently translated to the C*-context \cite{MR4419534, https://doi.org/10.48550/arxiv.2207.11854} (see Section \ref{QuantumSymms}). We refer the reader to \cite{MR1463825, 2205.15243, MR4272039} for a related analysis of 1D spin systems from a subfactor point of view. 

One of the most remarkable results in the theory of QCA is that the group QCA/FDQC of an ordinary spin system is abelian, even without adding ancilla \cite{MR4381173}. As a corollary of our results, we will see that in the symmetric case this is not true. First consider an ordinary spin system, where the local Hilbert space is $\mathbbm{C}^{2}$ with the $\mathbbm{Z}/2\mathbbm{Z}$ spin flip symmetry. We partition the system into adjacent pairs and coarse grain so that the local Hilbert space is $K:=\mathbbm{C}^{2}\otimes \mathbbm{C}^{2}$, and the group is $\mathbbm{Z}/2\mathbbm{Z}\times \mathbbm{Z}/2\mathbbm{Z} $ acting on $K$ by a ``double spin flip".
 
 \begin{coralpha}\label{CorD}
    For the double spin flip $\mathbbm{Z}/2\mathbbm{Z}\times \mathbbm{Z}/2\mathbbm{Z}\curvearrowright \mathbbm{C}^{2}\otimes \mathbbm{C}^{2}$ on-site symmetry, the group of symmetric QCA modulo symmetric finite depth circuits contains $S_{3}$ and in particular is non-abelian. 
 \end{coralpha}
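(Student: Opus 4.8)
The plan is to apply Theorem C to the specific symmetry in question and then exploit the structure of the resulting braided autoequivalence group. The first step is to identify the fusion category $\cD$ associated to the double spin flip $\mathbbm{Z}/2\mathbbm{Z}\times \mathbbm{Z}/2\mathbbm{Z}\curvearrowright \mathbbm{C}^2\otimes\mathbbm{C}^2$. Since the symmetry is a finite group action, the relevant symmetric-operator net should be the one whose categorical symmetry is $\cD=\Vec_{\mathbbm{Z}/2\mathbbm{Z}\times\mathbbm{Z}/2\mathbbm{Z}}$, the category of $G$-graded vector spaces for $G=(\mathbbm{Z}/2\mathbbm{Z})^2$ (with trivial associator, since the action is honestly on-site). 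I would verify that the generating object $X$ realizing the net over $\mathbbm{Z}$ is strongly tensor generating and characteristic, so that the full strength of Theorem C applies: the hypotheses require $X$ to tensor-generate $\cD$ and to be fixed up to isomorphism by every monoidal autoequivalence.

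The heart of the argument is then purely categorical. By Theorem C, the image of $\textbf{DHR}$ contains $\textbf{Out}(\cD)=\textbf{Aut}_{\otimes}(\cD)/\textbf{Inn}(\cD)$ whenever $X$ is characteristic, and this sits inside $\textbf{Aut}_{br}(\cZ(\cD))\cong\textbf{BrPic}(\cD)$. So it suffices to compute $\textbf{Out}(\Vec_G)$ for $G=(\mathbbm{Z}/2\mathbbm{Z})^2$ and exhibit an $S_3$ inside it. For $\Vec_G$ with trivial associator, monoidal autoequivalences are governed by group automorphisms of $G$ together with the action on the (trivial) cohomological data; since all inner automorphisms are trivial for abelian $G$, one expects $\textbf{Out}(\Vec_G)$ to contain $\Aut(G)=\Aut((\mathbbm{Z}/2\mathbbm{Z})^2)=GL_2(\mathbbm{F}_2)\cong S_3$. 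I would confirm that these automorphisms of $G$ are realized by genuine \emph{monoidal} autoequivalences of $\Vec_G$ (they are, via relabeling the grading) and that they descend to distinct, hence nontrivial, classes in $\textbf{Out}(\cD)$, giving an embedded copy of $S_3$.

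Assembling these pieces, the composite homomorphism $\textbf{QCA}/\textbf{FDQC}\to\textbf{Aut}_{br}(\cZ(\cD))$ has image containing $\textbf{Out}(\cD)\supseteq S_3$. Since $\textbf{FDQC}$ lies in the kernel by Theorems A and B, the group $\textbf{QCA}/\textbf{FDQC}$ of symmetric QCA modulo symmetric finite depth circuits surjects onto a group containing $S_3$; because $S_3$ is non-abelian, the domain must be non-abelian as well. This is the desired conclusion, and it stands in contrast to the symmetry-free theorem of \cite{MR4381173} that $\textbf{QCA}/\textbf{FDQC}$ is abelian.

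I expect the main obstacle to be the first step rather than the categorical computation: one must carefully match the physical double-spin-flip symmetry to the abstract categorical net of Theorem C, checking that the quasi-local algebra of symmetric operators really is the categorical net $A$ for $\cD=\Vec_{(\mathbbm{Z}/2\mathbbm{Z})^2}$ with a strongly tensor generating \emph{and} characteristic object $X$. The characteristic condition in particular is not automatic and will require identifying $X$ explicitly (plausibly the object $\bigoplus_{g\in G}\delta_g$ or a regular-representation-type object) and arguing that every monoidal autoequivalence, being induced by an automorphism of $G$ that permutes the simple objects $\delta_g$, fixes $X$ up to isomorphism. Once this translation is pinned down, the rest follows by invoking Theorem C and the elementary identification $\Aut((\mathbbm{Z}/2\mathbbm{Z})^2)\cong S_3$.
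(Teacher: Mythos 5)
Your strategy is the paper's own: identify the net of symmetric operators as a fusion spin chain, apply Theorem \ref{ThmC}, and exhibit $S_{3}$ inside the relevant group of braided autoequivalences. Two of your worries are non-issues. Working with $\cD=\Vec_{(\mathbbm{Z}/2\mathbbm{Z})^{2}}$ rather than $\textbf{Rep}((\mathbbm{Z}/2\mathbbm{Z})^{2})$ is harmless: for this abelian group the two are equivalent as unitary fusion categories, and under the equivalence the regular representation $K=\mathbbm{C}^{2}\otimes\mathbbm{C}^{2}$ corresponds to your $\bigoplus_{g}\delta_{g}$. And the ``translation'' step you flag as the main obstacle is exactly Example \ref{genspinflip}: the symmetric operators on $n$ sites are $\End_{G}(K^{\otimes n})=\textbf{Rep}(G)(K^{\otimes n},K^{\otimes n})$, and since $K$ is the regular representation it contains every simple with multiplicity one, hence is self-dual, strongly tensor generating, and characteristic. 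Your identification $\Aut((\mathbbm{Z}/2\mathbbm{Z})^{2})\cong S_{3}$, together with the observation that a nontrivial automorphism $\phi$ of $G$ acts nontrivially on the simples $(a,\gamma)\mapsto(\phi(a),\gamma\circ\phi^{-1})$ of the center, matches the paper's assertion $\textbf{Out}(\textbf{Rep}(G))\supseteq\textbf{Out}(G)\cong S_{3}$.

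The genuine gap is in your final inference. The corollary claims $\textbf{QCA}(A)/\textbf{FDQC}(A)$ \emph{contains} $S_{3}$; your argument shows only that the image of $\textbf{DHR}$ contains $S_{3}$, which yields non-abelianness of the domain but not a subgroup isomorphic to $S_{3}$. Knowing that a homomorphism $H\rightarrow K$ has image containing $S_{3}$ does not let you pull $S_{3}$ back into $H$: a free group surjects onto $S_{3}$ but, being torsion-free, contains no copy of it. What is needed is a homomorphism $S_{3}\rightarrow\textbf{QCA}(A)/\textbf{FDQC}(A)$ that $\textbf{DHR}$ renders injective. The paper supplies exactly this in Example \ref{non-abelian}: in the basis $|{+}\rangle\otimes|{+}\rangle,\ |{+}\rangle\otimes|{-}\rangle,\ |{-}\rangle\otimes|{+}\rangle,\ |{-}\rangle\otimes|{-}\rangle$ of $\mathbbm{C}^{2}\otimes\mathbbm{C}^{2}$, which diagonalizes the $G$-action into its four characters, one takes for each $g\in S_{3}$ the permutation unitary $U_{g}$ fixing the first vector and permuting the other three; then $g\mapsto\prod_{\text{sites}}\Ad(U_{g})$ is an honest group homomorphism into spread-zero QCA of the symmetric net (the $U_{g}$ compose exactly, and each $\Ad(U_{g})$ preserves the symmetric subalgebra), and its composition with $\textbf{DHR}$ is the isomorphism onto $\textbf{Out}(\cD)\cong S_{3}$, so it is injective modulo $\textbf{FDQC}(A)$. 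Alternatively one could check that the paper's $Q_{\alpha}$ construction is multiplicative up to finite depth circuits; either way, some explicit lifting of the $S_{3}$-action into actual QCA is required, and without it you have proved only the ``in particular, non-abelian'' clause of the statement.
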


It is clear that $\textbf{DHR}$ is not a complete invariant for QCA up to finite depth circuits even in 1D. Indeed, for the case of the trivial categorical symmetry, this is an ordinary spin system and our invariant is trivial. However, the group QCA/FDQC is a highly non-trivial subgroup of $\mathbbm{Q}^{\times}$, with isomorphism given by the GNVW index \cite{MR2890305}. However, we believe the action on the DHR category will be the crucial component beyond index theory for any general classification scheme for symmetric QCA.

Finally, while we have motivated our DHR theory with applications to understanding isomorphisms between quasi-local algebras with bounded spread, we anticipate many further applications. For example, for any state $\phi$ on a quasi-local algebra $A$, the superselection category of $\phi$ is a module category over $\textbf{DHR}(A)$, opening the door to an intrinsically categorical (rather than analytic) treatment of superselection theory of states for symmetric nets. In another direction, we believe that discrete nets of C*-algebras over a (sufficiently regular) fixed lattice in $\mathbbm{R}^{n}$ should assemble into a symmetric monoidal $n+2$ category, with the $n=1$ case being a discrete version of the symmetric monoidal 3-category of coordinate-free CFTs \cite{MR3439097, MR3927541, MR3773743}. The DHR category of a net $A$ we consider here should then arise as $\Omega^{n+1}(A)$ in the $n+2$ category.

\bigskip

\textbf{Acknowledgements}. I would like to thank the entire ``QCA group" from the AIM workshop ``Higher categories and topological order" for many stimulating discussions which ultimately sparked the ideas for this paper. Thanks in particular to Jeongwan Haah for teaching me about the general theory and motivation for QCA, and Dom Williamson for suggesting both the problem of studying symmetric QCA and a possible relationship with symmetries of the Drinfeld center. Also thanks to Dave Aasen, Jacob Bridgeman, Peter Huston, Laurens Lootens, Pieter Naaijkens, Dave Penneys, David Reutter and Daniel Wallick for many enlightening discussions and helpful comments on early drafts of this paper. Finally, I want to thank Vaughan Jones for always encouraging me to look for the physics in mathematics. This work was supported by NSF Grant DMS-2100531.

\section{Discrete nets of C*-algebras}

In this section we introduce our general mathematical framework, which is a straightforward ``AQFT-style" extension of the usual axioms for abstract spin systems as found, for example, in \cite{MR1441540}. These mathematical objects are meant to axiomatize the algebras of local operators of any kind of discrete quantum field theory, which simultaneously encodes both local observables and local unitaries. The version of discrete metric space which we found most appropriate for our framework is the following:

\begin{defn} We say a countably infinite metric space $L$ has bounded geometry if for any $R\ge 0$, there exists an $S$ with $|B_{R}(x)|\le S$ for all $x\in L$.
%\begin{enumerate}
%\item
%(uniformly discrete) There exists a $c>0$ with $d(x,y)\ge c$ for all $x,y\in X$.
%\item

%\end{enumerate}
\end{defn}

In the above definition, we use the notation $B_{R}(x)$ to denote the (closed) ball of radius $R$ about the point $x$. Also note that in the above definition, we are assuming our space is countably infinite. Examples include: Cayley graphs of infinite, finitely generated groups (or more generally path metrics on graphs with bounded degree), and discrete subsets of Riemannian manifolds with bounded sectional curvature. Bounded geometry spaces play an important role in the study of large-scale geometry (see \cite{MR2986138}).

We denote the \textit{poset of finite subsets ordered by inclusion} in $L$ by $\mathcal{F}(L)$, and the \textit{poset of balls ordered by inclusion} by $\mathcal{B}(L)$. These will be the fundamental ``small regions" in our discrete QFT.

\begin{defn}\label{Discretenetdef}
A \textit{discrete net} of C*-algebras consists of an infinite bounded geometry metric space $L$, a unital C*-algebra $A$ (called the quasi-local algebra), and a poset homomorphism from $\mathcal{F}(L)$ to the poset of unital C*-subalgebras of $A$ ordered by inclusion, denoted $F\mapsto A_{F}$, subject to the following conditions:
\begin{enumerate}
\item
If $\displaystyle F\bigcap G=\varnothing$, then $[A_{F},A_{G}]=0$.
\item
$\displaystyle \bigcup_{F\in \mathcal{F}(L)} A_{F}$ is dense in $A$.
\end{enumerate}
\end{defn}

To simplify the notation, we will often simply denote a discrete net in terms of its quasi-local algebra $A$, with the additional structure of the poset homomorphism from finite subsets of $L$ to unital subalgebras of $A$ implicit additional structure.

We note that we can naturally extend our poset homomorphism from the poset of finite subsets to $\mathcal{P}(L)$, the collection of \textit{all} subsets of $L$, as follows:

For any $M\subseteq L$, define 

$$A_{M}:=\text{C*}\langle \{x\in A_{F}\ : F\in \mathcal{F}(L)\ \text{and}\ F\subseteq M\} \rangle$$

\noindent In other words, $A_{M}$ is the C* subalgebra of $A$ generated by the algebras $A_{F}$, where $F$ ranges over finite subsets of $M$. The two requirements in the definition for a discrete net now hold when replacing $\mathcal{F}(L)$ with $\mathcal{P}(L)$. 

We can also use other data to generate a net. For example, we may have a poset homomorphism from the poset $\mathcal{B}(L)$ of balls in $L$ to subalgebras of $A$, and we can extend this to be defined on $\mathcal{P}(L)$ (and hence on $\mathcal{F}(L)$) in the same way. In practice, this is usually how we will do things, but there is nothing really special about balls, and other types of standard regions (e.g. rectangles) work equally as well.

\begin{ex}{\textbf{Spin systems}}. The fundamental family of examples are the nets of spin observables. Let $L$ be an arbitrary metric space with bounded geometry. Fix a positive integer $d$ and define $A^{d}$ to be the UHF algebra $\displaystyle M_{d^{\infty}}\cong \otimes_{x\in L} M_{d}(\mathbbm{C})$, where here $M_{d}(\mathbbm{C})$ denotes the algebra of $d\times d$ matrices. For each finite subset in $\mathcal{F}(L)$, we set $A^{d}_{F}:=\otimes_{x\in F} M_{d}(\mathbbm{C})\subseteq A^{d}$. This clearly satisfies the axioms of a discrete net. For an extensive exposition on this class of examples, see \cite{MR1441540}.
\end{ex}

\begin{ex}{\textbf{Symmetric spin systems}}. Suppose we start with a spin system $A$ over $L$ equipped with a global, onsite symmetry $G$. More specifically, suppose we have a homomorphism $G\rightarrow \text{Aut}(M_{d}(\mathbbm{C}))$, where $d$ is the dimension of the on-site Hilbert space. Then by taking the infinite tensor product, this defines a global symmetry on $A^{d}$ which preserves the local subalgebras. We set $A^{G}$ to be the algebra of operators invariant under the $G$ action, and for any ball $F\in \mathcal{B}(L)$, we set $A^{G}_{F}:=(A^{d}_{F})^{G}$. This assembles into a discrete net over $L$ as discussed above, and serves as the motivating example of a discrete net that is of physical interest but not an ordinary spin system. By forcing invariance under $G$, we are implementing \textit{local} superselection sectors. One of the goals of this paper is to give a model independent formulation of these superselection sectors via a DHR category.

There are many generalizations of group symmetry that are considered in the context of spin systems. For example, in 1D we can have fusion categorical (or weak Hopf algebra) symmetries, and taking invariant local operators gives us a new net. We will study such examples in depth in Section \ref{categorical symmetry}. In the world of AQFT, taking the net of operators invariant under a global symmetry is sometimes called \textit{gauging the global symmetry}, or applying the orbifold construction. We encourage the reader to think of an abstract discrete net as a gauging of a spin system by some kind of (possibly generalized) global symmetry, so that the elements in $A_{F}$ are the operators that are invariant under a global symmetry. 

\end{ex}

\begin{ex}\label{BoundaryEx}{\textbf{Boundaries of commuting projector systems }.} Consider a commuting projector Hamiltonian an on the regular lattice $\mathbbm{Z}^{n}$. Consider the half-lattice $\mathbbm{Z}^{n}\le 0$, which has a boundary lattice equivalent to $\mathbbm{Z}^{n-1}$. Define a net of algebras on $\mathbbm{Z}^{n-1}$ consisting of operators localized near the boundary, and cut down by the projection $P$ onto the bulk ground state, similarly to \cite{Ha16}. Modulo some technical details, this assembles into a net of ``boundary algebras" which can have non-trivial local superselection sectors. Applying the DHR construction from Section \ref{DiscreteDHRTheory} to the boundary quasi-local algebra yields a braided tensor category, which should correspond to the topological order of the bulk theory. This is a concrete manifestation of a ``bulk-boundary correspondence" in the setting of topological codes. We will clarify this story in future work.

\end{ex}

\bigskip

For any subset $F\in \mathcal{P}(L)$ and $R\ge 0$, we define its $R$-neighborhood 

$$N_{R}(F):=\{x\in L\ :\ d(x,F)\le R\}.$$

A property that may be satisfied by discrete nets that will sometimes be useful is the following.

\begin{defn}\label{boundedlygenerated} A discrete net is \textit{boundedly generated} if there exists an $T\ge 0$ such every $A_{F}$ is generated by its subalgebras $\{A_{G}\ :\ G\subseteq F\ \text{and}\ \text{diam}(G)< T\}$.
\end{defn}

This condition guarantees that the algebra is generated ``uniformly locally".  This is a weak version of an additivity-type axiom in AQFT. We do not need to assume it for any technical results, but it is a nice property that the nets in our examples will always satisfy.

We now move on to define a technical condition that will be fundamental for building a braiding on discrete DHR categories. Recall that if $B\subseteq A$ is a subset of the algebra $A$, the centralizer of $B$ in $A$ is defined as
$$Z_{A}(B):=\{x\in A\ :\ [x,y]=0\ \text{for all}\ y\in B\}.$$

\begin{defn}\label{algebraicHaag}(c.f. \cite[Definition 2.3]{MR1463825}) A discrete net $A$ satisfies 

\begin{enumerate}
    \item 
\textit{weak algebraic Haag duality} if there exists $R, D\ge 0$ such that for any $F\in \mathcal{B}(L)$ of radius $U\ge R$ about the point $x\in L$, $Z_{A}(A_{F^{c}})\subseteq A_{G}$, where $G\in \mathcal{B}(L)$ is the ball about $x$ of radius $U+D$. Specific choices of $R$ and $D$ are called \textit{duality constants}.
\item
\textit{algebraic Haag duality} if it satisfies weak algebraic Haag duality with $D=0$. In this case $Z_{A}(A_{F^{c}})=A_{F}$.
\end{enumerate}
\end{defn}

\begin{rem} Algebraic Haag duality is a version of the usual Haag duality from AQFT \cite{MR1405610}, with the major difference that we are only asking for the \textit{relative commutant} of the $A_{F^{c}}$ in $A$ to be $A_{F}$, rather than the commutant in a larger $B(H)$ for some global Hilbert space $H$. Weak algebraic Haag duality is inspired by the weak Haag duality of Ogata, used to derive braided categories in the context of topologically ordered spin systems \cite{MR4362722}. All of our examples of interest satisfy algebraic Haag duality, but the weaker version has the added theoretical advantage of being invariant under isomorphisms with bounded spread, which we show below. We thank Pieter Naaijkens, David Penneys, and Daniel Wallick for discussions on the related topic of topologically ordered spin systems, where a similar version of weak Haag duality emerged naturally.
\end{rem}

Conceptually, weak algebraic Haag duality gives us a powerful tool to verify an operator is localized in a finite region by checking that it commutes with all operators localized in the complement. 

\subsection{Bounded spread isomorphisms and QCA}

\begin{defn}\label{boundespread} For two discrete nets $A$ and $B$ over the metric space $L$, a $*$-isomorphism $\alpha:A\rightarrow B$ of quasi-local algebras has \textit{bounded spread} if there exists an $R\ge 0$ such that for any $F\in \mathcal{F}(L)$, $\alpha(A_{F})\subseteq B_{N_{R}(F)}$ and $\alpha^{-1}(B_{F})\subseteq A_{N_{R}(F)} $
\end{defn}

\begin{defn} For a fixed infinite metric space $L$ with bounded geometry, $\textbf{Net}_{L}$ is the groupoid  whose 

\begin{enumerate}
    \item 
Objects are discrete nets over $L$.
\item
Morphisms $\textbf{Net}_{L}(A,B)$ consist of $*$-isomorphisms $\alpha: A\rightarrow B$ such that $\alpha$ has bounded spread.
\end{enumerate}
\end{defn}

In many examples, $\alpha(A_{F})\subseteq B_{N_{R}(F)}$ for all $F$ automatically implies $\alpha^{-1}(B_{F})\subseteq A_{N_{R}(F)} $ (for example, in ordinary spin systems \cite{ARRIGHI2011372}).

\begin{prop}
The property of weak algebraic Haag duality is invariant under bounded spread isomorphism.
\end{prop}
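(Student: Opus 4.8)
\section*{Proof proposal}

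The plan is to transport the duality condition from $A$ to $B$ through $\alpha$, using two elementary observations. First, since $\alpha$ is a $*$-isomorphism it preserves commutation relations, so for every subset $S\subseteq A$ one has $\alpha(Z_A(S))=Z_B(\alpha(S))$ (and symmetrically for $\alpha^{-1}$): if $[a,s]=0$ for all $s\in S$ then $[\alpha(a),\alpha(s)]=\alpha([a,s])=0$, and injectivity gives the converse. Second, the bounded-spread estimates, stated in Definition \ref{boundespread} only for finite $F$, automatically extend to arbitrary subsets $M\subseteq L$: each finite $G\subseteq M$ satisfies $\alpha(A_G)\subseteq B_{N_R(G)}\subseteq B_{N_R(M)}$, and since $B_{N_R(M)}$ is a C*-subalgebra and $\alpha$ a continuous $*$-homomorphism, $\alpha(A_M)\subseteq B_{N_R(M)}$; the same holds for $\alpha^{-1}$.

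Next I record the two geometric estimates I will use, both immediate from the triangle inequality: for the ball $B_\rho(x)$ one has $N_R(B_\rho(x))\subseteq B_{\rho+R}(x)$, while $N_R(B_\rho(x)^c)\subseteq B_{\rho-R}(x)^c$ (a point within $R$ of the complement of the $\rho$-ball lies at distance $>\rho-R$ from $x$).

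Now fix a ball $F$ of radius $U$ about $x$ and aim to control $Z_B(B_{F^c})$. The subtle point --- and the main obstacle --- is the direction of the centralizer inclusions: bounded spread only yields $\alpha^{-1}(B_{F^c})\subseteq A_{N_R(F^c)}$, and centralizing a \emph{smaller} set produces a \emph{larger} commutant, which is useless. The fix is to manufacture a reverse inclusion by pushing a complement-of-ball algebra forward. Let $F_+$ be the ball of radius $U+R$ about $x$. The second geometric estimate gives $N_R((F_+)^c)\subseteq F^c$, hence $\alpha(A_{(F_+)^c})\subseteq B_{N_R((F_+)^c)}\subseteq B_{F^c}$, so that $A_{(F_+)^c}\subseteq \alpha^{-1}(B_{F^c})$. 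Centralizing now runs in the useful direction: $Z_A(\alpha^{-1}(B_{F^c}))\subseteq Z_A(A_{(F_+)^c})$.

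Finally I apply weak algebraic Haag duality for $A$ with constants $R_A,D_A$: provided $U+R\ge R_A$, the ball $F_+$ of radius $U+R$ satisfies $Z_A(A_{(F_+)^c})\subseteq A_{G_+}$, where $G_+$ is the ball of radius $U+R+D_A$ about $x$. Chaining the inclusions, $\alpha^{-1}(Z_B(B_{F^c}))=Z_A(\alpha^{-1}(B_{F^c}))\subseteq A_{G_+}$; applying $\alpha$ and the first geometric estimate gives $Z_B(B_{F^c})\subseteq \alpha(A_{G_+})\subseteq B_{N_R(G_+)}\subseteq B_G$, with $G$ the ball of radius $U+2R+D_A$ about $x$. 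Thus $B$ satisfies weak algebraic Haag duality with duality constants $R_B:=R_A$ (which forces $U+R\ge R_A$) and $D_B:=2R+D_A$. The construction is symmetric in $A$ and $B$, so the property is invariant under bounded spread isomorphism. I expect the only real care to be the bookkeeping of radii and inclusion directions; there is no analytic difficulty, since everything reduces to the centralizer identity for $\alpha$ and the two metric estimates above.
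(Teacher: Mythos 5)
Your proof is correct and is essentially the paper's own argument: both enlarge the ball $F$ to a ball $F_+$ of radius $U+\text{spread}$ so that $A_{(F_+)^c}\subseteq\alpha^{-1}(B_{F^c})$, then use the centralizer identity $\alpha^{-1}(Z_B(B_{F^c}))=Z_A(\alpha^{-1}(B_{F^c}))$, apply weak Haag duality for $A$, and push forward, arriving at the same duality constants $(R_A,\,D_A+2\cdot\text{spread})$. Your write-up just makes explicit two steps the paper leaves implicit (the extension of bounded spread to infinite regions and the two triangle-inequality estimates).
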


\begin{proof}
    Suppose $A$ satisfies weak algebraic Haag duality, with constants $R$ and $D$, and suppose $\alpha: A\rightarrow B$ is a *-isomorphism with spread at most $T$. We claim $B$ satisfies weak algebraic Haag duality with constants $R, D+2T$. Let $F$ be a ball of radius $U\ge R$ about some point $x$. Then set $F^{\prime}$ to be the corresponding ball of radius $U+T$ and $F^{\prime \prime}$ the ball of radius $U+T+D$. Then $A_{(F^{\prime})^{c}}\subseteq \alpha^{-1}(B_{F^{c}})$, so

    \begin{align*}
    \alpha^{-1}(Z_{B}(B_{F^{c}}))&=Z_{A}(\alpha^{-1}(B_{F^{c}}))\\
    &\subseteq Z_{A}(A_{(F^{\prime})^{c}})\\
    &\subseteq A_{F^{\prime \prime}}
    \end{align*}

    Therefore $$Z_{B}(B_{F^{c}})\subseteq \alpha(A_{F^{\prime \prime}})\subseteq B_{G},$$

    \noindent where $G$ is the ball of radius $U+2T+D$ about $x$, proving the claim.

\end{proof}

\begin{defn} The group of \textit{quantum cellular automata} on a net $A$ is defined to be $\textbf{Net}_{L}(A,A)$. We denote this group $\textbf{QCA}(A)$.
\end{defn}

Quantum cellular automata (QCA) of spin systems have recently been extensively investigated in the physics literature. We will say some words about QCA from a physical viewpoint in the next section. The easiest examples of quantum cellular automata are \textit{finite depth quantum circuit}. Let $A$ be a discrete net of C*-algebras. A depth one quantum circuit in $A$ is a QCA constructed from the following data:

\begin{itemize}

\item
$\{F_{i}\}_{i\in I}$ is a partition of $L$ by finite sets with uniformly bounded diameters. 
\item $\{u_{i}\in A_{F_i}\}$ is a choice of unitaries. 

\end{itemize}

\noindent From this data, we define an automorphism of the quasi-local algebra A. Identify $I$ with the natural numbers $\mathbbm{N}$ (which is possible since we assumed that $L$ is countably infinite), and define $G_{n}=\cup^{n}_{i=1}F_{n}$. Set $v_{n}:=\prod^{n}_{i=1}u_{i}\in A_{G_{n}}$. Then consider $\alpha_{n}:=\text{Ad}(v_{n})\in \text{Aut}(A)$.
For any finite subset $F$, let $n_{0}$ be the smallest natural such that $F\subseteq G_{n_{0}}$. Then, for every $n\ge n_{0}$, if $x\in A_{F}$ we have $\alpha_{n}(x)=\alpha_{n_{0}}(x)$. We define $\alpha_{v}(x):=\lim_{n} \alpha_{n}(x)$, which stabilizes pointwise, and thus gives a $*$-automorphism on the union of local algebras. Since there is a unique C*-norm on any increasing union of finite dimensional C*-algebras, this extends to a $*$-automorphism of the quasi-local algebra. We call automorphisms constructed in this way \textit{depth one} quantum circuits.

In practice, we can simply write
$$\alpha(x):=\left(\ \prod_{i\in I} v_{n}\right) x \left(\ \prod_{i\in I} v^{*}_{n}\right),$$

\noindent which makes sense for any local operator $x\in A_{F}$, since all but finitely many of the $v_{n}$ will commute with $x$. Also note that the spread of a depth one circuit is bounded by the largest diameter of a set in the underlying partition.

\begin{defn}
An automorphism $\alpha\in \textbf{QCA}(A)$ is called a \textit{finite depth quantum circuit} if $\alpha=\alpha_{1}\circ \alpha_{2}\dots \circ \alpha_{n}$ where each $\alpha_{i}$ is a depth one circuit. We denote the set of finite depth circuits $\textbf{FDQC}(A)$.
\end{defn}

\begin{prop}
If $\alpha\in \textbf{Net}_{L}(A,B)$ and $\beta\in \textbf{FDQC}(A)$, then $\alpha\circ \beta \circ \alpha^{-1}\in \textbf{FDQC}(B)$.
\end{prop}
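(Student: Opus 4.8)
The plan is to reduce to the case of a single depth one circuit and then decompose the conjugate into finitely many depth one circuits using a coloring argument. Since any $\beta \in \textbf{FDQC}(A)$ is a composition $\beta = \gamma_1 \circ \cdots \circ \gamma_n$ of depth one circuits, and since $\alpha \circ \beta \circ \alpha^{-1} = (\alpha\gamma_1\alpha^{-1}) \circ \cdots \circ (\alpha\gamma_n\alpha^{-1})$, and finite depth circuits are by definition closed under composition, it suffices to show $\alpha \circ \gamma \circ \alpha^{-1} \in \textbf{FDQC}(B)$ for a single depth one circuit $\gamma$.

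So I would take $\gamma$ to be the depth one circuit associated to a partition $\{F_i\}_{i\in I}$ of $L$ into finite sets of uniformly bounded diameter (say $\operatorname{diam}(F_i) < T$) together with unitaries $u_i \in A_{F_i}$, and let $R$ be a spread bound for $\alpha$. Working on a local element $y\in B_F$ (so that $\alpha^{-1}(y)$ is again local and only finitely many $u_i$ are involved), the formula $\gamma(x) = (\prod_i u_i)\, x\, (\prod_i u_i^{*})$ together with the fact that $\alpha$ is a $*$-homomorphism gives $\alpha \circ \gamma \circ \alpha^{-1}(y) = (\prod_i w_i)\, y\, (\prod_i w_i^{*})$, where $w_i := \alpha(u_i)$. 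Because $\alpha(A_{F_i}) \subseteq B_{N_R(F_i)}$, each $w_i$ is a unitary localized in $N_R(F_i)$, a set of diameter at most $T + 2R$; moreover the $w_i$ pairwise commute since the $u_i$ do. The only obstruction to this being a depth one circuit of $B$ is that the regions $N_R(F_i)$ need not be pairwise disjoint.

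To remove this obstruction I would introduce the intersection graph $\Gamma$ on vertex set $I$, joining $i$ and $i'$ whenever $N_R(F_i)\cap N_R(F_{i'})\neq\varnothing$, and show it has uniformly bounded degree. Indeed, fix $p\in F_i$; if $N_R(F_{i'})$ meets $N_R(F_i)$ then a short triangle-inequality estimate forces $F_{i'}$ to contain a point within distance $2R+T$ of $p$, and since the $F_{i'}$ are disjoint, bounded geometry bounds the number of such $i'$ by $S := \sup_{x}|B_{2R+T}(x)|$. A greedy coloring (valid for any countable graph of bounded degree) then properly colors $\Gamma$ with at most $S+1$ colors, yielding a partition $I = I_1 \sqcup \cdots \sqcup I_m$ in which, for each $j$, the neighborhoods $\{N_R(F_i)\}_{i\in I_j}$ are pairwise disjoint.

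Each color class then defines a genuine depth one circuit $\beta_j$ of $B$: take the partition of $L$ given by the disjoint sets $\{N_R(F_i)\}_{i\in I_j}$ together with singletons for the remaining points, assigning the unitary $w_i$ to $N_R(F_i)$ and the identity elsewhere. Since all the $w_i$ commute, the composition $\beta_1 \circ \cdots \circ \beta_m$ conjugates by $\prod_i w_i$ and hence agrees with $\alpha \circ \gamma \circ \alpha^{-1}$ on local operators, exhibiting the latter as a finite depth circuit of $B$. The main obstacle is the bounded-degree estimate for $\Gamma$, which is exactly where bounded geometry and the uniform diameter bound are used; once it is in place, the greedy coloring and the reassembly into depth one circuits are routine.
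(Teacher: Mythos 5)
Your proof is correct and follows essentially the same route as the paper's: conjugate the unitaries $u_i$ to $w_i=\alpha(u_i)$, control overlaps of the enlarged supports $N_R(F_i)$ via a bounded-degree intersection graph (using bounded geometry and pairwise disjointness of the $F_i$), properly color it, and reassemble each color class, padded by singletons, into a depth one circuit of $B$. The only deviation is minor: you color the countable graph directly by greedy enumeration, whereas the paper colors finite subgraphs and invokes the De Bruijn--Erd\H{o}s theorem; your version is slightly more elementary and equally valid.
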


\begin{proof}
Let $\beta\in \textbf{FDQC}(A)$ be depth one, and $\alpha\in \textbf{Net}_{L}(A,B)$ with spread at most $R$. Let $F=\{F_{i}\in \mathcal{F}(L)\}_{i\in I}$ be a collection of finite sets corresponding to $\beta$ and $T\ge 0$ such that $\text{diam}(F_{i})\le T$. Let $u_{i}\in A_{F_i}$ the corresponding unitaries implementing $\beta$.

Consider the graph $G$ with vertex set $I$, defined by declaring $i$ adjacent to $j$ if $N_{3R}(F_{i})\cap F_{j}\ne \varnothing$. This relation is symmetric. Clearly the degree of each vertex is finite. We claim that in addition, the degree is uniformly bounded. Indeed, since each $N_{3R}(F_{i})$ is contained in a ball of radius $T+3R$ of any point in $F_{i}$, by the bounded geometry assumption there exists an $S$ depending only on $T+3R$ such that $|N_{3R}(F_{i})|\le S$ for all i. Therefore, the number of distinct $j$ such that $N_{3R}(F_{i})\cap F_{j}\ne \varnothing$ is bounded by $S$. In particular, the degree of $G$ is uniformly bounded by $S$.

We claim thsat there is a vertex coloring with a finite number of colors. Indeed, for every finite subgraph $G^{\prime}\subseteq G$, the degree is also bounded by $S$, so utilizing the greedy coloring algorithm, we can color $G^{\prime}$ with $S+1$ colors. By the De Bruijn–Erdős theorem \cite{MR0046630}, this implies G can be colored S+1 colors. Choose such a coloring. 

For each color $a\in \{ 1, 2, \dots, S+1\}$, define $I_{a}$ as the set of vertices colored $a$. Consider the family $G^{a}=\{ G_{i}:=N_{R}(F_{i})\in F\ : i\in I_{a}\}$. We can extend this trivially to a partition by adding singletons.
Note that since adjacent vertices have different colors, it is clearly the case that $G_{i}\cap G_{j}=\varnothing$ for any $G_{i},G_{j}\in G^{a}$. Hence the elements of each family are pairwise disjoint. For $i\in I_{a}$, define $w_{i}:=\alpha(u_{i})\in B_{G_i}$ (or $w_i=1$ for the added singletons) and let $\beta_{a}$ denote the corresponding depth one circuit. Note that since $u_{i}$ commutes with $u_{j}$, then $\alpha(u_{i})$ commutes with $\alpha(u_{j})$. Then we see for any local operator $x\in B_{F}$

\begin{align*}
\alpha \circ \beta\circ \alpha^{-1}(x)&=\alpha\left(\left(\prod u_{i}\right)\alpha^{-1}(x)\left(\prod u^{*}_{i}\right)\right)\\
&=\left(\prod \alpha(u_{i})\right) x \left(\prod \alpha(u_{i})^{*}\right)\\
&=\left(\prod_{i_1\in I_{1}} w_{i_1}\right)...\left(\prod_{i_{S+1}\in I_{S+1}} w_{i_{S+1}}\right) x \left(\prod_{i_{S+1}\in I_{S+1}} w^{*}_{i_{S+1}}\right)... \left(\prod_{i_{1}\in I_{1}} w^{*}_{i_{1}}\right)\\
&=\beta_{1}\circ \dots \circ \beta_{S+1}(x)
\end{align*}

\end{proof}

The above proposition shows that $\text{FDQC}$ behaves like a normal subgroup of the groupoid $\textbf{Net}_{L}$ (and, in particular, \textit{ is} a normal subgroup of the automorphism group of any object). We define the equivalence relation $\sim_{FDQC}$ on $\textbf{Net}_{L}(A,B)$ by $\alpha\sim_{FDQC} \beta$ if $\beta^{-1}\alpha\in \textbf{FDQC}(A)$, or equivalently, if $\alpha \beta^{-1}\in \textbf{FDQC}(B)$. By the previous lemma, composition gives a well-defined associative operation on equivalence classes. This leads to the following definition. 

\begin{defn} $\textbf{Net}_{L}/\textbf{FDQC}$ is the groupoid whose

\begin{itemize}
    \item 
    Objects are nets of C*-algebras over $L$.
    \item
    Morphisms are $\textbf{Net}_{L}(A,B)/\sim_{FDQC}$. 
    \item
    Composition is induced from $\textbf{Net}_{L}(A,B)$.
\end{itemize}

\end{defn}

If we have a groupoid homomorphism from $\textbf{Net}_{L}$ which contains $\textbf{FDQC}$ in the kernel of all the automorphism groups of all the objects, then this descends to a well-defined groupoid morphism from $\textbf{Net}_{L}/\textbf{FDQC}$. This restricts to a homomorphism from $\textbf{QCA}(A)/\textbf{FDQC}(A)$ for any $A$.

\begin{rem}
It would be interesting to define a version of $\textbf{FDQC}(A)$ where the elements are the actual sequence of unitaries rather than the resulting automorphisms. One could imagine defining a unitary \textit{2-group} which could be characterized by an anomaly $[\omega]\in H^{3}(\textbf{QCA}(A)/\textbf{FDQC}(A),\ \textit{U}(1))$ in the sense of \cite{2011.13898}.
\end{rem}

\subsection{Physical interpretation of QCA}\label{physical interpretation}

In this subsection, we will discuss two physical interpretations of the group of QCA and the group QCA/FDQC. These correspond to (at least) two natural ways to view QCA of ordinary spin systems from a physical perspective.

The first arises from viewing the structure of a discrete net as a host for the moduli space of local (symmetric) Hamiltonians. In particular, any local Hamiltonian is built from terms living in finite regions with globally bounded diameter. Thus an isomorphism between two nets $\alpha:A\rightarrow B$ which has bounded spread maps local Hamiltonians to local Hamiltonians. In particular, the group $\textbf{QCA}(A)$ can be viewed as the group of symmetries of the moduli space of local Hamiltonians, which have the potential to implement ``dualities" between a priori very different looking Hamiltonians \cite{Aasen_2016, https://doi.org/10.48550/arxiv.2008.08598,https://doi.org/10.48550/arxiv.2112.09091, https://doi.org/10.48550/arxiv.2211.03777, eck2023xxz}. This point of view is particularly interesting in the context of symmetric nets. In this case, QCA are symmetries of the space of local \textit{symmetric} Hamiltonians, and may implement equivalence between symmetric Hamiltonians that have no non-symmetric counterpart, i.e. the symmetric QCA cannot be extended to an ordinary QCA without sacrificing invertibility. We can use QCA to define a natural equivalence relation on local Hamiltonians. We declare two Hamiltonians equivalent if they are in the same orbit under the action of QCA. Since states in the thermodynamic limit of a spin system are just states on the quasi-local algebra, QCA can also be used to define equivalence relations \textit{directly on states themselves} without reference to a parent Hamiltonian.

A second perspective is to view a QCA as a discrete-time unitary dynamics \cite{https://doi.org/10.48550/arxiv.2205.09141}. This extends the standard viewpoint on classical cellular automata as discrete-time updates of configurations to the quantum setting. This class of evolutions retains physical properties such as local causality and quantum reversability while dispensing with the differential equations and local Hamiltonian generator which usually give rise to these properties. Interest in this perspective emerged alongside the rise of quantum computing, where discrete time evolutions are very natural. We note that QCA themselves are generally not realizable as time evolutions generated by local Hamiltonians unless they are circuits, but can nevertheless approximate arbitrary local Hamiltonian evolutions in a certain sense \cite{https://doi.org/10.48550/arxiv.2211.02086}. This partly justifies the study of ``strictly local" QCA which we consider here, as opposed to more general versions of QCA which have tails that arise from time evolutions of local Hamiltonians.

The role of finite depth quantum circuits in phases of quantum matter was first proposed in \cite{PhysRevB.82.155138}. The authors argue that a natural way to consider two ground states of gapped Hamiltonians equivalent is if they are related by a finite depth circuit, which gives a definition that is independent of a choice of parent Hamiltonian. This equivalence relation gives a possible operational definition for ``topological phase" of ground states of gapped Hamiltonians. This can naturally be extended to an equivalence relation on Hamiltonians themselves, where we declare two local Hamiltonians equivalent if one is conjugate to the other by a finite depth circuits, which we call circuit equivalence. Then it is the group QCA/FDQC which acts by symmetries on the moduli space of circuit equivalence classes of Hamiltonians. From the perspective of discrete-time unitary dynamics, it makes sense to consider QCA/FDQC as the group of \textit{topological phases} of discrete-time unitary dynamics.

From both of these view points, it makes sense to say two QCA are topologically equivalent if they differ by a circuit. This leads to the following question.

\begin{problem}\label{thequestion} For a given discrete net $A$, find topological invariants for $\textbf{QCA}(A)$ and apply them to compute $\textbf{QCA}(A)/\textbf{FDQC}(A)$.
\end{problem}

A complete solution to this problem is given for ordinary spin systems on a 1D lattice \cite{MR2890305}. This index has been extended to higher-dimensional manifolds, with a complete classification given in 2D \cite{MR4103966}. However, it is believed this index is insufficient in higher dimensions. Indeed, in three dimensions, there is intriguing evidence that this group should be related to the Witt group of modular tensor categories, or equivalently, invertible fully extended 3+1 D TQFTs \cite{MR4544190, MR4309221, https://doi.org/10.48550/arxiv.2205.09141, PRXQuantum.3.030326}. In general, it is known that the group $\textbf{QCA}/\textbf{FDQC}$ for ordinary spin systems is abelian  \cite{MR4381173} \footnote{We caution the reader that many of the results beyond 1D use more general notions of equivalence of QCA, in particular stable equivalence (adding ancilla locally) and blending. It is not entirely clear what the right version of these notions is in the symmetric setting, since abstract nets of C*-algebras are less flexible than ordinary spin systems.}.

One of the main results of our paper is that even in 1D, in the symmetric case the group $\textbf{QCA}(A)/\textbf{FDQC}(A)$ of an arbitrary net is generally not abelian. Hence we will need invariants beyond a numerical index theory to classify these groups, which is one motivation for the development of DHR theory for symmetric spin systems.

\section{Discrete DHR Theory}\label{DiscreteDHRTheory}

In this section, we develop a version of the DHR theory of superselection theory suitable for abstract spin systems. We note that the usual DHR theory is based on a distinguished Hilbert space representation (the`` vacuum" or ``ground state" representation) and proceeds to study superselection sectors as other representations which ``look like" the vacuum representation outside any small region. This approach has been useful in the study of topologically ordered spin systems \cite{MR2804555, MR3410706, CNN20, MR4362722, wallick2022algebraic}. However, this approach is heavily state dependent and is not well suited for the study of QCA, which depend only on the quasi-local algebra.

In this section, we introduce a version of DHR theory in which the role of states is replaced by ucp (unital completely positive) maps on the quasi-local algebra, and the role of Hilbert space representations is replaced by bimodules. Physically, we can think of this as a superselection theory of quantum channels, rather than a superselection theory of states. The DHR category we define is then the category of superselection sectors of the identity channel. To motivate this conception, we first heuristically review the connection between states, representations and superselection theory.

In the study of quantum spin systems, we are interested in states in the thermodynamic limit (see \cite{MR1441540, MR3617688}), which are modeled by \textit{states} on the quasi-local algebra $A$. Recall a state on the C*-algebra $A$ is simply a positive linear functional $\phi:A\rightarrow \mathbbm{C}$ such that $\phi(1)=1$. In practice, these often arise as ground states or equilibrium states of a local Hamiltonian, but from the quantum information perspective it is desirable to study these states independently of their origin.

Given a state on $A$, we can build a Hilbert space of local perturbations, sometimes called a ``sector". This is achieved by applying the Gelfand-Naimark-Segal (GNS) construction. We start by representing the state $\phi$ formally as the vector state $\Omega_{\phi}$. We introduce other vectors to this Hilbert space by formally adding \textit{local perturbations} of $\phi$, yielding the vectors space $\{a\Omega_{\phi}\ : a\in A\}$. Intuitively, these are the states accessible from $\phi$ by the application of local operators. The inner product of any two of these is defined to be

$$\langle a\Omega_{\phi} |\ b\Omega_{\phi}\rangle :=\phi(a^{*}b)$$

We quotient out by the null vectors and complete this to obtain a Hilbert space denoted $L^{2}(A,\phi)$. This gives a concrete Hilbert space realization of all local perturbations of $\phi$, which is acted upon by $A$. 

We are thus led to extend the \textit{convex set} of states to the \textit{W*-category} $\textbf{Rep}(A)$, whose objects are Hilbert space representations of $A$, and morphisms are bounded linear operators intertwining the actions. The advantage of this approach is that it allows us to consider all local perturbations of a state globally, as an object in the category $\textbf{Rep}(A)$. Thus macroscopic properties of states, which should be invariant under local perturbations, should be expressible as properties of the corresponding GNS representation, opening the door to applying category theory in the study of quantum many-body systems.

Now we recall the theory of superselection sectors from the perspective of algebraic quantum field theory (see \cite{MR1405610}). Given a state $\phi$ on the quasi-local C*-algebra $A$, a representation $H$ is \textit{localizable with respect to $\phi$} if for any (sufficiently large) ball $F$,

$$H|_{A_{F^{c}}}\approx L^{2}(A,\phi)|_{A_{F^{c}}}$$

Here, $\approx$ denotes quasi-equivalence of representations of a C*-algebra \cite{MR887100}, but morally it is useful to think of ``equivalence"\footnote{indeed, in many cases equivalence is automatically implied}. This condition is often called the \textit{superselection criterion}. We also note that we are using ``balls" here primarily for expository purposes, but this is not essential. For example, in applications to topologically ordered spin systems in 2+1 dimensions, infinite cones are the appropriate regions to use.

We interpret a localizable representation as a sector (or collection of states related by local perturbations) that ``looks like" the vacuum sector outside any small region. In other words, the measurable difference from the ground state representation can be localized in any small (but non-empty) region. By zooming out and squinting our eyes, it is reasonable to consider these objects as \textit{topological point defects} of the state $\phi$. ``Topological'' because the region $F$ of localization can be chosen arbitrarily, and ``point'' because balls of finite radius look like points from infinity. 

We define the category of superselection sectors $\textbf{Rep}_{\phi}(A)$ as the W * category of representations satisfying the superselection criteria. In most applications of superselection theory, one proceeds to make some technical assumptions which allow for the construction of a braided monoidal structure on this category. This plays a crucial role in many aspects of chiral conformal field theory and topologically ordered spin systems. Building these structures is highly nontrivial, and it is the study of the braided monoidal structure that we refer to as DHR theory, after the seminal work of Doplicher, Haag, and Roberts \cite{MR258394, MR297259, MR334742}.

In most manifestations of this story, there is a basic state as a fundamental part of the data: in AQFT it is usually part of the definitions (the vacuum state), and in topologically ordered spin systems it arises as the ground state of a Hamiltonian. Superselection theory is then considered relative to that state.

The idea will be to extend the above discussion by replacing states with quantum channels. Suppose now that we have \textit{two} discrete nets of algebras, $A$ and $B$, over the same metric space $L$. Conceptually we make the following substitutions:

\begin{itemize}
    \item 
    States on A $\mapsto $ ucp maps (i.e. quantum channels) from $A$ to $B$.
    \item
    Representations of $A$  $\mapsto$ $A$-$B$ bimodules (right correspondences).
\end{itemize}

This analogy is well known in the theory of operator algebras. Indeed, this is more than an analogy: if we substitute $B=\mathbbm{C}$, we recover states and Hilbert space representations on the nose. Recall that a ucp map $\phi:A\rightarrow A$ is a completely positive map with $\phi(1)=1$. Thus the state-Hilbert space picture is a special case of the ucp map-bimodule correspondence. In quantum information theory, ucp are called ``quantum channels", being the most general type of operation on a quantum system mapping states to states (by composing). 

Like states, ucp maps have an analogue of the GNS construction obtained by taking local perturbations, but instead of producing a Hilbert space representation of A, they result in a right $A$-$B$ correspondence (which should be viewed as a C*-algebra version of ``bimodule", for a detailed definition see Section \ref{bimodules section}). This works as follows:

Let $\phi:A\rightarrow B$ be a ucp map. We build a vector space, starting with the channel $\phi$, represented by the vector $\Omega_{\phi}$ as in the GNS construction. Then the vector space will consist of local perturbations of this channel. We can perturb by operators from $A$ on the left and operators from $B$ on the right, so that we obtain vectors of the form $\{a\Omega_{\phi} b\ :\ a\in A,\ b\in B\}$.

Then we consider a (right) B-valued inner product 

$$\langle a\Omega_{\phi} b\ |\ c\Omega_{\phi} d\rangle:=b^{*}\phi(a^{*}c)d$$

Modding out by the kernel and completing, we obtain a right $A$-$B$ correspondence, which we call $L^{2}(A-B, \phi)$, directly generalizing the GNS construction. This strongly suggests that the analogue of a Hilbert space representation for quantum channels should be a (right) $A$-$B$ correspondences.

From this perspective, it seems plausible that we should be able to define a \textit{superselection category of a quantum channel} rather than of a single state. Here we have the added advantage that, unlike Hilbert space representations, correspondences naturally have a monoidal product (or more precisely, C*-algebras and right correspondences form a 2-category). Furthermore, on any given quasi-local algebra, there is a canonical quantum channel: the identity map. This should then give a canonical, state-independent superselection category for any net of C*-algebras, which naturally has the structure of a C*-tensor category.

We proceed to give a formal definition of this superselection category for a net $A$. This will consist of ``localizable" bimodules, and will naturally assemble into a C*-tensor category. Since this is fairly close in spirit to the DHR perspective of endomorphisms on the quasi-local algebra, we will call this category $\textbf{DHR}(A)$. First, we give some background definitions on bimodules of C*-algebras in the next section.

%%%%%%%%%%%%%%%%%%%%%%%%%%%%%%%%%%%%

\subsection{Bimodules of a C*-algebra}\label{bimodules section}. Let $A$ be a (unital) C*-algebra. A (right) Hilbert A-module consists of a vector space $X$, which is a right $A$ module (algebraically), together with an sesquilinear map $\langle \cdot |\ \cdot \rangle: X\times X\rightarrow A$ (conjugate linear in the first variable, linear in the second) satisfying:

\begin{enumerate}
    \item 
    $\langle x\ |\ ya\rangle=\langle x\ |\ y\rangle a$.
    \item
    $\langle x\ |\ x\rangle\ge 0$, with equality if and only if $x=0$.
    \item
    $\langle x\ |\ y\rangle^{*}=\langle y\ |\ x\rangle$.
    \item
    The norm $||x||:=||\langle x\ |\ x\rangle||^{\frac{1}{2}}$ is complete.
    \end{enumerate}

Given two Hilbert A-modules $X$ and $Y$, an adjointable operator from $X$ to $Y$ is an A-module intertwiner $T:X\rightarrow Y$ such that there exists an A-module intertwiner $T^{*}:Y\rightarrow X$ with $\langle T(x)\ | y\rangle_{Y}=\langle x\ |\ T^{*}(y)\rangle_{X}$. The space of adjointable operators is denoted $\mathcal{L}(X,Y)$. $\mathcal{L}(X,X)$ is a unital C*-algebra.

If $A$ is a C*-algebra, an $A$-$A$ \textit{bimodule} is a Hilbert $A$-module $X$, together with a unital *-homomorphism $A\rightarrow \mathcal{L}(X,X)$. We express this homomorphism as a left action, either with standard left multiplication notation, e.g. $ax$, or with triangles, e.g. $a\triangleright x$. In the literature, what we are calling bimodules are usually called (right) correspondences, and we will use the terms interchangeably.

An intertwiner between bimodules $X$ and $Y$ is an element $f\in \mathcal{L}(X,Y)$ such that $f(ax)=af(x)$ (note that $f\in \mathcal{L}(X,Y)$ already implies $f$ intertwines the right $A$ action). The collection of all bimodules and intertwiners assembles into a C*-category which we call $\textbf{Bim}(A)$.

In fact $\textbf{Bim}(A)$ has the structure of a \textit{C*-tensor category}. Recall that C*-tensor categories are C*-categories (see, e.g. \cite{MR808930}) with a linear monoidal structure such that the $*$ operation is compatible with $\otimes$, and the unitors and associators are unitary isomorphisms. For further details, see \cite{MR4419534, MR1444286} and references therein.

To define the tensor product on $\textbf{Bim}(A)$, we consider the A-valued sesquilinear form $$(X\otimes Y)\times (X\otimes Y)\rightarrow A$$ defined by

$$\langle x_1\otimes y_1 |\ x_2 \otimes y_2\rangle_{X\boxtimes_{A} Y}:=\langle y_{1}\ |\ \langle x_1\ |\ x_2\rangle_{X}\ y_{2}\rangle_{Y}$$

\noindent Taking the quotient by the kernel of this form and then completing gives a new $A$-$A$ bimodule denoted by $X\boxtimes_{A} Y$ or simply $X\boxtimes Y$ if the $A$ subscript is clear from context. We will typically denote the image of the simple tensor $x\otimes y$ inside $X\boxtimes_{A} Y$ by $x\boxtimes y$. Then the left and right actions of $A$ are simply given on simple tensors by 

$$a(x\boxtimes y)b:=ax\boxtimes yb$$

Similarly, if $f: X_{1}\rightarrow X_{2}$, $g:Y_{1}\rightarrow Y_{2}$ are bimodule intertwiners, then 

$$f\boxtimes g: X_{1}\boxtimes Y_{1}\rightarrow X_{2}\boxtimes Y_{2}$$

$$(f\boxtimes g)(x\boxtimes y):=f(x)\boxtimes g(y)$$

\noindent gives a well-defined bimodule intertwiner. The obvious ``move the parentheses map" from $(X\boxtimes Y)\boxtimes Z\cong X\boxtimes (Y\boxtimes Z)$ is a natural bimodule intertwiner and satisfies the pentagon identity. Thus $\textbf{Bim}(A)$ is canonically equipped with the structure of a C*-tensor category. For further on the categorical structure see \cite[Section 2]{MR4419534}.

An important ingredient for us are projective bases for correspondences. In the context of subfactors, these were first introduced by Pimsner and Popa \cite{MR860811}, and for inclusions of C*-algebras and bimodules by Watatani \cite{MR996807} and Kajiwara and Watatani \cite{MR1624182}, to study the Jones index \cite{MR0696688}. From an algebraic perspective, these are straightforward analytic extensions of projective bases for modules of associative algebras, hence for this reason we will call them projective bases here.

\begin{defn} Let $X$ be a right Hilbert A-module. A projective basis is a finite subset $\{b_{i}\}^{n}_{i=1}\subseteq X$ such that for all $x\in X$, 
$$\sum_{i} b_{i}\langle b_{i}\ |\ x\rangle=x.$$

\noindent A bimodule is called \textit{right finite} if there exists a projective basis.
\end{defn}

It is easy to see that a right Hilbert module admits a projective basis if and only if it is finitely generated and projective as an $A$ module (hence the terminology). A bimodule is right finite if and only if it has an \textit{amplimorphism } model. These are built from (not necessarily unital) homomorphisms $\pi:A\rightarrow M_{n}(A)$, with the bimodule $X$ given by $\pi(1) A^{n}$, with left action of $\pi$ and right action diagonal. This correspondence is described, for example, in the $\rm{II}_{1}$ factor context in \cite{MR1049618} or more categorically in \cite[Remark 2.12]{2021arXiv211106378C}. Amplimorphisms are closer to the picture of endomorphisms typically used in AQFT.

The collection of right finite bimodules is a full C*-tensor subcategory of $\textbf{Bim}(A)$, since if $\{b_{i}\}$ and $\{c_{j}\}$ are projective bases for $X,Y$ respectively, then $\{b_{i}\boxtimes c_{j}\}$ is a projective basis for $X\boxtimes Y$. If $X$ has a projective basis$\{b_{i}\}$, then $X$ is the A-linear span of the $\{b_{i}\}$. In particular, if $Y$ is another right Hilbert A-module and $f:X\rightarrow Y$ is a right A-module homomorphism, then $f$ is uniquely determined by its action on basis elements.

\bigskip

%%%%%%%%%%%%%%%%%%%%%%%%%%%%%%%%%%%%%%%

\subsection{DHR functor}

Let $A$ be a discrete net over the countable bounded geometry metric space $L$. Recall that for any finite region $F$, $A_{F^{c}}$ is the C*-subalgebra of $A$ generate by all $A_{G}$, where $G\in \mathcal{F}(L)$ and $G\cap F=\varnothing$.

\begin{defn}\label{localizable bimodule}
Let $F\in \mathcal{F}(L)$. We say that a right finite correspondence $X$ is \textit{localizable in $F$} if there exists a projective basis $\{b_{i}\}^{n}_{i=1}$ such that for any $a\in A_{F^{c}}$, for each $i$

$$ab_{i}=b_{i}a.$$
\end{defn}

\begin{defn}
Suppose $A$ is a discrete net. Then we say that a right finite correspondence $X$ is \textit{localizable} if there exists an $R\ge 0$ such that $X$ is localizable in all balls of radius at least $R$. We denote the full C*-tensor subcategory of localizable right finite correspondences in $\textbf{Bim}(A)$ by $\textbf{DHR}(A)$
\end{defn}

For a localizble bimodule, we say that the $R$ in the definition is a \textit{localization radius} of $X$. Since we can replace $R$ with any larger $R$, we can assume, without loss of generality, that the localization radius is a positive integer. 

\medskip

Let $\textbf{C*-TensCat}$ be the groupoid defined as follows:

\begin{itemize}
    \item 
    Objects are C*-tensor categories.
    \item
    Morphisms between C*-tensor categories are unitary equivalences between C*-tensor categories up to unitary monoidal equivalence.
    \item
    Composition is induced from composition of equivalences.
\end{itemize}

\begin{thm}\label{TheoremA}(Theorem \ref{ThmA})
    The assignment $A\mapsto \textbf{DHR}(A)$ extends to a  functor $$\textbf{DHR}: \textbf{Net}_{L}\rightarrow \textbf{C*-TensCat}.$$ The corresponding homomorphism $$\textbf{DHR}:\textbf{QCA}(A)\rightarrow \textbf{Aut}_{\otimes}(\textbf{DHR}(A))$$ contains $\textbf{FDQC}(A)$ in its kernel.
\end{thm}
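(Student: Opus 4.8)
The plan is to define $\textbf{DHR}$ on morphisms by transport of structure and then verify the functor axioms, the preservation of localization, and the vanishing on circuits in turn. Given a bounded spread isomorphism $\alpha\colon A\to B$, I would define a functor $\alpha_*\colon \textbf{Bim}(A)\to\textbf{Bim}(B)$ as follows: $\alpha_* X$ has the same underlying space as $X$, with right action $x\cdot b:=x\cdot\alpha^{-1}(b)$, left action $b\triangleright x:=\alpha^{-1}(b)\triangleright x$, and $B$-valued inner product $\langle x\mid y\rangle_{B}:=\alpha(\langle x\mid y\rangle_{A})$. One checks directly that this is again a correspondence and that $\alpha_*$ is a unitary $*$-functor. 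It is monoidal with tensorator the identity on simple tensors $x\boxtimes y\mapsto x\boxtimes y$, the point being that $\langle x_1\mid x_2\rangle_{\alpha_* X}\triangleright y_2=\langle x_1\mid x_2\rangle_{A}\triangleright y_2$ after unwinding the twisted left action, so the two balanced $B$-valued inner products agree after applying $\alpha$. Since $(\alpha^{-1})_*$ is an inverse, $\alpha_*$ is a unitary tensor equivalence; transport of structure is moreover strictly functorial, so after passing to the groupoid $\textbf{C*-TensCat}$ of categories up to unitary monoidal natural isomorphism we obtain a genuine functor.

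The content that makes this land in $\textbf{DHR}$ is that $\alpha_*$ preserves localizability, which is exactly where bounded spread enters. If $X$ has localization radius $R$ and $\alpha$ has spread $T$, I claim $\alpha_* X$ has localization radius $R+T$. Any projective basis $\{b_i\}$ for $X$ is again a projective basis for $\alpha_* X$. Given a ball $F'$ of radius $V\ge R+T$ about $x$, let $F$ be the concentric ball of radius $V-T\ge R$, in which $X$ is localized. For finite $G$ with $G\cap F'=\varnothing$ we have $d(G,F)>T$, hence $\alpha^{-1}(B_G)\subseteq A_{N_T(G)}\subseteq A_{F^c}$, so every $b\in B_{(F')^c}$ satisfies $b\triangleright_{\alpha_*}b_i=\alpha^{-1}(b)\triangleright b_i=b_i\cdot\alpha^{-1}(b)=b_i\cdot_{\alpha_*}b$. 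Thus $\alpha_* X$ is localized in $F'$; this mirrors the proof that weak algebraic Haag duality is a bounded spread invariant. Consequently $\alpha_*$ restricts to an equivalence $\textbf{DHR}(A)\to\textbf{DHR}(B)$ and, for $B=A$, a homomorphism $\textbf{QCA}(A)\to\textbf{Aut}_{\otimes}(\textbf{DHR}(A))$.

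For the kernel statement, by functoriality it suffices to treat a depth one circuit $\alpha=\text{Ad}(v)$ with $v=\prod_i u_i$, $u_i\in A_{F_i}$. I would pass to the amplimorphism model, under which a localizable right finite correspondence is a localized endomorphism $\rho$ with $\rho(a)=a$ for $a\in A_{F^c}$, and where $\alpha_* X_\rho\cong X_{\alpha\rho\alpha^{-1}}$. The key is to produce a unitary $t_\rho\in A$ implementing $\alpha\rho\alpha^{-1}=\text{Ad}(t_\rho)\circ\rho$. Setting $v_G:=\prod_{i:F_i\cap G\ne\varnothing}u_i$ for a ball $G\supseteq N_T(F)$, one computes $\alpha\rho\alpha^{-1}(a)=v_G\,\rho(v_G^* a v_G)\,v_G^*=\bigl(v_G\rho(v_G)^*\bigr)\rho(a)\bigl(v_G\rho(v_G)^*\bigr)^*$ on local $a$, using that $\rho$ is a homomorphism. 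Thus $t_\rho:=v_G\rho(v_G)^*$ works; it is unitary since $\rho$ is unital, and it is independent of $G$: enlarging $G$ multiplies $v_G$ on the right by a tail $w\in A_{G^c}$ which commutes with $v_G$, is fixed by $\rho$, and commutes with $\rho(v_G)\in Z_A(A_{G^c})$, so the correction cancels. The induced isomorphisms $\alpha_* X_\rho\to X_\rho$ are natural, and monoidal because $t_{\rho\sigma}=v_G(\rho\sigma)(v_G)^*=v_G\rho(v_G)^*\,\rho\bigl(v_G\sigma(v_G)^*\bigr)=t_\rho\,\rho(t_\sigma)$, which is precisely the tensor product of intertwiners in the endomorphism category. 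Hence $\alpha_*\cong\id$ as a unitary tensor functor and $[\alpha]$ is trivial.

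The main obstacle is precisely this last step. Because the circuit unitary $v$ does not exist in $A$, one cannot simply conjugate the whole bimodule by $v$; indeed the naive formula $x\mapsto v_F^*\triangleright x\cdot v_F$ fails to intertwine the left and right actions for operators far from $F$, since there $\alpha^{-1}$ acts by a genuinely nontrivial local conjugation. The substance of the argument is that the honest local unitary $t_\rho=v_G\rho(v_G)^*$ stabilizes as $G$ exhausts $L$ and \emph{simultaneously} satisfies the naturality squares and the monoidal cocycle identity, and that this stabilization uses only that the tail of the circuit is fixed by, and commutes past, the localized endomorphism — so that no Haag duality hypothesis is needed for this theorem. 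Verifying the passage between the projective basis and amplimorphism pictures, and checking that $t_\rho$ assembles into a well-defined monoidal natural isomorphism on all of $\textbf{DHR}(A)$, is the remaining bookkeeping.
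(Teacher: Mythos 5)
Your construction of the functor itself --- transport of structure $\alpha_*$ with twisted actions and inner product $\alpha(\langle\,\cdot\,|\,\cdot\,\rangle)$, the identity tensorator, strict compositionality, and the spread argument showing that a basis localizing $X$ in the concentric radius-$(V-T)$ ball localizes $\alpha_*X$ in the radius-$V$ ball --- is essentially identical to the paper's proof. For the kernel statement you take a genuinely different route: the paper never leaves the bimodule picture, defining $\eta_X\colon\alpha_*(X)\to X$ directly on the dense subspace $X_F=\{x\in X: ax=xa\ \text{for all}\ a\in A_{F^c}\}$ by conjugation with the finite truncation $\prod_{i\in J_F}u_i$ of the circuit, and then checking isometry, bimodularity, naturality and monoidality by hand; you instead convert the circuit into a DHR-style charge transporter $t_\rho=v_G\rho(v_G)^*$ in an endomorphism picture. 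The mechanism you isolate (the tail of the circuit is fixed by, and commutes past, anything localized, so the conjugation stabilizes) is exactly the paper's mechanism, and your stabilization and monoidality computations are correct in the setting where they make sense.

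The genuine gap is the reduction on which those computations rest: a localizable right finite correspondence over $A$ is \emph{not} in general of the form $X_\rho$ for a unital endomorphism $\rho\colon A\to A$ with $\rho(a)=a$ on $A_{F^c}$. Right finite bimodules correspond to amplimorphisms $\pi\colon A\to M_n(A)$, with $X\cong\pi(1)A^n$; the endomorphism case is precisely the case where $X$ is free of rank one as a right module, and this fails for the objects the theorem is about (for example any $F_{a,b}(Z,\sigma)$ with $d_Z>1$ in the fusion spin chains of Section \ref{categorical symmetry} is not free of rank one). So "$\rho(a)=a$ for $a\in A_{F^c}$", "$t_\rho=v_G\rho(v_G)^*$", and "unitary since $\rho$ is unital" do not typecheck in general. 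The repair is to rerun your argument for amplimorphisms: localization in $F$ reads $\pi(a)=\pi(1)(a\otimes 1_n)=(a\otimes 1_n)\pi(1)$ for $a\in A_{F^c}$; the transported bimodule is $X_{\alpha^{(n)}\circ\pi\circ\alpha^{-1}}$ with $\alpha^{(n)}$ acting entrywise; and the cocycle becomes $t_\pi:=(v_G\otimes 1_n)\pi(v_G)^*$, a partial isometry with $t_\pi^*t_\pi=\pi(1)$ and $t_\pi t_\pi^*=\alpha^{(n)}(\pi(1))$, which is the correct notion of unitary intertwiner here; your independence-of-$G$ and monoidality computations then carry over. Note also that several steps you assert (naturality, and even the basic identity $\alpha(\rho(b))=v_G\rho(b)v_G^*$ hidden in your computation of $\alpha\rho\alpha^{-1}$) require an explicit commutant argument: elements such as intertwiners $s\colon\rho\to\sigma$, images $\rho(b)$ of local elements, and matrix entries $\langle b_i|v_Gb_j\rangle$ are not known to be \emph{local} without Haag duality, but they do lie in $Z_A(A_{K^c})$ for a finite $K$, and for such elements the infinite conjugation defining $\alpha$ collapses to conjugation by the finite product $\prod_{i:F_i\cap K\neq\varnothing}u_i$. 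That collapse --- not Haag duality --- is what makes the whole argument work, and it is the same computation the paper performs for vectors in $X_F$; once it is written out, your cocycle argument becomes a legitimate alternative to the paper's direct construction.
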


\begin{proof}
First note that for any isomorphism of C*-algebras $\alpha:A\rightarrow B$, we have a canonical equivalence $\alpha_{*}:\textbf{Bim}(A)\rightarrow \textbf{Bim}(B)$. Here the $A$-$A$ bimodule $X$ is sent to $\alpha_{*}(X)\in \textbf{Bim}(B)$, where $\alpha_{*}(X)=X$ as a Banach space, with $B$-$B$ bimodule structure defined for $a,b\in B$, $x,y\in X$ by

$$a\triangleright_{\alpha} x\triangleleft_{\alpha} b:=\alpha^{-1}(a) x \alpha^{-1}(b)$$

$$\langle x | y\rangle_{\alpha_{*}(X)}:=\alpha(\langle x | y\rangle_{X})$$

\noindent This extends to a $*$-functor by defining, for any $f:X\rightarrow Y$,

$$\alpha_*(f) : \alpha_*(X) \ni x \mapsto f(x) \in \alpha_*(Y)$$

\noindent There is an obvious unitary monoidal structure on $\alpha_{*}$, with tensorator

$$\mu^{\alpha}_{X,Y}: \alpha^{*}(X)\boxtimes_{B} \alpha_{*}(Y)\cong \alpha_{*}(X\boxtimes_{A} Y)$$

$$\mu^{\alpha}_{X,Y}(x\boxtimes_{B} y):= x\boxtimes_{A} y.$$

\noindent Also, it's clear from the definition that $\alpha_{*}\circ \beta_{*}\cong(\alpha\circ \beta)_{*}.$

Now the claim is that if $A$ and $B$ are nets over $L$, $X$ is a localizable bimodule over $A$ with localization radius $R$, and $\alpha\in \textbf{Net}_{L}(A,B)$ such that $\alpha^{-1}$ has spread at most $T$, then $\alpha_{*}(X)$ is localizable in $B$, with localizable radius $R+T$. To see this, suppose that $F$ is a ball of radius greater than $R+T$, and let $\{b_{i}\}$ be a projective basis in $X$ localizing in the corresponding ball of radius $R$. Then since the spread is at most $T$, clearly $\{b_{i}\}$ is a projective basis for $\alpha_{*}(X)$ which is localizing in $F$, proving the claim.

We can define 

$$\textbf{DHR}(\alpha):=\alpha_{*}|_{\textbf{DHR}(A)}.$$

\bigskip

Now, to show the second part of the theorem, it suffices to show that for any depth one circuit $\alpha\in \textbf{Net}_{L}(A,A)=\textbf{QCA}(A)$, $\textbf{DHR}(\alpha)$ is monoidally naturally isomorphic to the identity. Suppose $\textbf{F}=\{F_{i}\}_{i\in J}$ is a partition of $L$ with uniformly bounded diameter $T$, and $u_{i}\in A_{F_{i}}$ a choice of unitaries with $$\displaystyle \alpha(a):=\left(\prod_{i\in J} u_{i} \right) a \left(\prod_{i\in J} u^{*}_{i}\right).$$

For any finite subset $F\subseteq L$, define $X_{F}:=\{x \in X :\ ax=xa\ \text{for all}\ a\in A_{F^{c}}\}$. Note that since $X$ is localizable, the union $\bigcup_{F\ \text{is a ball}} X_{F}\subseteq X$ is dense (in fact, we can take the union over \textit{any} increasing sequence of balls). For any $F\subseteq L$, we set $J_{F}=\{i\in J\ : F_{i}\cap F\ne \varnothing\}$. 

We define the map

$$\eta_{X}: \alpha_{*}(X)\rightarrow X $$

by setting, for any $x\in X_{F}$,

\begin{align*}
\eta_{X}(x)&=\left(\prod_{i} u^{*}_i \right)x  \left( \prod_{i} u_{i}\right) \\
&=\left(\prod_{i\in J_{F}} u^{*}_{i}\right) x \left(\prod_{i\in J_{F}} u_{i}\right)\\
&=\left(\prod_{i\in J_{G}} u^{*}_{i}\right) x \left(\prod_{i\in J_{G}} u_{i}\right)\ \text{for any $F\subseteq G$ finite}.
\end{align*}

\noindent This is clearly a norm isometry on this dense subspace, and thus extends to a uniquely defined linear map. To check that it is a bimodule intertwiner, let $a\in A_{I}$, $b\in A_{K}$, and $x\in X_{M}$. Set $N=I\cup K\cup M$

\begin{align*}&\eta_{X}(a\triangleright_{\alpha} x \triangleleft_{\alpha} b)\\
&=\left(\prod_{i\in J_{N}} u^{*}_{i}\right) \left(\left(\prod_{i\in J_{I}} u_{i}\right) a\left(\prod_{i\in J_{I}} u^{*}_{i}\right) x \left(\prod_{i\in J_{K}} u_{i}\right) b \left(\prod_{i\in J_{K}} u^{*}_{i}\right) \right) \left(\prod_{i\in J_{N}} u_{i}\right)\\
&=\left(\prod_{i\in J_{N}} u^{*}_{i}\right) \left(\left(\prod_{i\in J_{N}} u_{i}\right) a\  \text{Ad}\left(\prod_{i\in J_{N}} u^{*}_{i}\right) (x)\  b \left(\prod_{i\in J_{N}} u^{*}_{i}\right) \right) \left(\prod_{i\in J_{N}} u_{i}\right)\\
&= a\left(\prod_{i\in J_{N}} u^{*}_{i}\right)  x  \left(\prod_{i\in J_{N}} u_{i}\right) b  \\
&=a\ \eta_{X}(x)\ b
\end{align*}

\bigskip

\noindent Note that the adjoint of $\eta_{X}$ is 

$$\eta^{*}_{X}(x)=\left(\prod_{i} u_i \right)x  \left( \prod_{i} u^{*}_{i}\right)=\eta^{-1}_{X}(x)$$

\noindent To see that the family $\eta=\{\eta_{X}\}_{X\in \textbf{DHR}(A)}$ is a monoidal natural transformation, we first check naturality. For any bimodule intertwiner $f:X\rightarrow Y$, note that for any finite set $F$, if $x\in X_{F}$, then $f(x)\in Y_{F}$. Then we compute

\begin{align*}
f(\eta_{X}(x))&= f\left(\left(\prod_{i\in J_{F}} u^{*}_{i}\right) x \left(\prod_{i\in J_{F}} u_{i}\right)\right)\\
&= \left(\prod_{i\in J_{F}} u^{*}_{i}\right) f(x) \left(\prod_{i\in J_{F}} u_{i}\right)\\
&=\eta_{Y}(f(x))\\
&=\eta_{Y}(\alpha_{*}(f)(x))\\
\end{align*}

\noindent In the above computation, we have used the fact that the finite product $\left(\prod_{i\in J_{F}} u_{i}\right)\in A$ so is intertwined by $f$. Finally, for monoidality of $\eta$, let $x\in X_{F}$ and $y\in Y_{G}$. Choose some $H\in \mathcal{B}(L)$ with $F\cup G \subseteq H$. Then

\begin{align*}
\mu^{\alpha}_{X,Y}(\eta_{X}\boxtimes \eta_{Y})(x\boxtimes y)&=\eta_{X}(x)\boxtimes \eta_{Y}(y)\\
&=\left(\prod_{i\in J_{H}} u^{*}_{i}\right) x \left(\prod_{i\in J_{H}} u_{i}\right)\boxtimes \left(\prod_{i\in J_{H}} u^{*}_{i}\right) y \left(\prod_{i\in J_{H}} u_{i}\right)\\
&=\left(\prod_{i\in J_{H}} u^{*}_{i}\right) \left( x \boxtimes  y \right)\left(\prod_{i\in J_{H}} u_{i}\right)\\
&=\eta_{X\boxtimes Y}(\mu^{\alpha}_{X,Y}(x\boxtimes y)).
\end{align*}
\noindent Here we have again used the fact that the finite product $\left(\prod_{i\in J_{H}} u_{i}\right)\in A$ and the tensor product is $A$ middle-linear.

\end{proof}

%\begin{cor} For any discrete metric space $L$ and any net of C*-algebras $A$, we obtain a homomorphism $\textbf{QCA}(A)/\textbf{FDQC}(A)\rightarrow \textbf{Aut}_{\otimes}(\textbf{DHR}(A))$.
%\end{cor}

\subsection{Constructing the braiding}

We now follow the usual DHR recipe to build a braiding on the HDR tensor category. However, without additional assumptions we run into problems: braidings may not exist, or may not be unique. In order to avoid these technicalities, for this paper we restrict our attention to lattices in $\mathbbm{R}^{n}$.

\begin{defn} An $n$-dimensional \textit{lattice} is a uniformly discrete subset $L\subseteq \mathbbm{R}^{n}$ such that there is a $C$ with $d(x,L)< C$ for all $x\in \mathbbm{R}^{n}$. We call $C$ a lattice constant. 
\end{defn}

\noindent

For the rest of the section, we will let $L$ be a lattice in $\mathbbm{R}^{n}$ with lattice constant $C$, and $A$ a discrete net on $L$ satisfying weak algebraic Haag duality with duality constants $R,D$ (Definition \ref{algebraicHaag}).
Set $T_{0}:=2C+2D+2R$. We proceed to construct a braiding on $\textbf{DHR}(A)$. 

First we note an immediate consequence of the definition of weak algebraic Haag duality.

\begin{cor}\label{localization of inner product}
Suppose a net satisfies weak algebraic Haag duality with duality constants $R,D$. If $F\in \mathcal{B}(L)$ is a ball of radius $U\ge R$ about a point $x\in L$, $\{b_{i}\}^{n}_{i=1}$ is any F-localizing basis of a correspondence $X$, and $G$ is any ball of radius at least $U+D$ about $x$, then for any $a\in A_{F}$, $\langle b_{i}\ |\ ab_{j}\rangle\in A_{G}$.
\end{cor}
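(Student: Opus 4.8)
The plan is to turn the claimed containment into a commutation statement and then feed it into weak algebraic Haag duality. Since $G$ has radius at least $U+D$ about the \emph{same} center $x$, it contains the ball $G_{0}$ of radius exactly $U+D$ about $x$, so $A_{G_{0}}\subseteq A_{G}$. Weak algebraic Haag duality, applied to the ball $F$ of radius $U\ge R$, gives $Z_{A}(A_{F^{c}})\subseteq A_{G_{0}}\subseteq A_{G}$. Hence it suffices to prove that $\langle b_{i}\mid ab_{j}\rangle$ lies in $Z_{A}(A_{F^{c}})$, i.e. that it commutes with every $c\in A_{F^{c}}$.

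To verify this I would assemble three ingredients. First, locality of the net: since $a\in A_{F}$, $c\in A_{F^{c}}$ and $F\cap F^{c}=\varnothing$, axiom (1) gives $ca=ac$. Second, the $F$-localizing property of the basis: $cb_{k}=b_{k}c$ for each $k$, and because $A_{F^{c}}$ is $*$-closed the same holds for $c^{*}$, i.e. $c^{*}b_{k}=b_{k}c^{*}$. Third, the standard Hilbert module identities: conjugate-linearity together with $\langle x\mid y\rangle^{*}=\langle y\mid x\rangle$ and $\langle x\mid ya\rangle=\langle x\mid y\rangle a$ yield $c\langle x\mid y\rangle=\langle xc^{*}\mid y\rangle$, while adjointability of the left action gives $\langle c^{*}x\mid y\rangle=\langle x\mid cy\rangle$. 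With these in hand the computation simply threads $c$ through the inner product:
\begin{align*}
c\,\langle b_{i}\mid ab_{j}\rangle
&= \langle b_{i}c^{*}\mid ab_{j}\rangle
= \langle c^{*}b_{i}\mid ab_{j}\rangle
= \langle b_{i}\mid c\,ab_{j}\rangle \\
&= \langle b_{i}\mid a\,cb_{j}\rangle
= \langle b_{i}\mid ab_{j}\,c\rangle
= \langle b_{i}\mid ab_{j}\rangle\, c .
\end{align*}
The first equality moves $c$ into the first slot as $c^{*}$, the second uses localization to pass $c^{*}$ across $b_{i}$, the third uses adjointability of the left action, the fourth uses $ca=ac$, the fifth uses $cb_{j}=b_{j}c$, and the last uses right $A$-linearity. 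This establishes $\langle b_{i}\mid ab_{j}\rangle\in Z_{A}(A_{F^{c}})$, and the reduction above then places it in $A_{G}$.

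I do not anticipate a genuine obstacle: the argument is a bookkeeping exercise in the module identities. The only points that require care are the conjugate-linear identity $c\langle x\mid y\rangle=\langle xc^{*}\mid y\rangle$ (obtained by taking adjoints twice), and the fact that $A_{F^{c}}$ is closed under $*$, which is what makes $c^{*}b_{i}=b_{i}c^{*}$ available. Everything else is an immediate consequence of locality of the net combined with the localizing property of the chosen basis, and the containment then follows purely formally from weak algebraic Haag duality.
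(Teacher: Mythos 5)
Your proposal is correct and follows essentially the same route as the paper: both reduce the claim to showing $\langle b_{i}\mid ab_{j}\rangle\in Z_{A}(A_{F^{c}})$ and then invoke weak algebraic Haag duality, and both verify the centralizer condition by threading an element of $A_{F^{c}}$ through the inner product using right $A$-linearity, the localizing property of the basis (for $c$ and $c^{*}$), adjointability of the left action, and commutation of $A_{F}$ with $A_{F^{c}}$. The paper's computation is simply yours read in the reverse direction, starting from $\langle b_{i}\mid ab_{j}\rangle b$ and ending at $b\langle b_{i}\mid ab_{j}\rangle$.
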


\begin{proof}
It suffices to show $\langle b_{i}\ |\ ab_{j}\rangle\in Z_{A}(A_{F^{c}})$. But for any $b\in A_{F^{c}}$, we have $ab=ba$ so 

\begin{align*}
\langle b_{i}\ |\ ab_{j}\rangle b&=\langle b_{i}\ |\ ab_{j}b\rangle \\
&=\langle b_{i}\ |\ bab_{j}\rangle\\
&=\langle b^{*}b_{i}\ |\ ab_{j}\rangle\\
&=\langle b_{i}b^{*}\ |\ ab_{j}\rangle\\
&=b \langle b_{i}\ |\ ab_{j}\rangle
\end{align*}
\end{proof}

\begin{lem}\label{defineintertwine} Let $X,Y\in \textbf{DHR}(A)$, and let $(x,y)\in L\times L$ with $d(x,y)> T_{0}+R_{X}+R_{Y}$. Let $F=B_{R_{X}}(x)$ and $G=B_{R_Y}(y)$, and $\{b_{i}\}$ and $\{c_{j}\}$ be $F$ and $G$ localizing bases for $X$ and $Y$ respectively. Then the assignment $\sum b_{i}\boxtimes_{A}c_{j} a_{ij}\mapsto \sum c_{j}\boxtimes_{A} b_{i} a_{ij}$ gives a well-defined unitary (hence adjointable) operator of right Hilbert modules

$$u^{F,G}_{X,Y}: X\boxtimes_{A} Y\rightarrow Y\boxtimes_{A} X.$$

\noindent independent of the choice of $F$ and $G$ localizing bases.

\end{lem}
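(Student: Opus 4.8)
The plan is to define the operator by a manifestly well-posed formula, reduce the whole statement to a single inner-product identity, and then obtain unitarity and basis-independence as formal consequences. Since $\{b_i\}$ and $\{c_j\}$ are projective bases, $\{b_i\boxtimes c_j\}$ and $\{c_j\boxtimes b_i\}$ are projective bases of $X\boxtimes_A Y$ and $Y\boxtimes_A X$ respectively (as recorded in Section \ref{bimodules section}). I would therefore \emph{define}
\[
u^{F,G}_{X,Y}(\xi):=\sum_{i,j}(c_j\boxtimes b_i)\,\langle b_i\boxtimes c_j\,|\,\xi\rangle ,
\]
which is automatically a bounded right $A$-module map; all the content is to show it acts as the claimed swap, is unitary, and is basis-independent. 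Enlarging $R_X,R_Y$ to be at least $R$ if necessary (a localizing basis for a ball also localizes for any larger ball), write $\tilde F=B_{R_X+D}(x)$ and $\tilde G=B_{R_Y+D}(y)$.

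The crux is the identity
\begin{equation}
\langle b_i\boxtimes c_j\,|\,b_k\boxtimes c_l\rangle=\langle c_j\boxtimes b_i\,|\,c_l\boxtimes b_k\rangle. \tag{$\ast$}
\end{equation}
To prove it, apply Corollary \ref{localization of inner product} with $a=1$ to obtain $\langle b_i\,|\,b_k\rangle\in A_{\tilde F}$ and $\langle c_j\,|\,c_l\rangle\in A_{\tilde G}$. The hypothesis $d(x,y)>T_0+R_X+R_Y$ with $T_0=2C+2D+2R$ forces $\tilde F,\tilde G$ (and also the mixed pairs $\tilde F,G$ and $F,\tilde G$) to be pairwise disjoint. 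Expanding the left side by the formula for the relative-tensor inner product, using that $\langle b_i\,|\,b_k\rangle\in A_{\tilde F}\subseteq A_{G^c}$ commutes with the left action on the $G$-localized $c_l$, and that the inner product is right $A$-linear in the second slot, the left side of $(\ast)$ equals $\langle c_j\,|\,c_l\rangle\langle b_i\,|\,b_k\rangle$; by the symmetric argument the right side equals $\langle b_i\,|\,b_k\rangle\langle c_j\,|\,c_l\rangle$, and these agree because $A_{\tilde F}$ and $A_{\tilde G}$ commute by the locality axiom (Definition \ref{Discretenetdef}(1)) for disjoint regions. Then, by the projective-basis identity for $Y\boxtimes X$, $(\ast)$ immediately gives $u^{F,G}_{X,Y}(b_k\boxtimes c_l)=c_l\boxtimes b_k$, so $u^{F,G}_{X,Y}$ is the asserted swap; and since every vector is a \emph{finite} combination $\sum(b_i\boxtimes c_j)a_{ij}$, right-linearity propagates $(\ast)$ to $\langle u^{F,G}_{X,Y}\xi\,|\,u^{F,G}_{X,Y}\xi'\rangle=\langle\xi\,|\,\xi'\rangle$ for all $\xi,\xi'$, so $u^{F,G}_{X,Y}$ is inner-product preserving. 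The symmetric construction furnishes a two-sided inverse $Y\boxtimes X\to X\boxtimes Y$, so this surjective isometry is unitary (hence adjointable).

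For independence of the localizing bases, suppose $\{b'_k\},\{c'_l\}$ is a second pair of $F$- and $G$-localizing bases. Running the centralizer computation of Corollary \ref{localization of inner product} shows the transition coefficients $\lambda_{ik}:=\langle b_i\,|\,b'_k\rangle\in A_{\tilde F}$ and $\mu_{jl}:=\langle c_j\,|\,c'_l\rangle\in A_{\tilde G}$. Writing $b'_k=\sum_i b_i\lambda_{ik}$ and $c'_l=\sum_j c_j\mu_{jl}$ and using middle-linearity together with the disjointness-driven commutations ($\lambda_{ik}$ past the $c_j$, then $\mu_{jl}$ past the $b_i$, and finally $\lambda_{ik}$ past $\mu_{jl}$ since $\tilde F\cap\tilde G=\varnothing$), both $u^{F,G}_{X,Y}(b'_k\boxtimes c'_l)$ and $c'_l\boxtimes b'_k$ reduce to $\sum_{ij}(c_j\boxtimes b_i)\lambda_{ik}\mu_{jl}$. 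Hence the map built from the primed bases agrees with $u^{F,G}_{X,Y}$ on a generating set, and the two coincide.

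I expect the identity $(\ast)$ to be the main obstacle: the real work is the bookkeeping of which localization balls must be disjoint and the repeated invocation of weak algebraic Haag duality to pin the elementary inner products into $A_{\tilde F}$ and $A_{\tilde G}$, since this is exactly where the distance hypothesis and the constant $T_0$ are consumed. Once $(\ast)$ is in hand, well-definedness, the swap property, unitarity, and basis-independence are all formal.
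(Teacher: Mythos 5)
Your proof is correct, and its mathematical core is identical to the paper's: everything is driven by the single identity $\langle b_i\boxtimes c_j\,|\,b_k\boxtimes c_l\rangle=\langle b_i\,|\,b_k\rangle\langle c_j\,|\,c_l\rangle=\langle c_j\boxtimes b_i\,|\,c_l\boxtimes b_k\rangle$, obtained from Corollary \ref{localization of inner product} (with $a=1$) plus disjointness of the $D$-enlarged balls, and your basis-independence computation via the transition coefficients $\lambda_{ik},\mu_{jl}$ is exactly the paper's. Where you differ is in the logical packaging: the paper defines the swap directly on the spanning set $\{\sum b_i\boxtimes c_j a_{ij}\}$ and must then argue that inner-product preservation forces the map to kill the kernel of the relative tensor product (this is its proof of well-definedness), after which it computes the adjoint explicitly to get unitarity; you instead \emph{define} $u^{F,G}_{X,Y}$ as a finite sum of ket-bra operators, so well-definedness, boundedness, and adjointability are automatic, and the content shifts entirely to showing the operator acts as the claimed swap, with unitarity recovered from the symmetric inverse. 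The ket-bra formulation buys a cleaner treatment of the one genuinely delicate point (descent to the completed tensor product), at the cost of an extra step (deriving the swap action from $(\ast)$ and the projective-basis identity). A further small improvement on the paper: you note explicitly that $R_X,R_Y$ may need to be enlarged to at least $R$ before invoking Corollary \ref{localization of inner product}, and the slack $2R+2C$ built into $T_0$ indeed absorbs this enlargement; the paper only imposes the convention $R_X\ge R$ in a remark placed \emph{after} the lemma, even though its own proof already uses it.
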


\begin{proof}
First we check

\begin{align*}
\langle u^{F,G}_{X,Y}(\sum b_i\boxtimes c_{j} a_{ij})\ |\ u^{F,G}_{X,Y}(\sum b_{i}\boxtimes c_{j} a_{ij})\rangle &=\langle \sum c_{j}\boxtimes_{A} b_{i}\ a_{ij} | \sum c_{j}\boxtimes_{A} b_{i} a_{ij}\rangle\\
&=\sum a^{*}_{ij}\langle b_i\ | \langle c_{j} | c_{k}\rangle b_l \rangle a_{lk}\\
&=\sum a^{*}_{ij}\ \langle b_i\ |\ b_{l}\rangle\  \langle c_{j}\ | c_{k}\rangle\ a_{lk}\\
&=\sum    \langle c_{j} a_{ij}\ | \langle b_i\ |\ b_{l}\rangle c_{k} a_{lk} \rangle\\
&=\langle \sum b_i\boxtimes c_{j}\ a_{ij}\ |\ \sum b_{i}\boxtimes c_{j}\ a_{ij}\rangle
\end{align*}

In the above computation, we have used the fact that $F^{\prime}\cap G^{\prime}=\varnothing$, where $F^{\prime}:=B_{R_{X}+D}(x)$ and $G^{\prime}=B_{R_{Y}+D}(y)=0$, together with Corollary \ref{localization of inner product}. In particular, this implies our linear map $u^{F,G}_{X,Y}$ preserves the kernel in the relative tensor product and hence is well-defined and an isometry of right $A$-modules.

Computing the adjoint, we see $(u^{F,G}_{X,Y})^{*}(c_j\boxtimes b_i)=b_i\boxtimes c_j=(u^{F,G}_{X,Y})^{-1}(c_j\boxtimes b_i)$, and thus $u^{F,G}_{X,Y}$ is a unitary. 

Now, suppose $\{b^{\prime}_{i}\},\ \{c^{\prime}_{j}\}$ are alternative choices for $F$ and $G$-localizing bases respectively for $X$ and $Y$. Then we see
\begin{align*}
u^{F,G}_{X,Y}(b^{\prime}_{i}\boxtimes c^{\prime}_{j})&=u^{F,G}_{X,Y}\left(\sum_{l,k} b_{l} \langle b_{l}\ |\ b^{\prime}_{i}\rangle \boxtimes c_{k} \langle c_{k}\ |\ c^{\prime}_{j}\rangle \right)\\
&=u^{F,G}_{X,Y}\left(\sum_{l,k} b_{l}\boxtimes c_{k}\ \langle b_{l}\ |\ b^{\prime}_{i}\rangle \langle c_{k}\ |\ c^{\prime}_{j}\rangle\right)\\
&=\sum_{l,k} c_{k}\boxtimes b_{l}\ \langle b_{l}\ |\ b^{\prime}_{i}\rangle \langle c_{k}\ |\ c^{\prime}_{j}\rangle\\
&=c^{\prime}_{j}\boxtimes b^{\prime}_{i}.\\
\end{align*}

\end{proof}

\begin{rem} We henceforth assume that $R_{X}\ge R$ for all $X\in \textbf{DHR}(A)$, otherwise, we simply replace $R_{X}$ by $\text{max}\{R, R_{X}\}$.
\end{rem}

\begin{cor}\label{biggerballs}
Suppose $U\ge R_{X}$, $V\ge R_{Y}$ and $d(x,y)>U+V+T_{0}$. If $F=B_{R_{X}}(x),\ F^{\prime}=B_{U}(x), G=B_{R_{Y}}(y),\ G^{\prime}=B_{V}(y)$, then $u^{F,G}_{X,Y}=u^{F^{\prime},G^{\prime}}_{X,Y}$.
\end{cor}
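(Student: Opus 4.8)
The plan is to exploit the basis-independence already established in Lemma \ref{defineintertwine}, together with the elementary observation that enlarging a localization ball only \emph{weakens} the localization constraint, so that a single pair of bases computes both operators.

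First I would record the monotonicity of localization. Since $F=B_{R_X}(x)\subseteq B_U(x)=F'$, passing to complements gives $A_{(F')^c}\subseteq A_{F^c}$. Hence any $F$-localizing projective basis $\{b_i\}$ of $X$ (whose elements commute with all of $A_{F^c}$) in particular commutes with $A_{(F')^c}$, and is therefore also $F'$-localizing. The identical argument shows that any $G$-localizing basis $\{c_j\}$ of $Y$ is automatically $G'$-localizing.

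Next I would verify that both swap operators are defined and basis-independent. The hypothesis $d(x,y)>U+V+T_0$ yields $d(x,y)>R_X+R_Y+T_0$ because $U\ge R_X$ and $V\ge R_Y$, so Lemma \ref{defineintertwine} applies directly to produce $u^{F,G}_{X,Y}$, independent of the chosen $F$- and $G$-localizing bases. Treating $U$ and $V$ as enlarged localization radii for $X$ and $Y$ (legitimate, since $X$ localizes in every ball of radius at least $R_X$ and $U\ge R_X$, and likewise for $Y$), the same lemma with distance bound $d(x,y)>U+V+T_0$ produces $u^{F',G'}_{X,Y}$, independent of the chosen $F'$- and $G'$-localizing bases. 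Then I would fix one $F$-localizing basis $\{b_i\}$ of $X$ and one $G$-localizing basis $\{c_j\}$ of $Y$; by the first step these simultaneously serve as $F'$- and $G'$-localizing bases, so both $u^{F,G}_{X,Y}$ and $u^{F',G'}_{X,Y}$ may be computed from the very same $\{b_i\},\{c_j\}$. On the dense set of vectors $\sum b_i\boxtimes_A c_j\, a_{ij}$ both are then given by the identical formula $\sum b_i\boxtimes_A c_j\, a_{ij}\mapsto \sum c_j\boxtimes_A b_i\, a_{ij}$, so they agree on a dense subspace and hence everywhere, giving $u^{F,G}_{X,Y}=u^{F',G'}_{X,Y}$.

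There is essentially no hard analytic content here; the only point requiring care is invoking basis-independence with the correct radii, i.e.\ recognizing that $U$ and $V$ genuinely qualify as localization radii so that Lemma \ref{defineintertwine} applies to the enlarged balls $F',G'$. Once that is in place, equality is immediate because the two maps share a common defining basis and the same swap formula.
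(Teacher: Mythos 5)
Your proof is correct and follows essentially the same route as the paper: the paper's one-line argument is precisely that bases localized in $B_{R_X}(x)$ are also localized in $B_U(x)$ (and similarly for $Y$), so the basis-independence of Lemma \ref{defineintertwine} forces the two operators to coincide. Your write-up simply makes explicit the inclusion $A_{(F')^c}\subseteq A_{F^c}$ and the check that $U,V$ qualify as localization radii, which the paper leaves implicit.
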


\begin{proof}

This follows from the previous lemma since bases localized in $B_{R_{X}}(x)$ are also localized in $B_{U}(x)$ (similarly for $y, Y$ and $V$).
    
\end{proof}

\begin{lem}\label{Independenceofpoints} Let $(x_1,y_1),(x_2,y_2)\in L\times L$ satisfy $d(x_{i},y_{i})>R_{X}+R_{Y}+T_{0}$ (and if $n=1$, $x_{i}<y_{i}$). Let $F=B_{R_X}(x_1),G=B_{R_{Y}}(y_1),F^{\prime}=B_{R_X}(x_2),G^{\prime}=B_{R_{Y}}(y_2)$. Then  $u^{F,G}_{X,Y}=u^{F^{\prime},G^{\prime}}_{X,Y}$.
\end{lem}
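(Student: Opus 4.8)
The plan is to show that the unitary $u^{F,G}_{X,Y}$ does not depend on the choice of localization points $(x,y)$, subject to the stated separation constraint (and the ordering constraint when $n=1$). The key principle is that $L \subseteq \mathbbm{R}^n$ lets me connect any two valid configurations $(x_1,y_1)$ and $(x_2,y_2)$ through a path of intermediate configurations, moving one point at a time by small steps, and then invoke the already-established robustness results to see the unitary is unchanged at each step. So the first thing I would do is reduce the full statement to a single \emph{elementary move}: changing only the $X$-localization point from $x_1$ to some nearby $x_1'$ (keeping $y_1$ fixed), and symmetrically for the $Y$-point.

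For such an elementary move, the strategy is the standard DHR trick of finding a \emph{common enlargement}. If $x_1$ and $x_1'$ are close enough that a single ball $B_U(z)$ of radius $U \ge R_X$ about some point $z$ contains both $B_{R_X}(x_1)$ and $B_{R_X}(x_1')$, and if $z,U$ still satisfy $d(z,y_1) > U + R_Y + T_0$, then by Corollary \ref{biggerballs} we have $u^{F,G}_{X,Y} = u^{B_U(z),G}_{X,Y} = u^{F',G}_{X,Y}$, since a basis localized in the smaller ball is also localized in the larger one. Thus each elementary move that keeps the separation large is transparent. The content is therefore purely geometric: I must verify that I can travel from $(x_1,y_1)$ to $(x_2,y_2)$ through a finite chain of such small moves, at every stage keeping the two active balls disjoint by more than $T_0$, and (when $n=1$) never letting $x$ cross $y$.

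Here is where the hypotheses do real work. When $n \ge 2$, the complement of a neighborhood of a point is \emph{path-connected}, so I can slide $x$ along a path in $\mathbbm{R}^n$ avoiding $y$, keep $d(x,y)$ large throughout (routing around $y$ if necessary), and similarly slide $y$; discretizing this continuous path using the lattice constant $C$ gives the required finite chain of small moves, each realizable by a common-enlargement ball. When $n=1$, the ordering constraint $x < y$ is exactly what keeps the configuration space connected: the region $\{(x,y): x < y,\ d(x,y) > \text{const}\}$ is connected, so I can again interpolate $x_1 \rightsquigarrow x_2$ and $y_1 \rightsquigarrow y_2$ without ever forcing the points to collide or swap order. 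This is precisely why the $n=1$ hypothesis is stated separately.

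\textbf{Main obstacle.} I expect the genuine difficulty to be the geometric bookkeeping rather than any operator-algebraic subtlety: I must choose the intermediate points and enlargement radii so that (i) consecutive $X$-balls share a common enclosing ball of radius $\ge R_X$, (ii) this enclosing ball stays separated from the current $Y$-ball by strictly more than $T_0$, and (iii) the step size is compatible with the lattice constant $C$ so that valid lattice points are always available. The constant $T_0 = 2C + 2D + 2R$ is engineered to give enough slack for exactly this maneuvering, so the proof amounts to checking that the slack is never exhausted along the interpolating path. Once the geometric chain is set up, the equality of unitaries at each step is immediate from Corollary \ref{biggerballs}, and the statement follows by transitivity of equality along the chain.
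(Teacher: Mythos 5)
Your proposal is correct and follows essentially the same route as the paper: an elementary-move step justified by Haag-duality localization of inner products, followed by a connectivity/discretization argument in $\mathbbm{R}^{n}$ (using the lattice constant $C$ and, for $n=1$, the ordering hypothesis) to chain the moves together. The only caveat is that Corollary \ref{biggerballs} is stated for \emph{concentric} enlargements, so for your non-concentric common-enlargement ball $B_{U}(z)$ you must rerun the basis-independence argument of Lemma \ref{defineintertwine} (disjointness of the $D$-enlargements plus Corollary \ref{localization of inner product}); this is precisely the computation the paper carries out explicitly, the only other difference being that the paper moves both localization points simultaneously inside disjoint enclosing balls rather than one at a time.
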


\begin{proof}

First suppose that $(x_1,y_1)$ and $(x_2,y_2)$ satisfy the property that there exist balls $H$ and $K$ of radius at least $R_{X}$ and $R_{Y}$ respectively such that the corresponding balls $H^{\prime}$ and $K^{\prime}$ with radii increased by $D$ are disjoint, and $F\cup F^{\prime}\subseteq H$ and $G\cup G^{\prime}\subseteq K$. Let $\{b_{i}\}, \{b^{\prime}_{i}\}, \{c_{i}\}, \{c^{\prime}_{i}\}$ be an $F,F^{\prime},G,G^{\prime}$-localizing bases, respectively. Then 
\begin{align*}
u^{F^{\prime},G^{\prime}}_{X,Y}(b_{i}\boxtimes c_{j})&=u_{F^{\prime},G^{\prime}}\left(\sum_{l,k} b^{\prime}_{l} \langle b^{\prime}_{l}\ |\ b_{i}\rangle \boxtimes c^{\prime}_{k} \langle c^{\prime}_{k}\ |\ c_{j}\rangle \right)\\
&=\sum_{l,k} c^{\prime}_{k}\boxtimes b^{\prime}_{l}\ \langle b^{\prime}_{l}\ |\ b_{i}\rangle \langle c^{\prime}_{k}\ |\ c_{j}\rangle\\
&=c_{j}\boxtimes b_{i}=u^{F,G}_{X,Y}(b_{i}\boxtimes c_{j}),\\
\end{align*}

\noindent where we have used the fact that $\langle b^{\prime}_{l}\ |\ b_{i}\rangle \in A_{H^{\prime}}$ and $\langle c^{\prime}_{k}\ |\ c_{j}\rangle\in A_{K^{\prime}}$.

Now we claim that for any pair $(x_{1}, y_{1})$ and $(x_{2}, y_{2})$ as in the hypothesis of this lemma, there exists a sequence of $(x_1,y_{1})=(x^{\prime}_1,y^{\prime}_{1}), \dots (x^{\prime}_{n},y^{\prime}_{n})=(x_2,y_2)$ with $d(x^{\prime}_{i}, y^{\prime}_{i})>R_{X}+R_{Y}+T_{0}$ and there exist disjoint balls $H_{i}, K_{i}$ whose $D$ extensions $H^{\prime}_{i}$ and $K^{\prime}_{i}$ are disjoint, and with $B_{R_{X}}(x^{\prime}_{i})\cup B_{R_{X}}(x^{\prime}_{i+1})\subseteq H_{i}$ and $B_{R_{Y}}(y^{\prime}_{i})\cup B_{R_{Y}}(y^{\prime}_{i+1})\subseteq K_{i}$. By the above argument, this will prove the claim. But the continuous version of this claim in $\mathbbm{R}^{n}$ is clear, and since our lattice $L$ is $C$-close to any point in $\mathbbm{R}^{n}$, the result follows from our assumption that $d(x,y)> 2C+2D+R_{X}+R_{Y}$.

\end{proof}

\begin{defn}
 For $X, Y\in \textbf{DHR}(A)$, define $u_{X,Y}=u^{F,G}_{X,Y}$ where $F=B_{R_{X}}(x)$, $G=B_{R_{Y}}(y)$ and $d(x,y)>R_{X}+R_{Y}+T_{0}$ (in the 1-dimensional case we assume $x<y$). By the above lemma, this is independent of the choice of $(x,y)$.
\end{defn}

\bigskip

\begin{lem}
    For any $X,Y\in \textbf{DHR}(A)$, $u_{X,Y}$ is a bimodule intertwiner.
\end{lem}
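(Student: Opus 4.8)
The plan is to show that $u_{X,Y}$ commutes with the left $A$-action, since it is already an adjointable (hence right-$A$-linear) operator by Lemma~\ref{defineintertwine}. To verify left-linearity, I would fix an arbitrary $a \in A$ and check that $u_{X,Y}(a \triangleright z) = a \triangleright u_{X,Y}(z)$ for $z$ in a dense subspace of $X \boxtimes_A Y$. Since both sides are bounded and localizable bases span a dense subset, it suffices to check the identity on simple tensors $b_i \boxtimes c_j$ where $\{b_i\}$ and $\{c_j\}$ are $F$- and $G$-localizing bases, and it is enough to verify it for $a \in A_K$ ranging over finite regions $K$, since such elements are dense in $A$.

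**First** I would exploit the freedom, guaranteed by Lemma~\ref{Independenceofpoints} and Corollary~\ref{biggerballs}, to choose the localization points $(x,y)$ and the radii so that the supporting balls $F = B_{R_X}(x)$ and $G = B_{R_Y}(y)$ are far from the given region $K$. Concretely, given $a \in A_K$, I would select $x, y$ so that $K$ lies in the complement of both (the $D$-enlarged) balls, i.e. $K \subseteq F^c \cap G^c$, while still maintaining $d(x,y) > R_X + R_Y + T_0$ (and $x < y$ in the one-dimensional case). This is possible because $K$ is finite and $L \subseteq \mathbbm{R}^n$ is a lattice, so there is ample room to push the two balls away from $K$. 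With this arrangement, the left-action of $a$ commutes through the basis elements: since each $b_i$ is $F$-localized and $a \in A_{F^c}$, we have $a b_i = b_i a$, and similarly $a c_j = c_j a$.

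**The key computation** then runs as follows. On the one hand, $u_{X,Y}(a \triangleright (b_i \boxtimes c_j)) = u_{X,Y}(a b_i \boxtimes c_j)$, and using $a b_i = b_i a$ together with $A$-middle-linearity of the relative tensor product, this equals $u_{X,Y}(b_i \boxtimes a c_j) = u_{X,Y}(b_i \boxtimes c_j a')$ after moving $a$ through appropriately; applying the swap gives $c_j \boxtimes (\text{term involving } a)$. On the other hand, $a \triangleright u_{X,Y}(b_i \boxtimes c_j) = a \triangleright (c_j \boxtimes b_i) = a c_j \boxtimes b_i = c_j a \boxtimes b_i$. Using $a c_j = c_j a$ and middle-linearity, both expressions reduce to the same element $c_j \boxtimes b_i a = c_j a \boxtimes b_i$, establishing equality. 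The bookkeeping amounts to carefully tracking how $a$ passes from the left slot of $X \boxtimes Y$ through the localization identities to the left slot of $Y \boxtimes X$.

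**The main obstacle** I anticipate is purely a matter of setup rather than deep content: one must justify that the localization point can always be chosen far from $K$ \emph{uniformly enough} that $a$ commutes with \emph{every} basis element simultaneously, and then confirm that the resulting intertwiner identity—proved on this $K$-dependent dense subspace—is in fact independent of these auxiliary choices, which is exactly what Lemma~\ref{Independenceofpoints} supplies. Once the geometry is arranged so that $a \in A_{F^c} \cap A_{G^c}$, the algebra is a direct application of the commutation relations and middle-linearity, so the real care lies in invoking the independence-of-points result to glue together the verifications for different regions $K$ into a single globally defined left-$A$-linear map.
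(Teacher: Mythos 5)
Your proof is correct, but it takes a genuinely different route from the paper's. The paper fixes an arbitrary local operator $a\in A_{F}$ and chooses the localizing basis $\{b_{i}\}$ of $X$ to live in the \emph{same} ball $F$ that supports $a$; it then expands $ab_{i}=\sum_{j}b_{j}\langle b_{j}\,|\,ab_{i}\rangle$ and invokes weak algebraic Haag duality (Corollary \ref{localization of inner product}) to conclude that the coefficients $\langle b_{j}\,|\,ab_{i}\rangle$ lie in a slightly enlarged ball $F'$ disjoint from $G$, so they commute past the $Y$-basis and can be pulled through the right-$A$-linear map $u^{F,G}_{X,Y}$. You instead keep $a\in A_{K}$ fixed and move \emph{both} localization balls far from $K$, so that $a$ commutes with every basis element outright; the core computation then reduces to commutation relations plus middle-linearity, with no basis expansion and no direct appeal to Haag duality, and the identification of $u^{F,G}_{X,Y}$ with the canonical $u_{X,Y}$ is delegated to Lemma \ref{Independenceofpoints} (and Corollary \ref{biggerballs}). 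The trade-off: your argument is computationally lighter and closer to the standard ``transportability'' reasoning of DHR theory, but it shifts the analytic burden onto the independence lemma, which itself rests on Haag duality and the lattice geometry, so neither route avoids that hypothesis. One small caveat: your density argument should be stated as in your own setup — both $z\mapsto u_{X,Y}(a\triangleright z)$ and $z\mapsto a\triangleright u_{X,Y}(z)$ are bounded and right-$A$-linear, so agreement on the projective basis $\{b_{i}\boxtimes c_{j}\}$ (whose right $A$-span is dense) gives agreement everywhere, and then density of $\bigcup_{K}A_{K}$ in $A$ handles general $a$; also, the $D$-enlargement of the balls is not actually needed for the commutation $ab_{i}=b_{i}a$, only disjointness of $K$ from $F$ and $G$ themselves.
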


\begin{proof}

Let $a\in A_{F}$, where $F$ is some ball of radius $U\ge R_{X}$ about the point $x$. Choose $y$ sufficiently far away, i.e. $d(x,y)>>T_{0}+U+R_{Y}$ (and if $n=1$, $x<y$). Set $G=B_{R_{Y}}(y)$. Then choose $\{b_{i}\}$ and $\{c_{j}\}$ $F$ and $G$ localizing bases for $X$ and $Y$ respectively. Then by Corollary \ref{localization of inner product}, $\langle b_{j}\ | ab_{i}\rangle\in A_{F^{\prime}}$ where $F^{\prime}=B_{U+D}(x)\subset G^{c}$. Thus

\begin{align*}
u_{X,Y}(ab_{i}\boxtimes c_{k})&=u_{X,Y}(\sum_{j}b_{j}\langle b_{j}\ | ab_{i}\rangle\boxtimes c_{k})\\
&=u_{X,Y}(\sum_{j} b_{j}\boxtimes c_{k} \langle b_{j}\ | ab_{i}\rangle)\\
&=u^{F,G}_{X,Y}(\sum_{j} b_{j}\boxtimes c_{k}) \langle b_{j}\ | ab_{i}\rangle\\
&=\sum_{j} c_{k}\boxtimes b_{j}\langle b_{j}\ | ab_{i}\rangle\\
&=c_{k}\boxtimes ab_{i}\\
&=ac_{k}\boxtimes b_{i}\\
&=au_{X,Y}(b_{i}\boxtimes c_{k}).
\end{align*}

\end{proof}

Recall that a \textit{unitary braiding} on a C*-tensor category is a family of natural isomorphisms $u_{X,Y}: X\boxtimes Y\cong Y\boxtimes X$ satisfying coherences called the hexagon identities (see \cite[Chapter 8]{MR3242743} for an extensive introduction). The next theorem shows that the unitary isomorphisms we have built satisfy the coherences of a braiding.

\begin{thm}(c.f. Theorem \ref{ThmB})
The family $\{u_{X,Y}: X\boxtimes_{A} Y\rightarrow Y\boxtimes_{A} X  \}$ defines a unitary braiding on $\textbf{DHR}(A)$.
\end{thm}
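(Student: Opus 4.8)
The plan is to verify the two hexagon identities and naturality in each variable, using the explicit description of $u_{X,Y}$ on localizing bases. Naturality has essentially been established already: the preceding lemma shows each $u_{X,Y}$ is a bimodule intertwiner, and naturality in morphisms follows by the same style of computation. Given $f\colon X\to X'$ and $g\colon Y\to Y'$, one chooses $F$-localizing bases for $X$ (resp.\ $X'$) and $G$-localizing bases for $Y$ (resp.\ $Y'$) with $x,y$ far apart; since $f,g$ are determined by their action on basis elements and intertwine the global algebra $A$, one checks that $(g\boxtimes f)\circ u_{X,Y}$ and $u_{X',Y'}\circ(f\boxtimes g)$ agree on simple tensors $b_i\boxtimes c_j$, expanding $f(b_i)$ and $g(c_j)$ in the primed bases and using that the swap only permutes tensor legs while leaving the $A$-coefficients fixed.

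For the braiding axioms I would exploit the crucial feature of this construction: when the localization points are chosen sufficiently far apart (and, in dimension one, in the correct left-to-right order), $u_{X,Y}$ literally swaps the two tensor legs, $b_i\boxtimes c_j\mapsto c_j\boxtimes b_i$, with all $A$-valued coefficients carried along unchanged. The hexagon identities express compatibility of the braiding with the associativity constraint across a triple tensor product. Concretely, to verify
\[
u_{X,Y\boxtimes Z}=(\id_Y\boxtimes u_{X,Z})\circ(u_{X,Y}\boxtimes \id_Z)
\]
(suppressing associators), I would pick three points $x,y,z$ pairwise separated by more than the relevant constants, with $y,z$ on the same side of $x$ and, in the $1$-dimensional case, ordered as $x<y<z$. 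I would take an $F$-localizing basis $\{b_i\}$ for $X$ about $x$, and $G$-, $H$-localizing bases $\{c_j\},\{d_k\}$ for $Y,Z$ about $y,z$; then $\{c_j\boxtimes d_k\}$ is a localizing basis for $Y\boxtimes Z$ about a ball containing both. Both sides of the hexagon, evaluated on $b_i\boxtimes c_j\boxtimes d_k$, produce $c_j\boxtimes d_k\boxtimes b_i$: the left side moves $X$ past $Y\boxtimes Z$ in one step, while the right side first swaps $X$ past $Y$ and then past $Z$, and the key point is that Lemma~\ref{Independenceofpoints} guarantees each elementary swap $u_{X,Y},u_{X,Z}$ can be computed using these same fixed far-separated points, so the composite genuinely realizes the full permutation. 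The second hexagon, for $u_{X\boxtimes Y,Z}$, is entirely analogous with the roles reversed.

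The main obstacle, and the step deserving the most care, is the geometric bookkeeping that makes the ``literal swap'' description simultaneously valid for every $u$ appearing in a hexagon. Each $u_{X,Y}$ is only described by the leg-swap formula \emph{when} its two localizing balls are far enough apart and the inner products $\langle b_i\mid ab_j\rangle$ land in regions disjoint from the other module's localization region (Corollary~\ref{localization of inner product}); so I must choose $x,y,z$ with all pairwise separations exceeding $T_0+R_X+R_Y$, $T_0+R_X+R_Z$, $T_0+R_Y+R_Z$ at once, and verify that the intermediate factor $u_{X,Y}\boxtimes\id_Z$ respects the localization of $Z$ about $z$. Lemma~\ref{Independenceofpoints} is exactly the tool that licenses using one consistent configuration of points throughout, and Corollary~\ref{biggerballs} lets me enlarge balls so that $Y\boxtimes Z$ is localized in a single ball when treating it as one object. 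In dimension one there is the additional subtlety that the braiding is genuinely not symmetric: the ordering $x<y<z$ must be respected, and choosing the reverse order yields $u_{Y,X}^{-1}$ rather than $u_{X,Y}$, which is precisely what makes the resulting structure a braiding rather than a symmetry. Once the points are pinned down, each identity reduces to checking equality of two permutations of tensor legs on basis vectors, which is routine.
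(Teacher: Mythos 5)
Your proposal is correct and takes essentially the same approach as the paper's proof: naturality is checked on localizing basis elements by expanding $f(b_i)$ and $g(c_j)$ in primed localizing bases and commuting the resulting $A$-coefficients using Corollary \ref{localization of inner product}, while the hexagon is verified by choosing three pairwise far-separated points with the $Y$- and $Z$-localizing balls contained in a common ball disjoint from the $X$-ball, so that $\{c_j\boxtimes d_k\}$ is a localizing basis for $Y\boxtimes Z$ and both sides of the hexagon reduce to the leg permutation $b_i\boxtimes c_j\boxtimes d_k\mapsto c_j\boxtimes d_k\boxtimes b_i$, with Lemma \ref{Independenceofpoints} and Corollary \ref{biggerballs} licensing the use of one fixed configuration throughout. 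Your attention to the one-dimensional ordering constraint matches the paper's treatment as well.
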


\begin{proof}
First we check naturality of $u_{X,Y}$. Let $f: X\rightarrow X^{\prime}$ and $g:Y\rightarrow Y^{\prime}$. We need to show

$$u_{X^{\prime},Y^{\prime}}\circ (f\boxtimes g)=(g\boxtimes f)\circ u_{X,Y}.$$

\bigskip

\noindent Then pick $(x,y)$ such that $d(x,y)>R_{X}+R_{Y}+R_{X^{\prime}}+R_{Y^{\prime}}+T_{0}$, set $H=B_{R_{X}+R_{X^{\prime}}+D}(x)$ and $K=B_{R_{Y}+R_{Y^{\prime}}+D}(y)$. Note that $H\cap K=\varnothing$ so $A_{H}$ commutes with $A_{K}$.

Let $\{b_{i}\},\{b^{\prime}_{i}\}$ be $B_{R_{X}}(x),\ B_{R_{X^{\prime}}}(x)$-localizing bases for $X$ and $X^{\prime}$ respectively, and $\{c_{j}\},\{c^{\prime}_{j}\}$ $B_{R_{Y}}(y),\ B_{R_{Y}}(y)$ localizing bases for $Y,Y^{\prime}$ respectively. Then $\langle b^{\prime}_{l}\ |\ f(b_{i})\rangle\in A_{H}$ and $\langle c^{\prime}_{k}\ |\ g(c_{j})\rangle\in A_{K}$ by Corollary \ref{localization of inner product}.

It suffices to check naturality for morphisms evaluated on (any) projective basis elements, and we compute

\begin{align*}
u_{X^{\prime},Y^{\prime}}\circ (f\boxtimes g)(b_{i}\boxtimes c_{j})&=\sum_{l,k} u_{X^{\prime},Y^{\prime}}\left(b^{\prime}_{l}\langle b^{\prime}_{l}\ |\ f(b_{i})\rangle \boxtimes c^{\prime}_{k}\langle c^{\prime}_{k}\ |\ g(c_{j})\rangle\right)\\
&=\sum_{l,k} u_{X^{\prime},Y^{\prime}}(b^{\prime}_{l}\boxtimes c^{\prime}_{k}) \langle b^{\prime}_{l}\ |\ f(b_{i})\rangle \langle c^{\prime}_{k}\ |\ g(c_{j})\rangle\\
&=\sum_{l,k} c^{\prime}_{k}\langle c^{\prime}_{k}\ |\ g(c_{j})\rangle\boxtimes b^{\prime}_{l} \langle b^{\prime}_{l}\ |\ f(b_{i})\rangle\\
&=g(c_j)\boxtimes f(b_i)\\
&=(g\boxtimes f)\circ u_{X,Y}(b_{i}\boxtimes c_{j}).
\end{align*}

\noindent

Now we check the hexagon identity. Let $X, Y, Z\in \textbf{DHR}(A)$. Choose points $x,y,z\in L$ with the distance between any two greater than $R_{X}+R_{Y}+R_{Z}+T_{0}$, and such that there is a ball $K$ around $z$ containing $B_{R_{Y}}(y)\cup B_{R_{Z}}(z)\subseteq K$ with $K\cap B_{R_{X}}(x)=\varnothing$.

Then if $\{b_i\}, \{c_{i}\}, \{d_{i}\}$ localize $X,Y,Z$ in $B_{R_{X}}(x), B_{R_{Y}}(y), B_{R_{Z}}(z)$ respectively, we have that $\{c_{j}\boxtimes d_{k}\}$ localizes $Y\boxtimes Z$ in $K$. Denoting $F=B_{R_{X}}(x)$, we have

\begin{align*}
(1_{Y}\boxtimes u_{X,Z})\circ (u_{X,Y}\boxtimes 1_{Z})(b_{i}\boxtimes c_{j}\boxtimes d_{k})&=c_{j}\boxtimes d_{k}\boxtimes b_{i}\\
&=u^{F,K}_{X, Y\boxtimes Z}(b_{i}\boxtimes c_{j}\boxtimes d_{k})\\
&=u_{X,Y\boxtimes Z}(b_{i}\boxtimes c_{j}\boxtimes d_{k}),
\end{align*}

\noindent where the last equality follows from Corollary \ref{biggerballs}.
This gives us one of the hexagon identities. The other follows from a similar argument. In the above computation, we have suppressed the associator, which acts on basis elements $(b_{i}\boxtimes c_{j})\boxtimes d_{k}\mapsto b_{i}\boxtimes (c_{j}\boxtimes d_{k})$.

\end{proof}

\begin{cor}
For a net $A$ over a lattice $L\subseteq \mathbbm{R}^{n}$ with $n\ge 2$, the braiding on $\textbf{DHR}(A)$ is symmetric.
\end{cor}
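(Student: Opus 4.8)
The goal is to show that the braiding is symmetric, i.e. that $u_{Y,X}\circ u_{X,Y}=\id_{X\boxtimes Y}$ for all $X,Y\in\textbf{DHR}(A)$. Since a morphism out of $X\boxtimes Y$ is determined by its values on a projective basis, and since $\{b_i\boxtimes c_j\}$ is a projective basis of $X\boxtimes Y$ whenever $\{b_i\}$ and $\{c_j\}$ are projective bases of $X$ and $Y$, it suffices to verify this identity on such basis elements.

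The plan is to compute both braidings using a single, shared choice of localizing data. First I would fix points $x,y\in L$ with $d(x,y)>R_X+R_Y+T_0$, together with a $B_{R_X}(x)$-localizing basis $\{b_i\}$ for $X$ and a $B_{R_Y}(y)$-localizing basis $\{c_j\}$ for $Y$. By the definition of $u_{X,Y}$ we then have $u_{X,Y}(b_i\boxtimes c_j)=c_j\boxtimes b_i$. The crucial observation is that the \emph{same} geometric data also computes $u_{Y,X}$, simply by reading $Y$ as the first factor (localized around $y$) and $X$ as the second (localized around $x$): this is a legitimate input for the canonical braiding $u_{Y,X}$ precisely because $d(y,x)=d(x,y)>R_Y+R_X+T_0$. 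Hence $u_{Y,X}(c_j\boxtimes b_i)=b_i\boxtimes c_j$, and composing gives
\begin{align*}
u_{Y,X}\circ u_{X,Y}(b_i\boxtimes c_j)&=u_{Y,X}(c_j\boxtimes b_i)\\
&=b_i\boxtimes c_j,
\end{align*}
as desired.

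The step that carries the dimension hypothesis—and the one I would treat with the most care—is the justification that the configuration $(y,x)$ is an admissible choice of points for $u_{Y,X}$. This rests on Lemma \ref{Independenceofpoints}, which asserts that $u_{Y,X}$ is independent of the chosen pair of points subject to the distance constraint. In dimension $n=1$ that lemma carries an additional ordering requirement on the two points, and it is exactly this ordering obstruction that prevents reusing the swapped configuration: computing $u_{X,Y}$ forces $x<y$, while computing $u_{Y,X}$ with the point of $Y$ first and the point of $X$ second would force $y<x$, so no common configuration exists. For $n\ge 2$ there is no such ordering condition in Lemma \ref{Independenceofpoints}, so the pair $(y,x)$ is admissible for $u_{Y,X}$ and the computation above is valid. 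This is the entire content of the corollary: symmetry is the statement that, with enough room to move localizing regions past one another, the two orders of braiding cancel.

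Thus for $n\ge 2$ we obtain $u_{Y,X}\circ u_{X,Y}=\id_{X\boxtimes Y}$, so the braiding is symmetric.
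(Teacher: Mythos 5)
Your proof is correct and takes essentially the same route as the paper: both arguments rest on the observation that Lemma \ref{Independenceofpoints} carries no ordering constraint when $n\ge 2$, so the swapped configuration (with $Y$ localized at $y$ and $X$ at $x$) also computes the canonical braiding $u_{Y,X}$, and the defining formula on the projective basis $\{b_i\boxtimes c_j\}$ then yields $u_{Y,X}\circ u_{X,Y}=\id_{X\boxtimes Y}$. The paper merely packages the same computation as the chain $u_{X,Y}=u^{F,G}_{X,Y}=u^{G,F}_{X,Y}=(u_{Y,X})^{-1}$.
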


\begin{proof}
    Any pair of points in $\mathbbm{R}^{n}$ can be connected to each other in the manner of the proof of Lemma \ref{Independenceofpoints}. We see that $u_{X,Y}=u^{F,G}_{X,Y}=u^{G,F}_{X,Y}=(u_{Y,X})^{-1}$, where the last equality follows from the definition of $u^{F,G}_{X,Y}$.
\end{proof}

By the Doplicher-Roberts theorem, any symmetric C*-tensor category with simple unit is equivalent to $\textbf{Rep}(G,z)$ where $(G,z)$ is a supergroup \cite{MR1010160}. In particular, the pair $(G,z)$ is interpreted as the (global) gauge (super)-group of the theory. In general, when we have an abstract net of C*-algebras, we should think of the braided tensor category $\textbf{DHR}(A)$ as the representation category of some generalized symmetry $G$ acting on an ordinary spin system, with $A$ the net of local symmetric operators.

\begin{thm}(c.f. Theorem \ref{ThmB}). If $A,B$ are nets on $L$ satisfying weak algebraic Haag duality, then for any $\alpha\in Net_{L}(A,B)$, the unitary monoidal equivalence $\textbf{DHR}(\alpha): \textbf{DHR}(A)\cong \textbf{DHR}(B)$ is braided.
\end{thm}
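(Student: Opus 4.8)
The plan is to verify directly the single diagram that characterizes a braided monoidal functor. Writing $\textbf{DHR}(\alpha) = (\alpha_{*}, \mu^{\alpha})$, and letting $u^{A}$ and $u^{B}$ denote the braidings constructed above on $\textbf{DHR}(A)$ and $\textbf{DHR}(B)$, it suffices to show that for all $X, Y \in \textbf{DHR}(A)$,
\[
\alpha_{*}(u^{A}_{X,Y}) \circ \mu^{\alpha}_{X,Y} = \mu^{\alpha}_{Y,X} \circ u^{B}_{\alpha_{*}(X),\alpha_{*}(Y)}
\]
as intertwiners $\alpha_{*}(X) \boxtimes_{B} \alpha_{*}(Y) \to \alpha_{*}(Y \boxtimes_{A} X)$. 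Every map in this equation is a $B$-$B$ bimodule intertwiner, and whenever $\{b_{i}\}$, $\{c_{j}\}$ are projective bases of $\alpha_{*}(X)$, $\alpha_{*}(Y)$ the family $\{b_{i}\boxtimes_{B} c_{j}\}$ is a projective basis of $\alpha_{*}(X)\boxtimes_{B}\alpha_{*}(Y)$; so by the uniqueness of bimodule maps on a projective basis it is enough to check the identity on such elements.

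The key technical input, already isolated in the proof of Theorem \ref{TheoremA}, is that $X \mapsto \alpha_{*}(X)$ preserves localizing bases. If $\alpha^{-1}$ has spread at most $T$ and $\{b_{i}\}$ localizes $X$ in $B_{R_{X}}(x)$, then the same set $\{b_{i}\}$ is a projective basis for $\alpha_{*}(X)$, since $b_{i}\triangleleft_{\alpha}\langle b_{i}\ |\ v\rangle_{\alpha_{*}(X)} = b_{i}\,\alpha^{-1}(\alpha(\langle b_{i}\ |\ v\rangle_{X})) = b_{i}\langle b_{i}\ |\ v\rangle_{X}$, and it localizes $\alpha_{*}(X)$ in the slightly larger ball $B_{R_{X}+T}(x)$. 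Consequently I may use one and the same pair of bases $\{b_{i}\}, \{c_{j}\}$ to compute both $u^{A}_{X,Y}$ and $u^{B}_{\alpha_{*}(X),\alpha_{*}(Y)}$.

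I would then fix centers $x, y \in L$ with $d(x,y)$ exceeding all the distance thresholds required by \emph{both} braidings, namely $R_{X}+R_{Y}+T_{0}^{A}$ on the $A$-side and $(R_{X}+T)+(R_{Y}+T)+T_{0}^{B}$ on the $B$-side, where $T_{0}^{A}, T_{0}^{B}$ are built from the duality constants of $A$ and $B$ respectively (and, if $n=1$, with $x<y$). By Lemma \ref{Independenceofpoints} and Corollary \ref{biggerballs} both braidings are independent of such admissible choices, so this is harmless. Evaluating on $b_{i}\boxtimes_{B} c_{j}$, the left-hand side yields $\alpha_{*}(u^{A}_{X,Y})(b_{i}\boxtimes_{A} c_{j}) = u^{A}_{X,Y}(b_{i}\boxtimes_{A} c_{j}) = c_{j}\boxtimes_{A} b_{i}$, using that $\alpha_{*}$ acts as the identity on underlying vectors and that $\mu^{\alpha}_{X,Y}(b_{i}\boxtimes_{B} c_{j}) = b_{i}\boxtimes_{A} c_{j}$; the right-hand side yields $\mu^{\alpha}_{Y,X}(u^{B}_{\alpha_{*}(X),\alpha_{*}(Y)}(b_{i}\boxtimes_{B} c_{j})) = \mu^{\alpha}_{Y,X}(c_{j}\boxtimes_{B} b_{i}) = c_{j}\boxtimes_{A} b_{i}$. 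The two agree, so the identity holds on basis elements and hence everywhere.

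The only genuine subtlety, and the step I would treat most carefully, is the bookkeeping of constants: $A$ and $B$ may satisfy weak algebraic Haag duality (Definition \ref{algebraicHaag}) with different duality constants, so both the localization radii ($R_{X}$ versus $R_{X}+T$) and the threshold constants ($T_{0}^{A}$ versus $T_{0}^{B}$) differ across the two sides. One must confirm that a single choice of far-apart centers $x, y$, together with a single orientation $x<y$ in dimension one, simultaneously meets the hypotheses of the braiding construction on $\textbf{DHR}(A)$ and on $\textbf{DHR}(B)$; this is precisely where one invokes that all thresholds are finite and that the braidings do not depend on the admissible choices. Everything else reduces to the formal observation that the braiding permutes the fixed basis in the identical combinatorial fashion over $A$ and over $B$, while $\mu^{\alpha}$ merely relabels $\boxtimes_{B}$ as $\boxtimes_{A}$ on simple tensors.
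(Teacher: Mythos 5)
Your proposal is correct and follows essentially the same route as the paper: both proofs fix localizing bases $\{b_{i}\}, \{c_{j}\}$ in regions chosen far enough apart to be admissible for the braidings on \emph{both} sides, use the observation (from the proof of Theorem \ref{TheoremA}) that these same sets remain localizing projective bases for $\alpha_{*}(X), \alpha_{*}(Y)$ with radii enlarged by the spread, and then check the braided-functor identity on the elements $b_{i}\boxtimes c_{j}$, where both sides reduce to the swap $c_{j}\boxtimes b_{i}$, concluding since bimodule maps are determined on projective bases. Your extra bookkeeping of the two sets of duality constants $T_{0}^{A}, T_{0}^{B}$ makes explicit what the paper compresses into the single condition $N_{S}(F)\cap N_{S}(G)=\varnothing$, but the argument is the same.
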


\begin{proof}
Let $X,Y\in \textbf{DHR}(A)$. Suppose $\alpha$ has spread at most $S$. Choose balls $F,G$ such that $N_{S}(F)\cap N_{S}(G)=\varnothing$. Then pick $F$ and $G$ localizing bases $\{b_{i}\}, \{c_{j}\}$ respectively, for $X$ and $Y$ respectively. Let $F^{\prime}=N_{S}(F), G^{\prime}=N_{S}(G)$. Then $\{b_{i}\}$ and $\{c_{i}\}$ are $F^{\prime}$ and $G^{\prime}$ localizing bases, respectively, of $\alpha_{*}(X)$ and $\alpha_{*}(Y)$, respectively. Here we are using the notation $\alpha_{*}$ for $\textbf{DHR}(\alpha)$ as in the proof of Theorem \ref{TheoremA}. We compute

\begin{align*} 
(\mu^{\alpha}_{X,Y})^{*}\circ \alpha_{*}(u_{X,Y})\circ \mu^{\alpha}_{X,Y}(b_{i}\boxtimes_{B} c_{j})&= (\mu^{\alpha}_{X,Y})^{*}(u^{F,G}_{X,Y}(b_i \boxtimes_{A} c_{j}))\\
&=(\mu^{\alpha}_{X,Y})^{*}(c_{j}\boxtimes_{A} b_{i})\\
&=c_{j}\boxtimes_{B} b_{i}\\
&=u^{F^{\prime}, G^{\prime}}_{\alpha_{*}(X),\alpha_{*}(Y)}(b_{i}\boxtimes_{B} c_{j})\\
&=u_{\alpha_{*}(X),\alpha_{*}(Y)}(b_{i}\boxtimes_{B} c_{j}).
\end{align*}

Since module maps are determined on projective basis elements, this proves the claim.

\end{proof}

\section{1D spin systems with categorical symmetries}\label{categorical symmetry}

Recall that a \textit{unitary fusion category} is a semisimple C*-tensor category with simple unit, duals, and finitely many isomorphism classes of simple objects. Fusion categories simultaneously generalize finite groups and their representation categories, and have become important tools for understanding generalized symmetries in mathematics and physics \cite{MR2183279, MR3242743}. Recently, there has been significant interest in fusion categorical symmetries on spin chains, part of a larger interest in non-invertible symmetry \cite{freed2022topological}. One motivation is the search for exotic conformal field theories \cite{PhysRevLett.128.231602, PhysRevLett.128.231603}.

\medskip

There are (at least) two equivalent pictures to describe categorical symmetries:

\medskip

\begin{enumerate}
\item
The first way is to have fusion categories act by \textit{matrix product operators} (MPOs) \cite{Charact, MR3614057,GarreRubio2023classifyingphases,MR3719546,MR4272039,MR4109480}. Mathematically, the data that characterize this are described by a module category $\mathcal{M}$ for $\cC$, and an object $X\in \mathcal{C}_{*}(\mathcal{M})$ in the dual category (\cite{10.21468/SciPostPhys.10.3.053, GarreRubio2023classifyingphases}). The operators localized on n-sites invariant under this symmetry are isomorphic to $\text{End}_{\mathcal{C}_{*}(\mathcal{M})}(X^{\otimes n})$.

\item 
Equivalently, we can consider a weak C*-Hopf algebra $H$ (\cite{MR1726707}) acting on a physical on-site Hilbert space $K$ of spins \cite{WeakHopI, Molnar:2022nmh, MR1463825}, which is a more straightforward generalization of on-site group symmetry. Then $K\in \text{Rep}(H)$, and we can consider the $n$-site Hilbert space $K^{\boxtimes n}$, which is equipped with an action of $H$ using the coproduct. We note that $K^{\boxtimes n}\subseteq K^{\otimes n}$ but if $H$ is not a Hopf algebra, these are not equal. There is a distinguished subalgebra $S\le H$, and any module $K$ becomes a bimodule over $S$. Then $K^{\boxtimes n}\cong K\otimes_{S} K\otimes _{S}\dots K$. The local observable are given by the $H$ intertwining endomorphisms, $\text{End}_{H}(K^{\boxtimes n})$.

\end{enumerate}

In both of these situations, the resulting nets of algebras are described by abstract nets of algebras built directly in terms of abstract fusion categories. This allows us to analyze the theory without worrying about the physical realization of the original spin system. This will also cover the example \ref{BoundaryEx}, which we will discuss in detail in the sequel. For any unitary fusion category $\mathcal{D}$ (which we assume is strict for convenience) and any object $X\in \mathcal{D}$, we can define a net of finite dimensional C*-algebras on the lattice $\mathbbm{Z}\subseteq \mathbbm{R}$. For any interval $I$ with $n$-sites, we set 

$$A_{I}:=\mathcal{D}(X^{n},X^{n}).$$ 

\noindent Here we use the notation $\mathcal{D}(X,Y)$ as shorthand for the morphism space $\text{Hom}(X,Y)$ in the category $\mathcal{D}$, and $X^{n}$ is shorthand for the $n$-fold tensor power $X^{\otimes n}$ in $\mathcal{D}$.

Now suppose $I=[a,b]$ and $J=[c,d]$ with $I\subseteq J$ (so $c\le a$ and $b\le d$). Then we can define the inclusion $A_{I}\subseteq A_{J}$ by identifying 

$$f\mapsto 1_{X^{a-c}}\otimes f \otimes 1_{X^{d-b}},$$

\noindent We then take the colimit over the directed set of intervals in the category of C*-algebras to obtain the quasi-local algebra 

$$A:=\varinjlim A_{I}.$$

For any interval, we denote the inclusion $i_{a,b}: A_{[a,b]}\hookrightarrow A$, and identify $A_{[a,b]}$ with it's image.

\begin{prop}\label{fusionspinchain}
The assignment $F\mapsto A_{F}$ constructed above defines a discrete net of C*-algebras over $\mathbbm{Z}\subseteq \mathbbm{R}$. We call the nets constructed this way \textit{fusion spin chains}.
\end{prop}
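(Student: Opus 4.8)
The plan is to verify the two axioms in Definition \ref{Discretenetdef}, after first confirming that the assignment $F \mapsto A_F$ is a well-defined poset homomorphism into unital C*-subalgebras of $A$. For the poset homomorphism property, I would first treat intervals: given intervals $I \subseteq J$, the map $f \mapsto 1_{X^{a-c}} \otimes f \otimes 1_{X^{d-b}}$ is a unital $*$-homomorphism $A_I \hookrightarrow A_J$ by functoriality of $\otimes$ in $\mathcal{D}$ and strictness, and these inclusions are compatible with further nesting of intervals, so the colimit $A = \varinjlim A_I$ makes sense as a C*-algebra (each $A_I$ is finite-dimensional, hence the colimit carries a unique C*-norm). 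Since a general finite subset $F \subseteq \mathbbm{Z}$ is contained in a smallest interval, one defines $A_F$ via the recipe in the preceding discussion (the C*-algebra generated by $A_G$ for intervals $G \subseteq F$); because our local data is indexed by intervals, I would check the extension-to-$\mathcal{P}(L)$ construction described after Definition \ref{Discretenetdef} applies verbatim, giving unital subalgebras with $F \subseteq F' \implies A_F \subseteq A_{F'}$.

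Next I would verify axiom (2), density of $\bigcup_F A_F$. This is essentially immediate from the colimit construction: the directed union $\bigcup_I A_{[a,b]}$ over all intervals is by definition dense in $\varinjlim A_I = A$, and since intervals are cofinal among finite subsets, $\bigcup_{F \in \mathcal{F}(\mathbbm{Z})} A_F \supseteq \bigcup_I A_I$ is dense as well.

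The main work is axiom (1): if $F \cap G = \varnothing$ then $[A_F, A_G] = 0$. It suffices to check this for disjoint intervals $I = [a,b]$ and $J = [c,d]$, say with $b < c$, since general disjoint finite sets decompose into finitely many such intervals and the generated algebras then commute. Choosing a large enough interval $[p,q] \supseteq I \cup J$, the images of $A_I$ and $A_J$ in $A_{[p,q]} = \mathcal{D}(X^{q-p+1}, X^{q-p+1})$ are $1^{\otimes(a-p)} \otimes \mathcal{D}(X^{b-a+1},X^{b-a+1}) \otimes 1^{\otimes(q-b)}$ and $1^{\otimes(c-p)} \otimes \mathcal{D}(X^{d-c+1},X^{d-c+1}) \otimes 1^{\otimes(q-d)}$ respectively; because $b < c$ these morphisms act on disjoint tensor-factor blocks, so by the interchange law $(\phi \otimes 1)(1 \otimes \psi) = \phi \otimes \psi = (1 \otimes \psi)(\phi \otimes 1)$ in the strict monoidal category $\mathcal{D}$, they commute. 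This interchange-law computation is the only genuinely structural point, and it is the step I expect to require the most care in bookkeeping — one must confirm the embeddings of $A_I$ and $A_J$ into a common $A_{[p,q]}$ genuinely land in non-overlapping strings of tensorands so that the middle identity morphisms separate them; once that is set up, commutation is formal.

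Thus the verification reduces entirely to the strictness and functoriality of the monoidal product in $\mathcal{D}$, together with the general extension and colimit machinery already established; no deep input beyond the setup is needed.
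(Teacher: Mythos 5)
Your proof is correct. Note that the paper actually states Proposition \ref{fusionspinchain} without any proof at all --- the verification is treated as routine, immediate from the construction --- so your write-up simply supplies the details the paper leaves implicit. Your argument is exactly the intended one: monotonicity and density come for free from the colimit construction and the extension-by-generation recipe following Definition \ref{Discretenetdef}, and the only substantive point, commutation of $A_I$ and $A_J$ for disjoint intervals, is correctly reduced to the interchange law $(\phi\otimes 1)(1\otimes\psi)=(1\otimes\psi)(\phi\otimes 1)$ for morphisms acting on disjoint tensor factors inside a common $A_{[p,q]}$, with the general disjoint-finite-set case following because commutation of generators passes to the generated C*-algebras.
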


The goal of this section is to characterize the DHR category of a fusion spin chain. We will see that under some mild assumptions on the tensor generator, the DHR category of a fusion spin chain is equivalent to the Drinfeld center of the underlying fusion category. To prove this, we will apply ideas and results from subfactor theory. The next section is mostly expository, and will include a summary of the machinery we will utilize to obtain the main results of this section.

\subsection{Quantum symmetries: definitions and results}\label{QuantumSymms}

We will now review some concepts and results originating from the theory of subfactors. We discuss their modern manifestation in terms of fusion category theory and their recent extension from the W*-setting to the C*-setting.

\subsubsection{Actions of fusion categories on C*-algebras}\label{actionssection}

\begin{defn} If $\mathcal{C}$ is a unitary fusion category and $A$ is a (unital) C*-algebra, an \text{action} of $\mathcal{C}$ on $A$ is a C*-tensor functor $F:\mathcal{C}\rightarrow \textbf{Bim}(A)$. 
\end{defn}

Unpacking this definition slightly further, the data of an action of $\mathcal{C}$ on $A$ is: for every object $X\in \mathcal{C}$ a bimodule $F(X)\in \textbf{Bim}(A)$, for $f\in \mathcal{C}(X,Y)$ a bimodule intertwiner $F(f): F(X)\rightarrow F(Y)$, and for every pair of objects $X,Y\in \mathcal{C}$ a unitary isomorphism $F^{2}_{X,Y}:F(X)\boxtimes_{A} F(Y)\rightarrow F(X\otimes Y)$. This data is required to satisfy coherences: $F$ should be $*$-functor, $F^{2}_{X,Y}$ should be natural in $X$ and $Y$, and the family $\{F^{2}_{X,Y}\}$ should satisfy associativity constraint with respect to the bimodule associator (see \cite[Chaper 2.4]{MR3242743}). 

Actions of unitary fusion categories on finite dimensional C*-algebras are well understood in terms of module categories for $\mathcal{C}$ (for example, see Corollary 3.6 in \cite{https://doi.org/10.48550/arxiv.2207.11854} and the discussion therein). Note, however, that any action of a non-trivial fusion category on a finite dimensional algebra is never fully faithful. An AF-action is an action on an AF C*-algebra built out of these finite dimensional pieces, and these can be fully faithful. AF-actions are the actions that are relevant for analyzing the DHR category of fusion spin chains. To give a proper account of AF-actions, we include the following definition for sake of completeness:

\begin{defn}\label{cocyclemorphism}\cite[c.f. Lemma 3.8]{https://doi.org/10.48550/arxiv.2207.11854} Let $A,B$ be unital C*-algebras and $\phi:A\rightarrow B$ a unital $*$-homomorphism. Let $\mathcal{C}$ be a unitary fusion category and suppose we have actions $F:\mathcal{C}\rightarrow \textbf{Bim}(A)$ and $G:\mathcal{C}\rightarrow \textbf{Bim}(B)$. An equivariant structure on $\phi$ with respect to $F$ and $G$ is a family of linear maps $\{k^{X}: F(X)\rightarrow G(X):\ X\in \mathcal{C}\}$ satisfying the following conditions

\begin{enumerate}
\item 
For $a,b\in A$, a$x\in X$, 

$k^{X}(a\triangleright x\triangleleft b)=\phi(a)\triangleright k^{X}(x)\triangleleft \phi(b).$
\item 
For $f\in \mathcal{C}(X,Y)$, $x\in X$
$k^{Y}\circ F(f)(x)=G(f)\circ k^{X}(x).$
\item
$\langle k^{X}(x)\ |\ k^{X}(y)\rangle_{B}=\phi(\langle x\ |\ y\rangle_{A}).$
\item
$G(X)=k^{X}(F(X))\triangleleft B$.
\item 
The following diagram commutes:

\begin{tikzcd}
F(X)\otimes F(Y)
\arrow[swap]{d}{k^{X}\otimes k^{Y}}
\arrow{r}{}
& F(c)\boxtimes_{A} F(d)
\arrow{r}{F^{2}_{X,Y}}
& F(X\otimes Y)\arrow{d}{k^{X\otimes Y}}
 \\
G(X)\otimes G(Y)
\arrow{r}{}
& G(X)\boxtimes_{B} G(Y)
\arrow{r}{G^{2}_{X,Y}}
& G(X\otimes Y)
\end{tikzcd}

\end{enumerate}
\end{defn}

Let $\mathcal{C}$ be a (strict) unitary fusion category. Suppose that we have:

\begin{enumerate}
    \item 
A sequence of finite dimensional C*-algebras $A_{n}$ and unital, injective $*$-inclusions $\iota_{n}: A_{n}\rightarrow A_{n+1}$.
\item 
A sequence of actions $F_{n}:\mathcal{C}\rightarrow \textbf{Bim}(A_{n})$.
\item 
A family $\{k^{X}_{n}\}$ of equivariant structures on $\iota_{n}$ with respect to $F_{n}$ and $F_{n+1}$.
\end{enumerate}

Then if we let $A:=\varinjlim A_{n}$ be the inductive limit of the sequence $A_{n}$ in the category of C*-algebras, there exists a canonical action $F:\mathcal{C}\rightarrow \textbf{Bim}(A)$ called the inductive limit action of the $F_{n}$ (for a detailed construction, see \cite[Proposition 4.4]{https://doi.org/10.48550/arxiv.2207.11854}). Any action of $\mathcal{C}$ on an AF C*-algebra equivalent (in the sense of \cite[Definition 3.9]{https://doi.org/10.48550/arxiv.2207.11854}) to one constructed as above is called an \textit{AF-action}.

Before we go into our main examples, we recall the following definition.

\begin{defn} A self-dual object $X\in \mathcal{D}$ is called \textit{strongly tensor generating} if there exists some $n$ such that every simple object $Y$ is a summand of $X^{ n}$.
\end{defn}

The canonical example of a strong tensor generator is simply the direct sum over all simple objects with multiplicity $1$. For any tensor generator $X$, the object $X\oplus \mathbbm{1}$ will be strongly tensor generating. The self-duality condition we use in the definition is not strictly necessary and implies a kind of spatial reflection symmetry on the fusion spin chain built from $X$. For us it is a matter of convenience, since we can use this assumption to compare our C*-algebra constructions with subfactor theory. In particular, it allows us to use Ocneanu compactness directly (see Remark \ref{Fully faithful}).

\begin{rem}\label{FrobRem}
    If $X^{n}$ contains all isomorphism classes of simple objects, then so does $X^{m}$ for any $m\ge n$. More generally, suppose $Y$ is some object such that every isomorphism class of simple object appears as a summand. Then for any object $Z$, $Y\otimes Z$ also satisfies this property. Indeed, by semisimplicity, for a simple $W$, $W$ appears as a summand of $Y\otimes Z$ if and only if $\mathcal{C}(W, Y\otimes Z)\ne 0$. But by Frobenius reciprocity, $\mathcal{C}(W, Y\otimes Z)\cong \mathcal{C}(W\otimes \overline{Z}, Y)$, where $\overline{Z}$ denotes the dual object. But the latter space is non-zero since $Y$ contains a copy of all simple objects (up to isomorphism).
\end{rem}

\begin{ex}\label{standardactions} \textbf{Standard AF-actions}. We recall the ``standard AF-actions" of fusion categories that have historically played an important role in subfactor theory.

%If $X$ is a strongly tensor generating object of $\mathcal{C}$, let $A$ denote the fusion spin chain constructed from $X$ (see Proposition \ref{fusionspinchain}). The resulting quasi-local algebra is a simple AF-algebra with a unique tracial state since it has a simple stationary Bratteli diagram (see for example \cite{MR312282} and \cite[Chapter 6]{MR623762}). 

Now, given a strong tensor generator $X$, we construct an AF-action as follows. First, by replacing $X$ with a sufficiently large tensor power, we can assume $X$ itself contains every isomorphism class of simple object, which by Remark \ref{FrobRem} implies every power of $X$ will as well.

Set $A_{n}:=\mathcal{C}(X^{n}, X^{n})$. This is a finite dimensional C*-algebra, whose matrix summands are indexed by isomorphism classes of simple objects. There is a natural unital inclusion $\iota_{n}:A_{n}\rightarrow A_{n+1}$ given by 

$$a\mapsto 1_{X}\otimes a.$$

The inductive limit algebra $A:=\lim A_{n}$ is an AF C*-algebra. $A$ is a simple AF-algebra with a unique tracial state since it has a simple stationary Bratteli diagram (see, for example, \cite{MR312282} and \cite[Chapter 6]{MR623762}). We will now build an AF-action of $\mathcal{C}$ on $A$.

For any $Y\in \mathcal{C}$, set $F_{n}(Y):=\mathcal{C}(X^{n}, X^{n}\otimes Y)$. This has the structure of an $A_{n}$-bimodule with $$a\triangleright \xi \triangleleft y:= (x\otimes 1_{Y})\circ \xi \circ b,$$

for $a,b\in A_{n}, \xi\in F_{n}(Y)$. The right $A_{n}$-valued inner product is

$$\langle \xi\ |\ \eta\rangle_{A_{n}}:=\xi^{*}\circ \eta.$$

For $f\in \mathcal{C}(Y,Z)$, $F_{n}(f)(\xi):=(1_{X^{n}}\otimes f)\circ \xi$, which is clearly a bimodule intertwiner. It is straightforward to check that $F_{n}:\mathcal{C}\rightarrow \textbf{Bim}(A_{n})$ is a C*-functor.

The monoidal structure $(F_{n})^{2}_{Y,Z}: F_{n}(Y)\boxtimes_{A_n} F_{n}(Z)\rightarrow F_{n}(Y\otimes Z)$ is induced by the linear map $$(F_{n})^{2}_{Y,Z}(\xi\otimes \eta):=(\xi\otimes 1_{Z})\circ \eta\in \mathcal{C}(X^{n}, X^{n}\otimes Y\otimes Z)=F_{n}(Y\otimes Z).$$

\noindent It is easy to check that these extend to natural unitary isomorphisms satisfying the required associativity constraints (it is here where we use the strong tensor generator assumption, i.e. that all simple objects appear as summands of all tensor powers of $X$).

Now, we define $k^{Y}_{n}: F_{n}(Y)\rightarrow F_{n+1}(Y)$ by

$$k^{Y}_{n}(\xi):=1_{X}\otimes \xi.$$

\noindent It is straightforward to verify that this defines an equivariant structure on $\iota_{n}$ with respect to $F_{n}$ and $F_{n+1}$. Taking the limit, we obtain a AF-action $F:\mathcal{C}\rightarrow \textbf{Bim}(A)$ which we call a \textit{standard AF-action}.

\end{ex}

\begin{rem}\label{Fully faithful} Standard actions have the nice property of being \textit{fully faithful}, namely for any objects $Y,Z\in \mathcal{C}$, $F: \mathcal{C}(Y,Z)\rightarrow \textbf{Bim}(A)(F(Y),F(Z))$ is an isomorphism. Indeed, this follows from a standard application of Ocneanu compactness \cite{MR1055708},\cite[Chapter 5]{MR1473221} to the subfactor $N\subseteq M$, where $M$ is the $\rm{II}_{1}$ factor obtained from completing $A$ in the GNS representation of its unique trace, and $N$ is the completion of the ``shifted" subalgebra $1_{X}\otimes A\subseteq A$ (see \cite[Theorem 5.1 and Section 6.1]{https://doi.org/10.48550/arxiv.2010.01067}. This is the only place where we will need self-duality of the tensor generator $X$, so that we can directly apply Ocneanu compactness. If $X$ were not self-dual, our tower of algebras would not be a standard $\lambda$-lattice (in the sense of \cite{MR1334479}) since it would lack Jones projections. In this case we would not be able to apply the theorems of subfactor theory directly, and instead would need to appy a more general version of Ocneanu compactness (for example, see \cite{doi:10.1142/S1793525323500589}).
\end{rem}

\subsubsection{Module categories and Q-systems}\label{modulecatandQsys}

If $\mathcal{C}$ is a unitary fusion category, recall that a unitary module category $\mathcal{M}$ is a finitely semisimple C*-category, together with a C*-bifunctor $\mathcal{C}\times \mathcal{M}\rightarrow \mathcal{M}$ and a coherent natural associator (see \cite[Chapter 7]{MR3242743} for definitions). By MacLane's coherence theorem, without loss of generality we can assume that our module category has trivial associator (i.e. is strict). To set some notation, for $Y\in \mathcal{C}$, $m\in \mathcal{M}$ we denote the image of the bifunctor by $Y\triangleright m$, and for $f\in \mathcal{C}(Y,Z)$, $g\in \mathcal{M}(m,n)$, we denote the image under the functor by $f\triangleright g\in \mathcal{M}(Y\triangleright m, Z\triangleright n)$. Strictness of the module category is expressed in this notation by 

$$Y\triangleright (Z\triangleright m)=(Y\otimes Z)\triangleright m$$

and

$$f\triangleright (g\triangleright h)=(f\otimes g)\triangleright h.$$

Associated to a $\mathcal{C}$-module category $\mathcal{M}$ is the unitary multifusion category of $\mathcal{C}$-module endofunctors $\text{End}_{\mathcal{C}}(\mathcal{M})$ (see \cite[Chapter 7]{MR3242743} or \cite{MR3933035}). This is fusion precisely when $\mathcal{M}$ is indecomposable. In this case, we define the dual fusion category $\mathcal{C}^{*}_{\mathcal{M}}:=\text{End}_{\mathcal{C}}(\mathcal{M})^{mp}$ where the superscript mp denotes the monoidal opposite category.

\begin{rem}
For a unitary fusion category $\mathcal{D}$, we recall that $\mathcal{D}^{mp}$ denotes the monoidal opposite category. The objects of $\mathcal{D}^{mp}$ are the same as the objects of $\mathcal{D}$, but we denote the version in $\mathcal{D}^{mp}$ with an mp superscript. Then $\mathcal{D}^{mp}(X^{mp}, Y^{mp}):=\mathcal{D}(X,Y)$, and the composition, $*$-structure and norm are the same as in $\mathcal{D}$. The difference from $\mathcal{D}$ is the monoidal product. The monoidal product is given by $X^{mp}\otimes Y^{mp}:=(Y\otimes X)^{mp}$, with the obvious extension to morphisms and the choice of associator.
\end{rem}

Let $\mathcal{M}$ be an indecomposable module category over the unitary fusion category $\mathcal{C}$. If we pick any $m\in \mathcal{M}$, then we can take the \textit{internal end} construction to obtain an algebra object $\underline{\text{End}}(m)\in \mathcal{C}$, called the \textit{internal endomorphism} of the object $m$. This algebra object is techincally only defined up to isomorphism, but there is a choice of representation which is a \textit{Q-system}. A Q-system is a unital, associative algebra object $Q$ such that the adjoint of the multiplication map is a right inverse for multiplication (or in other words, is an isometry) as well as a $Q$-$Q$ bimodule intertwiner from $Q$ to $Q\otimes Q$. Q-systems are C*-Frobenius algebra object in $\mathcal{C}$ (for detailed definitions and discussions, we refer the reader to the comprehensive references \cite{MR3308880,MR4419534,doi:10.1063/5.0071215,MR3933035}). We will describe the internal endorphism as an associative algebra object following \cite{MR3687214}, and refer the readers to the above-mentioned references for Q-system details. As an object,

$$\underline{\text{End}}(m)\cong \bigoplus_{Y\in \text{Irr}(\mathcal{C})} \mathcal{M}(Y\triangleright m, m)\otimes Y.$$

Here, if $V$ is a finite dimensional Hilbert space\footnote{We have intentionally not yet specified an inner product on $\mathcal{M}(Y\triangleright m, m)$, see Remark \ref{Qsystemsubtelty}} and $Y$ is a simple object in $\mathcal{C}$, $V\otimes Y$ represents the object $Y^{\oplus \text{dim}(V)}$, where we explicitly identify the multiplicity space $\mathcal{C}(Y, V\otimes Y)$ with $V$. $\text{Irr}(\mathcal{C})$ represents a fixed set of representatives of isomorphism classes of simple objects. Using this notation, for any (not necessarily simple) object $Z$ we have $$ Z\cong \bigoplus_{Y\in \text{Irr}(\mathcal{C})} \mathcal{C}(Y, Z)\otimes Y$$ 
\noindent by semisimplicity, where $\mathcal{C}(Y, Z)$ is equipped with the composition inner product, $$\langle f \ |\  g\rangle 1_{Y} =f^{*}\circ g.$$ In particular, for the object $Z:=V\otimes Y$, $V$ is identified with $\mathcal{C}(Y,Z)$. Thus 
$$\mathcal{C}(Y,\underline{\text{End}}(m))\cong \mathcal{M}(Y\triangleright m, m)$$ 

\noindent for any simple $Y$. We call this notation and perspective on expressing objects the \textit{Yoneda representation}. In the Yoneda representation,

\begin{align*}
Z_{1}\otimes Z_{2}&\cong \bigoplus_{Y\in \text{Irr}(\mathcal{C})} \mathcal{C}(Y, Z_{1}\otimes Z_{2})\otimes Y\\
&\cong \bigoplus _{Y,U,W} \left( \mathcal{C}(U,Z_{1})\otimes \mathcal{C}(Y,U\otimes W)\otimes \mathcal{C}(W,Z_{2})\right) \otimes Y
\end{align*}

\noindent One advantage of the Yoneda representation is that it makes morphisms in the category $\mathcal{C}$ expressible purely in terms of (ordinary) linear transformations. Indeed, if $W\cong \bigoplus_{Y\in \text{Irr}(\mathcal{C})} \mathcal{C}(Y, W)\otimes Y$, then by semisimplicity any morphism $f\in \mathcal{C}(Z,W)$ is uniquely determined by a family of linear transformations 

$$f\sim \{f_{Y}: \mathcal{C}(Y,Z)\rightarrow \mathcal{C}(Y,W)\ |\ Y\in \text{Irr}(\mathcal{C})\}.$$

\noindent Given and $f\in \mathcal{C}(Z,W)$, and $g\in \mathcal{C}(Y,Z)$, $f_{Y}(g):=f\circ g\in \mathcal{C}(Y,W)$. That fact that the correspondence $f\sim \{f_{Y}\}_{Y\in \text{Irr}(\mathcal{C})}$ uniquely determines $f$ follows immediately from the Yoneda lemma, hence the origin of the terminology for this picture.

Now, returning to internal end objects to define a multiplication morphism $\mu: \underline{\text{End}}(m)\otimes \underline{\text{End}}(m)\rightarrow \underline{\text{End}}(m)$, observe

$$\underline{\text{End}}(m)\otimes \underline{\text{End}}(m)\cong \bigoplus_{Y,W,Z\in \text{Irr}(\mathcal{C})} \left(\mathcal{M}(Y\triangleright m, m)\otimes \mathcal{C}(W,Y\otimes Z)\otimes \mathcal{M}(Z\triangleright m, m)\right)\otimes W$$

\noindent Thus for $W\in \text{Irr}(\mathcal{C})$, we define $$\mu_{W}: \bigoplus_{Y,Z\in \text{Irr}(\mathcal{C})}\mathcal{M}(Y\triangleright m, m)\otimes \mathcal{C}(W,Y\otimes Z)\otimes \mathcal{M}(Z\triangleright m, m)\rightarrow \mathcal{M}(W\triangleright m, m)$$

\noindent on homogeneous tensors $\xi\otimes f \otimes \nu\in \mathcal{M}(Y\triangleright m, m)\otimes \mathcal{C}(W,Y\otimes Z)\otimes \mathcal{M}(Z\triangleright m, m)$ by 

$$\mu_{W}(\xi\otimes f \otimes \nu):= \xi \circ (1_{Y}\otimes \nu )\circ (f\triangleright 1_{m}).$$

\noindent Then $\mu:=\{\mu_{W}\}$ equips $\underline{\text{End}}(m)$ with the structure of an associative algebra object, and in fact a Q-system (see Remark below).

\begin{rem}\label{Qsystemsubtelty} There is one subtlety that we are sweeping under the rug in this discussion, namely we have not specified the Hilbert space structures on the $\mathcal{M}(Y\triangleright m, m)$. We need to do this to actually pin down a morphism for the object in $\mathcal{C}$ rather than the version of $\mathcal{C}$ that forgets the dagger structure. For specifying an algebra structure on $\underline{\text{End}}(m)$ this is not relevant, since we can apply the definitions above to obtain isomorphic algebra structures for any choice. However, the definition of Q-system requires constraints on the adjoint of the multiplication map, and this is sensitive to which Hilbert space structures we put on the multiplicity spaces. A choice can always be made making this into a Q-system (see \cite[Theorem 4.6]{https://doi.org/10.48550/arxiv.2207.11854} for this level of generality) that is essentially a choice of a unitary module trace, but an in-depth discussion would take us too far afield. We have chosen to not include this discussion since we are satisfied considering only the algebra structure on $\underline{\text{End}}(m)$, which are all isomorphic independently of the Hilbert space structures we put on the multiplicity spaces.
\end{rem}

We have the following unitary version of Ostrik's theorem:

\begin{thm}\label{unitaryOstrik}(\cite[Theorem 4.6]{doi:10.1063/5.0071215},\ c.f. \cite{MR1976459,MR3933035}). Let $\mathcal{C}$ be a unitary fusion category and $\mathcal{M}$ be an indecomposable unitary module category. Let $m\in \mathcal{M}$ and $Q:=\underline{\text{End}}(m)$.

\begin{enumerate}
\item 
The module category $\mathcal{M}$ is equivalent to the category $\mathcal{C}_{Q}$ of right Q-modules internal to $\mathcal{C}$.
\item 
The dual category $\mathcal{C}^{*}_{\mathcal{M}}$ is equivalent to the unitary fusion category $_{Q} \mathcal{C}_{Q}$ of Q-Q bimodules internal to $\mathcal{C}$.
\end{enumerate}

\end{thm}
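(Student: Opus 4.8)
The statement is the unitary form of Ostrik's classification of module categories, so the plan is to prove (1) via the internal Hom and then derive (2) by identifying $\mathcal{C}$-module endofunctors with $Q$-$Q$ bimodules, treating the C*-structure as the genuinely new ingredient over the classical (non-unitary) argument.

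For (1), I would define the functor $F:=\underline{\text{Hom}}(m,-):\mathcal{M}\to\mathcal{C}_Q$, where $\underline{\text{Hom}}(m,n)\in\mathcal{C}$ is the internal Hom determined by the natural isomorphism $\mathcal{C}(Y,\underline{\text{Hom}}(m,n))\cong\mathcal{M}(Y\triangleright m,n)$; by semisimplicity this is the object $\bigoplus_{Y\in\Irr(\mathcal{C})}\mathcal{M}(Y\triangleright m,n)\otimes Y$, the same Yoneda-type formula used above for $Q=\underline{\text{End}}(m)=\underline{\text{Hom}}(m,m)$. Internal composition $\underline{\text{Hom}}(m,n)\otimes\underline{\text{Hom}}(m,m)\to\underline{\text{Hom}}(m,n)$ makes $F(n)$ a right $Q$-module, and $F$ is visibly functorial via postcomposition. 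Full faithfulness follows from the defining adjunction, which identifies $\mathcal{M}(n,n')$ with right $Q$-module maps $\underline{\text{Hom}}(m,n)\to\underline{\text{Hom}}(m,n')$, reducing after decomposition into simples to a dimension count on multiplicity spaces.

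Essential surjectivity is where indecomposability enters. Since $\mathcal{M}$ is indecomposable and $m\neq 0$, the object $m$ generates $\mathcal{M}$ under the $\mathcal{C}$-action and direct summands. Using the projection formula $\underline{\text{Hom}}(m,Y\triangleright m)\cong Y\otimes Q$ one sees that $F(Y\triangleright m)$ is the free right $Q$-module on $Y$; as every object of $\mathcal{C}_Q$ is a summand of a free module, and $F$, being fully faithful between semisimple (hence idempotent-complete) categories, has essential image closed under summands, $F$ is essentially surjective. This gives the module equivalence $\mathcal{M}\simeq\mathcal{C}_Q$ of part (1).

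For (2), I would transport the problem across the equivalence of (1) and use the standard identification $\End_{\mathcal{C}}(\mathcal{M})\simeq\End_{\mathcal{C}}(\mathcal{C}_Q)$, where a $\mathcal{C}$-module endofunctor $G$ is reconstructed from the $Q$-$Q$ bimodule $G(Q)$ and is naturally isomorphic to $(-)\otimes_Q G(Q)$, with $\otimes_Q$ the relative tensor product internal to $\mathcal{C}$, which for a Q-system is realized concretely as the range of a canonical projection. The assignment $B\mapsto(-)\otimes_Q B$ is an equivalence ${}_Q\mathcal{C}_Q\to\End_{\mathcal{C}}(\mathcal{M})$, and composing functors gives $\big((-)\otimes_Q B_1\big)\circ\big((-)\otimes_Q B_2\big)\cong(-)\otimes_Q(B_2\otimes_Q B_1)$, so the tensor order is reversed. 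Hence $\End_{\mathcal{C}}(\mathcal{M})\simeq({}_Q\mathcal{C}_Q)^{mp}$ as tensor categories, and taking monoidal opposites yields $\mathcal{C}^{*}_{\mathcal{M}}=\End_{\mathcal{C}}(\mathcal{M})^{mp}\simeq{}_Q\mathcal{C}_Q$. The main obstacle is not this algebra, which is Ostrik's theorem, but the unitary bookkeeping flagged in Remark \ref{Qsystemsubtelty}: one must equip the multiplicity spaces $\mathcal{M}(Y\triangleright m,m)$ with inner products making $Q$ a genuine \emph{Q-system} (a C*-Frobenius algebra, equivalently a choice of unitary module trace), verify that $\underline{\text{Hom}}(m,-)$ is a \emph{dagger} module functor so that the adjunction isomorphism is unitary, and check that $\otimes_Q$ over a Q-system inherits a canonical inner product under which ${}_Q\mathcal{C}_Q$ is a \emph{unitary} fusion category and all the equivalences above are unitary. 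These analytic points are precisely what the cited unitary references supply, so I would invoke their Q-system and module-trace machinery for them while keeping the structural skeleton as above.
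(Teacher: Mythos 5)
First, a point of comparison: the paper does not prove Theorem \ref{unitaryOstrik} at all. The result is imported wholesale from the literature (the unitary statement from \cite[Theorem 4.6]{doi:10.1063/5.0071215}, the classical one from \cite{MR1976459,MR3933035}), and the surrounding text only unpacks the statement itself (the Yoneda representation, the algebra structure on $\underline{\text{End}}(m)$, and the caveat of Remark \ref{Qsystemsubtelty}). So there is no in-paper proof to measure yours against; your sketch follows the standard Ostrik/EGNO strategy --- internal homs and free modules for part (1), the identification of $\mathcal{C}$-module endofunctors of $\mathcal{C}_Q$ with $Q$-$Q$ bimodules (with the $mp$ reversal under composition) for part (2) --- and that is indeed the skeleton of the proofs in the cited references.

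Two substantive criticisms of the sketch itself. First, your full faithfulness step has a genuine gap: the defining adjunction $\mathcal{C}(Y,\underline{\text{Hom}}(m,n))\cong\mathcal{M}(Y\triangleright m,n)$ does not by itself identify $\mathcal{M}(n,n')$ with right $Q$-module maps $F(n)\rightarrow F(n')$, and your argument for this step never invokes indecomposability --- yet the claim is false without it. Concretely, take $\mathcal{C}=\Vec$, $\mathcal{M}=\Vec\oplus\Vec$, and $m=(\mathbbm{C},0)$: then $Q=\mathbbm{1}$, $\mathcal{C}_Q=\Vec$, and $F=\underline{\text{Hom}}(m,-)$ annihilates the second summand, so $F$ is not faithful. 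Generation of $\mathcal{M}$ by $m$ (which is where indecomposability enters) is needed already for faithfulness, not only for the essential surjectivity step where you use it; any correct argument must route full faithfulness through that generation property (e.g.\ via the adjoint functor $-\otimes_Q m$ and showing the unit and counit are isomorphisms). Second, the genuinely new content of the \emph{unitary} theorem --- that the multiplicity spaces $\mathcal{M}(Y\triangleright m,m)$ admit inner products (a unitary module trace) making $Q$ an honest Q-system, and that the equivalences in (1) and (2) can be promoted to unitary equivalences onto the unitary categories $\mathcal{C}_Q$ and ${}_Q\mathcal{C}_Q$ --- is exactly the part you defer to the cited references. Given that the paper itself only cites the result this deferral is defensible, but as a standalone proof your write-up establishes only the classical statement modulo the gap above, and leaves unproved precisely what distinguishes Theorem \ref{unitaryOstrik} from Ostrik's original theorem.
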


\begin{ex}\label{canmodcenter} \textbf{The canonical module and $\mathcal{Z}(\mathcal{C})$}. Let $\mathcal{C}$ be a unitary fusion category, and consider $\mathcal{E}:=\mathcal{C}\boxtimes \mathcal{C}^{mp}$, where $\boxtimes$ denotes the Deligne product of fusion categories \cite{MR3242743}. Then the C*-category $\mathcal{C}$ canonical carries the structure of a left $\mathcal{E}$-module category, with action

$$X\boxtimes Y^{mp}\triangleright Z:=X\otimes Z\otimes Y.$$

It is well known that the dual category $\mathcal{E}^{*}_{\mathcal{C}}\cong \mathcal{Z}(\mathcal{C})$, where $\mathcal{Z}(\mathcal{C})$ denotes the \textit{Drinfeld center} of $\mathcal{C}$ \cite{MR1966525,MR3242743}.

Recall that if $\cC$ is a unitary fusion category, its Drinfeld center $\mathcal{Z}(\cC)$ is a braided unitary fusion category that controls $\cC$'s Morita theory \cite{MR3242743}. We will follow the definition conventions of \cite{MR1966525}, to which we refer the reader for further details on $\mathcal{Z}(\cC)$. Briefly, objects in $\mathcal{Z}(\cC)$ consist of pairs $(Z,\sigma)$, where $Z\in \text{Obj}(\cC)$ and $\sigma=\{\sigma_{Z, X}:Z\otimes X\cong X\otimes Z\ |\ X\in \text{Obj}(\cC)\}$ is a family of unitary isomorphisms, natural in $X$, satisfying the hexagon relation (in $X$). The family $\sigma$ is called a unitary half-braiding. Morphisms $(Z,\sigma)\rightarrow (W,\delta)$ are morphisms $f:Z\rightarrow W$ in $\cD$ that intertwine the half-braidings.

As a consequence of the unitary version of Ostrik's theorem mentioned above, for any object $Y\in \mathcal{C}$, we obtain a Q-system $Q_{Y}:=\underline{\text{End}}(Y)\in \mathcal{E}=\mathcal{C}\boxtimes \mathcal{C}^{mp}$, such that $\mathcal{Z}(\mathcal{C})\cong _{Q_{Y}} \mathcal{E}_{Q_{Y}}$. The object $Q_{\mathbbm{1}}$ is sometimes called the symmetric enveloping algebra object, or the Longo-Rehren algebra.

\end{ex}

\subsubsection{Realization}\label{SectionRealization}

In this section, we tie together actions and Q-systems via the \textit{realization construction}. For the rest of this section, let $\mathcal{C}$ be a unitary fusion category, $A$ a unital simple separable C*-algebra, and suppose we are given a \textit{fully faithful action} $F:\mathcal{C}\rightarrow \textbf{Bim}(A)$. Then for any $Q$-system, we can construct the realization C*-algebra $|Q|$ \cite[Section 4.1]{MR4419534}. This is a unital C*-algebra containing $A$, and comes equipped with a faithful conditional expectation $E_{A}: |Q|\rightarrow A$ with finite Watatani index. This is simply a reflection of the fact that if $A$ is a simple separable C*-algebra, then Q-systems in the C*-tensor category $\textbf{Bim}(A)$ simply \textit{are} finite Watatani index extensions of $A$, thus any $Q$-system in $\mathcal{C}$ can simply be ``pushed forward" to obtain a finite index extension.

In particular, for $Q\cong \bigoplus_{Y\in \text{Irr}(\mathcal{C})} \mathcal{C}(Y, Q)\otimes Y$, then

$$|Q|=F(Q)\cong \bigoplus_{Y\in \text{Irr}(\mathcal{C})} \mathcal{C}(Y, Q)\otimes F(Y)$$

\noindent where now the $\otimes$ is literally the $\otimes$ of vector spaces. The associative product on the C*-algebra $|Q|$ is the pushforward under $F$ of the algebra multiplication morphism for $Q$.

In this situation, we have a linear restriction functor (which is not monoidal in general) $\textbf{Res}: \textbf{Bim}(|Q|)\rightarrow \textbf{Bim}(A)$ defined as follows:

\begin{enumerate}
\item 
For $X\in \textbf{Bim}(|Q|)$, consider $X$ as a vector space, where left and right $A$-actions just the restrictions of $|Q|$ actions. The right $A$-valued inner product is given $\langle \xi\ |\ \nu\rangle_{A}:=E_{A}(\langle \xi\ |\ \nu \rangle_{|Q|})$.
\item 
For a bimodule intertwiner $f$, $\textbf{Res}(f)=f$ is the same linear map thought of as an $A$-bimodule intertwiner.
\end{enumerate}

We note that even though $\langle \xi\ |\ \nu\rangle_{A}\ne \langle \xi\ |\ \nu \rangle_{|Q|}$, since $E_{A}$ has finite Watatani index these inner products induce the same topology on $X$, so $\textbf{Res}(X)$ is indeed a Hilbert module without needing to complete. As a direct corollary of the main result of \cite{MR4419534} (that the 2-category of C*-algebras is Q-system complete), we have the following proposition:

\begin{thm}\label{realizationthm}
Let $\textbf{Bim}(|Q|,\ \mathcal{C})$ be the full tensor subcategory of $\textbf{Bim}(|Q|)$ spanned by objects $X$ such that $\textbf{Res}(X)\cong F(Y)$ for some $Y\in \mathcal{C}$. Then $\textbf{Bim}(|Q|, \mathcal{C})\cong\ _{Q}\mathcal{C}_{Q}$ as C*-tensor categories.
\end{thm}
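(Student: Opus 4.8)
The plan is to deduce this directly from Q-system completeness of the C*-2-category $\rCorr$ \cite{MR4419534}, applied to the Q-system $F(Q)\in\textbf{Bim}(A)$ whose realization is $B:=|Q|$. Since $A$ is simple and separable, $F(Q)$ is an honest finite-Watatani-index extension $A\subseteq B$, and the content of Q-system completeness is precisely that realization and restriction furnish mutually inverse unitary equivalences between the category of internal $F(Q)$-$F(Q)$ bimodule objects in $\textbf{Bim}(A)$ and honest $B$-$B$ correspondences, compatibly with horizontal composition (so that the internal relative tensor product $\otimes_{F(Q)}$ matches $\boxtimes_{B}$). Restricting to dualizable parts, realization therefore gives a unitary tensor equivalence from $F(Q)$-$F(Q)$ bimodules internal to $\textbf{Bim}(A)$ with dualizable underlying object onto the full tensor subcategory of $\textbf{Bim}(B)$ consisting of those $X$ whose restriction $\textbf{Res}(X)$ is dualizable in $\textbf{Bim}(A)$.

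First I would use full faithfulness of $F$ to identify ${}_{Q}\mathcal{C}_{Q}$ with a full subcategory of these internal bimodules. Because $F$ is a fully faithful unitary tensor functor, it embeds $\mathcal{C}$ as a full replete tensor subcategory of $\textbf{Bim}(A)$, all of whose objects are dualizable ($\mathcal{C}$ being fusion), and it carries the Q-system $Q$ to $F(Q)$. Hence a $Q$-$Q$ bimodule structure internal to $\mathcal{C}$ is the same datum as an $F(Q)$-$F(Q)$ bimodule structure in $\textbf{Bim}(A)$ whose underlying bimodule lies in the image of $F$; that is, ${}_{Q}\mathcal{C}_{Q}$ is exactly the full subcategory of ${}_{F(Q)}\textbf{Bim}(A)_{F(Q)}$ on objects whose underlying $A$-$A$ bimodule is isomorphic to some $F(Y)$.

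Next I would match the two subcategory conditions under the equivalence. Under realization the internal $Q$-$Q$ bimodule corresponding to $X\in\textbf{Bim}(|Q|)$ has underlying $A$-$A$ bimodule equal to $\textbf{Res}(X)$, the $Q$-actions being induced from the $B$-actions via the inclusion $A\subseteq B$. Consequently the defining condition $\textbf{Res}(X)\cong F(Y)$ for some $Y\in\mathcal{C}$ holds if and only if the corresponding internal bimodule has underlying object in the image of $F$, i.e. lies in ${}_{Q}\mathcal{C}_{Q}$ by the identification of the previous step. Therefore the equivalence supplied by Q-system completeness restricts to a unitary tensor equivalence ${}_{Q}\mathcal{C}_{Q}\cong\textbf{Bim}(|Q|,\mathcal{C})$, which is the assertion.

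The main obstacle is bookkeeping rather than a new idea: one must check that the subcategory $\textbf{Bim}(|Q|,\mathcal{C})$ cut out by the restriction condition is precisely the essential image of ${}_{Q}\mathcal{C}_{Q}$ under realization, that a $B$-$B$ correspondence is detected by its restriction to $A$ in the expected way, and that the condition is stable under $\boxtimes_{B}$ so that $\textbf{Bim}(|Q|,\mathcal{C})$ really is a tensor subcategory. All of these follow from full faithfulness of $F$ together with realization and restriction being inverse equivalences; the one point that genuinely requires the \emph{2-categorical} (and not merely 1-categorical) form of Q-system completeness is the compatibility of monoidal structures, namely that $\boxtimes_{B}$ corresponds to the internal relative tensor product over $Q$ inside $\mathcal{C}$.
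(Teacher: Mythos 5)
Your proposal is correct and is exactly the paper's argument: the paper states this theorem as a direct corollary of Q-system completeness of the C*-2-category of correspondences \cite{MR4419534}, with no further proof given, and your write-up simply fills in the bookkeeping (identifying ${}_{Q}\mathcal{C}_{Q}$ with internal $F(Q)$-$F(Q)$ bimodules via full faithfulness of $F$, and matching the restriction condition under the completeness equivalence) that the paper leaves implicit.
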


\subsection{DHR categories for fusion spin chains}

In this section, let $\mathcal{D}$ be a unitary fusion category, and let $X$ be a strongly tensor generating self-dual object. Let $F\mapsto A_{F}$ be the net of algebras on $\mathbbm{Z}$ as in Proposition \ref{fusionspinchain}. Our goal in this section is to analyze $\textbf{DHR}(A)$. We will use the machinery of quantum symmetries described in the previous three subsections to prove that $\textbf{DHR}(A)\cong \mathcal{Z}(\mathcal{D})$ as unitary braided tensor categories.

First, fix any interval $[a,b]\subseteq \mathbbm{Z}$. Notice that the algebra $A_{(\infty,a)}$ by definition is precisely the algebra from the standard action of $\mathcal{D}$ built from $X$ (Example \ref{standardactions}). This gives us a fully faithful unitary tensor functor $L^{a}:\mathcal{D}\rightarrow \textbf{Bim}(A_{(-\infty,a)})$. Similarly, we see that $A_{(b,\infty)}$ is precisely the algebra obtained from the standard action of $\mathcal{D}^{mp}$ with object $X^{mp}$. Thus we have a fully faithful unitary tensor functor $R^{b}: \mathcal{D}^{mp}\rightarrow \textbf{Bim}(A_{(b,\infty)})$. Putting this together, we obtain a fully faithful unitary tensor functor

$$L^{a}\boxtimes R^{b}: \mathcal{D}\boxtimes \mathcal{D}^{mp}\rightarrow \textbf{Bim}(A_{(-\infty,a)}\otimes A_{(b,\infty)}).$$

Here, $L^{a}\boxtimes R^{b}(Y\boxtimes Z^{mp})=L^{a}(Y)\otimes R^{b}(Z^{mp})$, where the tensor product is simply a linear tensor product, and this space is equipped with the obvious structure of an $A_{(-\infty,a)}\otimes A_{(b,\infty)}$ algebraic bimodule. We then complete this with respect to the natural $A_{(-\infty,a)}\otimes A_{(b,\infty)}$-valued inner product. We also remark that the symbol $\otimes$ in $A_{(-\infty,a)}\otimes A_{(b,\infty)}$ is unambiguous, since the two algebras are AF, hence nuclear, as C*-algebras. Since $L^{a}$ and $R^{b}$ are fully-faithful, so is $L^{a}\boxtimes R^{b}$.

Now consider the indecomposable $\mathcal{D}\boxtimes \mathcal{D}^{mp}$-module category $\mathcal{D}$ as described in Example \ref{canmodcenter}. Pick the object $m:=X^{b-a+1}$, and set $Q_{a,b}:=\underline{\text{End}}(m)$ as in Section \ref{modulecatandQsys}. Note that inside $A$, we have a canonical embedding $\Delta_{a,b}:A_{(-\infty,a)}\otimes A_{(b,\infty)}\hookrightarrow A$, given by $$\Delta_{a,b}(f\otimes g):=f\otimes 1_{X^{b-a+1}}\otimes g\in A.$$

One of the primary purposes of Section \ref{QuantumSymms} is to clearly state the following theorem. It is a version of standard results on the symmetric enveloping inclusion/ asmyptotic inclusion/ Longo-Rehren inclusion from subfactor theory (see \cite{MR1302385, MR1642584, MR1332979} respectively). This result is certainly well known to experts, but we could not find it precisely stated in the literature in the form we need. The closest statement to the following that we know of is in \cite[Section 6]{2021arXiv211106378C}.

\begin{thm}\label{TheDrinfeldCenter}
For any interval $[a,b]$, there is an isomoprhism of C*-algebras $|Q_{a,b}|\cong A$ which restricts to $\Delta_{a,b}$ on $A_{(-\infty,a)}\otimes A_{(b,\infty)}$. In particular, we have a fully faithful action $F_{a,b}:\mathcal{Z}(\mathcal{D})\rightarrow \textbf{Bim}(A)$ whose image is characterized as the bimodules of $A$ whose restriction to the subalgebra $A_{(-\infty,a)}\otimes A_{(b,\infty)}$ lie in the image $L^{a}\boxtimes R^{b}(\mathcal{D}\boxtimes \mathcal{D}^{mp})$.
\end{thm}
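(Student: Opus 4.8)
The plan is to produce an explicit $*$-isomorphism $\Phi\colon |Q_{a,b}|\to A$ by recognizing $A$ as the realization of the Q-system $Q_{a,b}=\underline{\text{End}}(m)$ over the subalgebra $B:=A_{(-\infty,a)}\otimes A_{(b,\infty)}$, and then to deduce the ``in particular'' by transporting the abstract structure supplied by Theorems \ref{unitaryOstrik} and \ref{realizationthm}. First I would record the setup: $A_{(-\infty,a)}$ and $A_{(b,\infty)}$ are exactly the standard-action algebras of $\mathcal{D}$ and $\mathcal{D}^{mp}$ (Example \ref{standardactions}), hence simple with unique trace, so $B$ is simple; the functor $L^{a}\boxtimes R^{b}\colon \mathcal{E}=\mathcal{D}\boxtimes\mathcal{D}^{mp}\to\textbf{Bim}(B)$ is the fully faithful standard action; and $\Delta_{a,b}\colon B\hookrightarrow A$ is a finite-index inclusion whose relative commutant is the middle block $A_{[a,b]}=\mathcal{D}(X^{b-a+1},X^{b-a+1})$ by the net structure.

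The heart of the argument is a bimodule identification carried out at finite stages and passed to the colimit. Writing $Y=Z=X^{n}$, and using the realization formula $|Q_{a,b}|=\bigoplus_{U,V\in\text{Irr}(\mathcal{D})}\mathcal{D}(U\otimes m\otimes V,m)\otimes\big(L^{a}(U)\otimes R^{b}(V^{mp})\big)$ together with the stagewise models $L^{a}_{n}(U)=\mathcal{D}(X^{n},X^{n}\otimes U)$ and $R^{b}_{n}(V^{mp})=\mathcal{D}(X^{n},V\otimes X^{n})$, I would define
\[
\Phi_{n}(\phi\otimes\alpha\otimes\beta):=(1_{Y}\otimes\phi\otimes 1_{Z})\circ(\alpha\otimes 1_{m}\otimes\beta)\in\text{End}(Y\otimes m\otimes Z)=A_{[a-n,b+n]},
\]
for $\phi\in\mathcal{D}(U\otimes m\otimes V,m)$, $\alpha\in\mathcal{D}(Y,Y\otimes U)$, $\beta\in\mathcal{D}(Z,V\otimes Z)$. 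That each $\Phi_{n}$ is a linear isomorphism is the semisimplicity decomposition of $\text{End}(Y\otimes m\otimes Z)$ into intermediate channels $U$ emanating from the left block and $V$ from the right block; a dimension count via Frobenius reciprocity confirms the two sides agree. I would then check that $\Phi_{n}$ intertwines the standard connecting maps on the left (adding a site at the far left and far right, i.e. $\alpha\mapsto 1_{X}\otimes\alpha$, $\beta\mapsto\beta\otimes 1_{X}$) with the net inclusions $A_{[a-n,b+n]}\hookrightarrow A_{[a-n-1,b+n+1]}$, $x\mapsto 1_{X}\otimes x\otimes 1_{X}$, on the right; by strictness this is immediate, so the $\Phi_{n}$ assemble into a linear isomorphism $\Phi\colon|Q_{a,b}|\to A$.

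Next I would verify that $\Phi$ is multiplicative and $*$-preserving and that it restricts to $\Delta_{a,b}$. The restriction is clear: the unit $\mathbb{1}\to Q_{a,b}$ selects $1_{m}\in\mathcal{D}(m,m)$, and $\Phi(1_{m}\otimes\alpha\otimes\beta)=\alpha\otimes 1_{m}\otimes\beta=\Delta_{a,b}(\alpha\otimes\beta)$. Multiplicativity is the main obstacle: one must match the realization product on $|Q_{a,b}|$ — built from the tensorator $F^{2}$ of the standard action followed by the internal-end multiplication $\mu_{W}(\xi\otimes f\otimes\nu)=\xi\circ(1\otimes\nu)\circ(f\triangleright 1_{m})$ — with ordinary composition of endomorphisms in $A$. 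This is a diagrammatic computation in $\mathcal{D}$ encoding associativity of composition; the explicit form of $\mu_{W}$ makes the two expressions coincide on basis elements, but carefully bookkeeping $F^{2}$ and the identifications $L^{a}_{n}(U)\otimes R^{b}_{n}(V^{mp})$ across the middle block $m$ is where the real work lies. Equivalently, one may bypass the explicit product check by endowing $A$ with its canonical Q-system structure in $\textbf{Bim}(B)$ (unit $=\Delta_{a,b}$, multiplication $=$ product in $A$, Frobenius form $=$ the trace-preserving conditional expectation $E\colon A\to B$), transporting it through the fully faithful $L^{a}\boxtimes R^{b}$ to a Q-system on the object underlying $\underline{\text{End}}(m)$, and identifying it with $Q_{a,b}$ via the universal property of the internal end; uniqueness of the realization of a Q-system then gives $A\cong|Q_{a,b}|$.

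Finally, the ``in particular'' is formal. By Theorem \ref{realizationthm}, $\textbf{Bim}(|Q_{a,b}|,\mathcal{E})\cong{}_{Q_{a,b}}\mathcal{E}_{Q_{a,b}}$, and since $X$ is strongly tensor generating the object $m=X^{b-a+1}$ generates the canonical $\mathcal{E}$-module $\mathcal{D}$, so the unitary Ostrik theorem (Theorem \ref{unitaryOstrik}) together with Example \ref{canmodcenter} yields ${}_{Q_{a,b}}\mathcal{E}_{Q_{a,b}}\cong\mathcal{E}^{*}_{\mathcal{D}}\cong\mathcal{Z}(\mathcal{D})$. Composing with the isomorphism $|Q_{a,b}|\cong A$ produces the fully faithful action $F_{a,b}\colon\mathcal{Z}(\mathcal{D})\to\textbf{Bim}(A)$, and its image is exactly $\textbf{Bim}(|Q_{a,b}|,\mathcal{E})$, which by definition consists of those bimodules whose restriction via $\textbf{Res}$ — that is, restriction of the $A$-action to $B$ — lands in the image of $L^{a}\boxtimes R^{b}$, giving the stated characterization.
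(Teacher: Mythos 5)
Your proposal follows essentially the same route as the paper: your stagewise map $\Phi_{n}$ is literally the paper's $\pi_{n}$ (with $f\otimes g\otimes h$ relabeled as $\phi\otimes\alpha\otimes\beta$ and $Y=Z=X^{n}$), assembled over the same inductive limit using the same realization formula, and the ``in particular'' clause is deduced formally from Theorems \ref{unitaryOstrik} and \ref{realizationthm} together with Example \ref{canmodcenter}, exactly as in the paper. The only differences are expository --- you flag the multiplicativity check explicitly (which the paper dismisses as ``easy to see'') and sketch an optional bypass via uniqueness of Q-system realizations --- but these do not change the substance of the argument.
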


\begin{proof}
It suffices to construct the isomorphism $|Q_{a,b}|\cong A$. The rest follows immediately from Theorems \ref{unitaryOstrik} and \ref{realizationthm}, where $F_{a,b}$ is the identification of $\mathcal{Z}(\mathcal{D})$ with $\textbf{Bim}(|Q_{a,b}|, \mathcal{D}\boxtimes \mathcal{D}^{mp})$. 

Now, using the description of internal endomorphisms and realizations from Sections \ref{modulecatandQsys} and \ref{SectionRealization}, we see that 

$$|Q_{a,b}|\cong \bigoplus_{Y,Z\in \text{Irr}(\mathcal{D})} \mathcal{D}(Y\otimes X^{b-a+1}\otimes Z, X^{b-a+1})\otimes L^{a}(Y)\otimes R^{b}(Z^{mp}).$$

\noindent But by construction of standard actions, $L^{a}\boxtimes R^{b}$ is an inductive limit of the actions $L^{a}_{n}\boxtimes R^{b}_{n}:\mathcal{D}\boxtimes \mathcal{D}^{mp}\rightarrow \textbf{Bim}(A_{[a-n,a)}\otimes A_{(b,b+n]})$. By \cite[Theorem 4.6]{https://doi.org/10.48550/arxiv.2207.11854}, $|Q_{a,b}|\cong \varinjlim |Q_{a,b}|_{n}$, where the latter denotes the realization with respect to the $L^{a}_{n}\boxtimes R^{b}_{n}$ functors. 

To build the desired isomorphism, we will construct an isomorphism 

$$\pi_{n}:|Q_{a,b}|_{n}\cong A_{[a-n,b+n]}.$$

\noindent Note that $$L^{a}_{n}(Y)=\mathcal{D}(X^{n}, X^{n}\otimes Y)$$ and 

$$R^{b}_{n}(Z^{mp})=\mathcal{D}^{mp}((X^{mp})^{\otimes n}, (X^{mp})^{\otimes n}\otimes Z^{mp})=\mathcal{D}(X^{n}, Z\otimes X^{n}).$$

\bigskip

\noindent For $f\otimes g\otimes h\in \mathcal{D}(Y\otimes X^{b-a+1}\otimes Z, X^{b-a+1})\otimes L^{a}_{n}(Y)\otimes R^{b}_{n}(Z^{mp})$, define $\pi_{n}$ by

$$\pi_{n}(f\otimes g\otimes h):=(1_{X^{n}}\otimes f\otimes 1_{X^{n}})\circ \left(g\otimes 1_{X^{b-a+1}}\otimes h\right).$$
\bigskip

Tracking through the definitions, it is easy to see this is an isomorphism of C*-algebras which is compatible with the local inclusions in the inductive limit. Therefore, it extends to the desired $\pi$. Clearly this restricts to $\Delta_{a,b}$ on $A_{(-\infty, a)}\otimes A_{(b,\infty)}$.

\end{proof}

We note that the above theorem furnishes us with a conditional expectation 

$$E_{A_{(-\infty, a)}\otimes A_{(b,\infty)}}:A\rightarrow A_{(-\infty, a)}\otimes A_{(b,\infty)}$$ 

\noindent by transporting the conditional expectation from the realization $|Q_{a,b}|$. Another immediate consequence of the above theorem is algebraic Haag duality for fusion categorical spin chains.

\begin{prop}\label{Haagduality}
If $X$ strongly tensor generates the fusion category $\mathcal{D}$, the net $A$ constructed above satisfies algebraic Haag duality and uniformly bounded generation.
\end{prop}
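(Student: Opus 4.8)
The plan is to read off both assertions from Theorem~\ref{TheDrinfeldCenter}, which presents $A$ as the realization $|Q_{a,b}|$ over the subalgebra $A_{(-\infty,a)}\otimes A_{(b,\infty)}$, and to treat the bounded–generation claim by a separate combinatorial argument on the tower $\operatorname{End}_{\mathcal{D}}(X^{n})$.

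For algebraic Haag duality, fix a ball $F=[a,b]\subseteq\mathbbm{Z}$. Its complement is $F^{c}=\{x<a\}\cup\{x>b\}$, so by definition $A_{F^{c}}=\text{C}^{*}\langle A_{(-\infty,a)},A_{(b,\infty)}\rangle$; since these two commuting subalgebras are AF, hence nuclear, $A_{F^{c}}\cong B:=A_{(-\infty,a)}\otimes A_{(b,\infty)}$. The inclusion $A_{F}\subseteq Z_{A}(A_{F^{c}})$ is immediate from locality, as $F\cap F^{c}=\varnothing$, so the content is the reverse inclusion, which I would get by computing the relative commutant $B'\cap A$ in the realization picture. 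By Theorem~\ref{TheDrinfeldCenter}, $A\cong|Q_{a,b}|=(L^{a}\boxtimes R^{b})(Q_{a,b})$ with $B$ sitting inside as the base algebra and acting by left/right multiplication; hence an element of $|Q_{a,b}|$ lies in $B'\cap A$ exactly when it is a $B$-central vector of the bimodule $(L^{a}\boxtimes R^{b})(Q_{a,b})$. The $B$-central vectors form $\operatorname{Hom}_{B\text{-}B}\big((L^{a}\boxtimes R^{b})(\mathbbm{1}),(L^{a}\boxtimes R^{b})(Q_{a,b})\big)$, which, by full faithfulness of $L^{a}\boxtimes R^{b}$, equals $\mathcal{E}(\mathbbm{1},Q_{a,b})$ with $\mathcal{E}=\mathcal{D}\boxtimes\mathcal{D}^{mp}$. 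Using the defining property of the internal end (Section~\ref{modulecatandQsys}) and the module structure of Example~\ref{canmodcenter}, with $m=X^{b-a+1}$,
\[
\mathcal{E}(\mathbbm{1},Q_{a,b})=\mathcal{E}\big(\mathbbm{1},\underline{\text{End}}(m)\big)\cong\mathcal{M}(\mathbbm{1}\triangleright m,m)=\mathcal{D}(X^{b-a+1},X^{b-a+1})=A_{[a,b]}.
\]
Tracing this identification back through the explicit isomorphism $\pi_{n}$ of Theorem~\ref{TheDrinfeldCenter} (taking $Y=Z=\mathbbm{1}$, so $g=h=1_{X^{n}}$) shows the $B$-central vectors are precisely the local operators $1_{X^{n}}\otimes f\otimes 1_{X^{n}}$ with $f\in A_{[a,b]}$; thus $Z_{A}(A_{F^{c}})=A_{F}$. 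Since this holds for every ball with no enlargement, the net satisfies algebraic Haag duality with $D=0$.

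For bounded generation, since $A_{[a,b]}=\mathcal{D}(X^{b-a+1},X^{b-a+1})$ and all inclusions are by padding with identities, it suffices to produce a fixed $T$ for which every $\operatorname{End}_{\mathcal{D}}(X^{n})$ is generated by the subalgebras supported on at most $T$ consecutive sites. I would argue this in the path model for the multimatrix algebra $\operatorname{End}_{\mathcal{D}}(X^{n})$: a complete system of matrix units is indexed by pairs of fusion paths $\mathbbm{1}=Y_{0},Y_{1},\dots,Y_{n}=Z$ through $\operatorname{Irr}(\mathcal{D})$ along the fusion graph of $X$, and a matrix unit whose two paths differ only in interior vertices $Y_{i},\dots,Y_{i+k}$ is supported on the sites $i,\dots,i+k+1$, hence lies in a subalgebra of bounded length. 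Telescoping products of such local matrix units realize an arbitrary matrix unit $|p\rangle\langle q|$ as soon as $p$ and $q$ are joined by a sequence of bounded-range moves, and summing over $Z$ then yields all of $\operatorname{End}_{\mathcal{D}}(X^{n})$.

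The main obstacle will be precisely this last combinatorial point: the connectedness of the space of length-$n$ fusion paths with fixed endpoints under bounded interior moves. I would reduce it to connectedness of the fusion graph of $X$, which is exactly the hypothesis that $X$ tensor-generates $\mathcal{D}$; self-duality of $X$ moreover gives $\mathbbm{1}\subseteq X\otimes X$ (since $\mathcal{D}(\mathbbm{1},X\otimes X)\cong\mathcal{D}(X,X)\neq 0$), which supplies self-loops at every vertex of the two-step fusion graph and hence the flexibility to morph any one path into any other a bounded number of vertices at a time. Making this path-connectivity argument precise is the one genuinely non-formal step; granting it, the local matrix units generate each simple block of $\operatorname{End}_{\mathcal{D}}(X^{n})$, and bounded generation follows with a uniform constant (one may take $T=3$).
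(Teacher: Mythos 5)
Your treatment of algebraic Haag duality is correct and is essentially the paper's own argument: both identify $Z_{A}(A_{F^{c}})$ with the central vectors of $A$ viewed as a bimodule over $B=A_{(-\infty,a)}\otimes A_{(b,\infty)}$, invoke full faithfulness of $L^{a}\boxtimes R^{b}$ to confine those vectors to the $\mathbbm{1}\boxtimes\mathbbm{1}$-isotypic summand of $|Q_{a,b}|$, and then identify that summand with $A_{[a,b]}$ through $\pi_{n}$. Your use of the internal-end adjunction $\mathcal{E}(\mathbbm{1},Q_{a,b})\cong\mathcal{M}(\mathbbm{1}\triangleright m,m)=\mathcal{D}(X^{b-a+1},X^{b-a+1})$ is just a cleaner packaging of the step the paper carries out by decomposing $Q_{a,b}$ into simples and using that $B$ is simple with trivial center, so this half stands.

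The bounded-generation half has a genuine gap, and it is not merely the unproven connectivity step you flagged: the proposed reduction is false. Connectedness of the fusion graph of $X$, even with self-loops coming from $\mathbbm{1}\subseteq X\otimes X$, does not imply that fixed-endpoint fusion paths are connected under moves supported in windows of a fixed size such as $T=3$, because bounded-window moves can preserve global invariants that the endpoints do not detect. Concretely, take $\mathcal{D}=\textbf{Hilb}_{f.d.}(\mathbbm{Z}/N\mathbbm{Z})$ with $N\ge 7$ and $X=\mathbbm{1}\oplus g\oplus g^{-1}$ for a generator $g$; this $X$ is self-dual and strongly tensor generating, with generation exponent $n$ of order $N/2$. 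A fusion path is a sequence of steps $s_{j}\in\{0,+1,-1\}$, and a move supported on three consecutive sites replaces three consecutive steps by new ones with the same sum; both sums lie in $\{-3,\dots,3\}\subset\mathbbm{Z}$ and agree mod $N$, hence (as $N\ge 7$) agree exactly, so every $3$-site move preserves the integer winding $\sum_{j}s_{j}$, while the endpoint of the path only records this sum mod $N$. For $m\ge N$ the all-$(+1)$ path and the all-$0$ path both end at the unit object but have windings $N$ and $0$, so they lie in different orbits, and the matrix unit between them lies in $\operatorname{End}_{\mathcal{D}}(X^{m})$ but not in the algebra generated by the $3$-site subalgebras (every product of window-supported operators has vanishing matrix element between paths in different orbits). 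Thus $T=3$ fails, and any valid constant must grow with $n$.

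This is exactly why the paper's proof takes windows of length about $n+2$ and avoids path combinatorics entirely: it factors each basis element of $\mathcal{D}(X^{k+1},X^{k+1})$ as $(1_{X}\otimes\alpha)\circ(\beta\otimes 1_{X})$ with $\alpha,\beta\in\mathcal{D}(X^{k},X^{k})$, using an isometry $h\in\mathcal{D}(Z,X^{k-1})$ whose existence is precisely the statement that $X^{k-1}$ already contains every simple object. If you want to salvage the path-model strategy, you must take the window size at least $n+1$, so that the charge transported across a single window can be arbitrary, and you would still need to prove the resulting transitivity statement directly rather than infer it from connectedness of the fusion graph.
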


\begin{proof}

To see algebraic Haag duality, let $n$ be the smallest positive integer $n$ such that $X^{n}$ contains a copy of every simple. Fix any interval $[a,b]$ with $b-a>n$. 

The relative commutant $Z_{A}(A_{(-\infty,a)}\otimes A_{(b,\infty)})$ corresponds to the central vectors in $A$ as an $A_{(-\infty,a)}\otimes A_{(b,\infty)}$ bimodule. But since $A_{(-\infty,a)}\otimes A_{(b,\infty)}$ is simple and $\mathcal{D}\boxtimes \mathcal{D}^{mp}\rightarrow \textbf{Bim}(A_{(-\infty,a)}\otimes A_{(b,\infty)})$ is fully faithful, the central vectors must lie in the summand isomorphic to copies $A_{(-\infty,a)}\otimes A_{(b,\infty)}$. From the description of Q-system realization from above, this is precisely isomorphic to $\pi(A_{(-\infty,a)}\otimes A_{[a,b]}\otimes A_{(b,\infty)}\subseteq A)$, where $\pi$ is the isomorphism from the previous theorem. But $A_{(-\infty,a)}\otimes A_{(b,\infty)}$ has trivial center and thus the central vectors are of the form $1_{(-\infty,a)}\otimes A_{[a,b]}\otimes 1_{(b,\infty)}$ as desired.

We claim that uniformly bounded generation holds with constant $n+1$, where $n$ is again the smallest positive integer with $X^{n}$ containing copies of all simples.

We will show that if $k\ge n+1$, then the algebra $A_{[a,a+k]}\cong \mathcal{D}(X^{ k+1}, X^{ k+1})$ is generated by the subalgebras $A_{[a,a+k-1]}\cong \mathcal{D}(X^{ k}, X^{ k})\otimes 1_{X} $ and $A_{[a+1,a+k]}\cong 1_{X}\otimes \mathcal{D}(X^{ k}, X^{ k})$. This will imply our desired result inductively.

Since $X^{n}$ contains all simple objects as summands, $X^{l}$ will contain all simple objects as summands for $l\ge n$ by Remark \ref{FrobRem}. By semisimplicity, if we pick, for each triple of isomorphism classes of simple objects $Y,Z,W$, bases $\{e^{Y}_{X^{ k},i}\}$ of $\mathcal{D}(X^{ k}, Y)$,  a basis $\{f^{XZ}_{Y,j}\}$ of $\mathcal{D}(Y, X\otimes Z)$, and a basis $g^{W}_{ZX,l}$ of $\mathcal{D}(Z\otimes X, W)$, then we have the set

$$\{\left(1_{X}\otimes (\ (e^{W}_{X^{k},s})^{*} \circ g^{W}_{ZX,l}\ ) \right)\circ \left((f^{XZ}_{Y,j}\circ e^{Y}_{X^{k},i}\ )\otimes 1_{X}\right) : Y,Z,W\in \text{Irr}(\cC) \}$$

\noindent where the indices $s,l,j,i$ range over all possible values is a basis for $\mathcal{D}(X^{k+1},X^{k+1})$. Therefore it suffices to show any such element is a product $(1_{X}\otimes \alpha)\circ (\beta\otimes 1_{X})$ with $\alpha,\beta\in \mathcal{D}(X^{ k}, X^{ k})$. Since $k\ge n+1$, $X^{ k-1}$ contains all simple objects as summands, there is a nonzero morphism $h\in \mathcal{D}(Z, X^{ k-1})$ with $h^{*}\circ h=1_{Z}$.

Then choosing a specific basis element from above, if we set 

$$\alpha:=((1_{X}\otimes h)\circ f^{XZ}_{Y,j}\circ e^{Y}_{X^{k},i})\otimes 1_{X}\in \mathcal{D}(X^{ k}, X^{ k}), $$ and 

$$\beta:=1_{X}\otimes ( (e^{W}_{X^{k},l})^{*} \circ (g^{W}_{ZX,k}\circ h^{*}\otimes 1_{X}) )\in \mathcal{D}(X^{k}, X^{k}),$$

then

$$(1_{X}\otimes \alpha)\circ (\beta\otimes 1_{X})=\left(1_{X}\otimes ( (e^{W}_{X^{k},l})^{*} \circ g^{W}_{ZX,k} ) \right)\circ \left(( f^{XZ}_{Y,j}\circ e^{Y}_{X^{n},i})\otimes 1_{X}\right).$$

\noindent as desired.

\end{proof}

%Recall that if $\cD$ is a unitary fusion category, it's Drinfeld center $\mathcal{Z}(\cD)$ is a braided unitary fusion category that control's $\cD$'s Morita theory \cite{MR3242743}. We will follow the definition conventions of \cite{MR1966525}, to which we refer the reader for further details on $\mathcal{Z}(\cD)$. Briefly, objects in $\mathcal{Z}(\cD)$ consist of pairs $(Z,c)$, where $Z\in \text{Obj}(\cD)$ and $c=\{c_{Z, X}:Z\otimes X\cong X\otimes Z\ |\ X\in \text{Obj}(\cD)\}$ is a family of unitary isomorphisms, natural in $X$, satisfying the hexagon relation (in $X$). The family $c$ is a called a unitary half-braiding. Morphisms $(Z,c)\rightarrow (W,d)$ are morphisms $f:Z\rightarrow W$ in $\cD$ that intertwine the half-braidings.

%Furthermore, $\mathcal{Z}(\mathcal{D})$ is equivalent to the category of $Q_{a,b}-Q_{a,b}$ bimodules in $ \mathcal{D}^{mp}\boxtimes \mathcal{D}$ \cite{MR1782145, MR1966525, MR3242743}. Thus for each interval $[a,b]$, we obtain a fully faithful tensor functor $F_{a,b}:\mathcal{Z}(\mathcal{D})\rightarrow \textbf{Bim}(A)$. The subfactor version of this functor has been extensively studied in the literature \cite{MR1782145, MR1642584, MR3801484}.  

We return to the actions $F_{a,b}$ of $\mathcal{Z}(\mathcal{D})$ built in Theorem \ref{TheDrinfeldCenter} $F_{a,b}$. Using the AF model for the $Q_{a,b}$ realization, we can explicitly write down an AF model for the functor $F_{a,b}$, by considering the dual actions of $\mathcal{Z}(\mathcal{D})$ to $\mathcal{D}\boxtimes \mathcal{D}^{mp}$ on the finite dimensional algebras $A_{[a-k,b+k]}$. This has essentially been done in \cite[Section 6]{2021arXiv211106378C} with slightly different conventions (and in the $\rm{II}_{1}$ factor framework), but we include details here for the convenience of the reader.

Let $(Z,\sigma)\in \mathcal{Z}(\mathcal{D})$, where $Z\in \mathcal{D}$ and $\sigma=\{\sigma_{Z,Y}\ :\ Y\in \mathcal{D}\}$ is a unitary half-braiding. Then for each interval $I_{k}:=[a-k,b+k]$, we have the $A_{I_k}$ bimodule 

$$F^{k}_{a,b}(Z,\sigma):=\mathcal{D}( X^{ 2k+b-a+1}, X^{ k+b-a+1}\otimes Z\otimes  X^{k})$$

\noindent with right $A_{I_k}$ Hilbert module structure 

$$\langle f | g\rangle_{A_{I_k}}=f^{*}\circ g. $$

The right action is the obvious (pre-composition), while the left action is given by

$$x\triangleright f:= (1_{X^{k+b-a+1}}\otimes \sigma^{*}_{Z,X^{k}})\circ x \circ (1_{X^{k+b-a+1}}\otimes \sigma_{Z,X^{k}})\circ f .$$

If $\xi \in \mathcal{Z}(\cD)((Z,\sigma),(W,\delta))$, then $F^{k}_{a,b}(\xi):F^{k}_{a,b}(Z,\sigma)\rightarrow F^{k}_{a,b}(W,\delta)$ is defined by

$$F^{k}_{a,b}(\xi)(f):= (1_{X^{b-a+k+1}}\otimes \xi\otimes 1_{X^{k}})\circ f. $$

\noindent We have tensorators $(F^{k}_{a,b})^{2}_{(Z,\sigma),(W,\delta)}: F^{k}_{a,b}(Z,\sigma)\boxtimes _{A_{I_k}} F^{k}_{a,b}(W,\delta)\cong F^{k}_{a,b}(Z\otimes W, \sigma \otimes \delta)$ given by

$$(F^{k}_{a,b})^{2}_{(Z,\sigma),(W,\delta)}(f\boxtimes g):= (1_{X^{k+b-a+1}\otimes Z} \otimes \delta^{*}_{W,X^{k}})\circ (f\otimes 1_{W})\circ (1_{X^{k+b-a+1}}\otimes \delta_{W,X^{k}} )\circ g.$$

\noindent These assembles into a unitary tensor functor $F^{k}_{a,b}: \mathcal{Z}(\cD)\rightarrow \textbf{Bim}(A_{I_{k}})$.

We have a natural inclusion $F^{k}_{a,b}(Z,\sigma)\rightarrow F^{k+1}_{a,b}(Z,\sigma)$ given by $f\mapsto 1_{X}\otimes f\otimes 1_{X}$. This is an isometry of Hilbert modules, and is compatible with the $A_{I_{k}}$ and $A_{I_{k+1}}$ actions and bimodule structure in the sense of Definition \ref{cocyclemorphism} (we denote these bimodule inclusions $\nu_{k}$ if the object $(Z,\sigma)\in \mathcal{Z}(\mathcal{D})$ is clear from context). The resulting inductive limit action $\varinjlim_{k} F^{k}_{a,b}$ is an action on $A$, which is canonically monoidally equivalent to the action $F_{a,b}$, so we identify these actions. We will denote the resulting inclusions $j_{a-k,b+k}:F^{k}_{a,b}(Z,\sigma)\hookrightarrow F_{a,b}(Z,\sigma)$.

\begin{lem}\label{GettingPP-bases}
 If $b-a\ge n$, then $F_{a,b}(Z,\sigma)$ has a projective basis localized in $[a,b]$ for any $(Z,\sigma)\in \mathcal{Z}(\mathcal{D})$.
\end{lem}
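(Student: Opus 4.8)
The plan is to produce the localizing basis directly from the finite-dimensional model $F^k_{a,b}$, taking the basis vectors to sit in the innermost piece $F^0_{a,b}(Z,\sigma)=\mathcal{D}(X^{b-a+1},X^{b-a+1}\otimes Z)$. Writing $P:=X^{b-a+1}$, recall that the inductive-limit inclusions $\nu_k$ send $f\in\mathcal{D}(P,P\otimes Z)$ to $1_{X^k}\otimes f\otimes 1_{X^k}\in F^k_{a,b}(Z,\sigma)$, so the image of $F^0_{a,b}(Z,\sigma)$ consists of morphisms that are ``$f$ in the middle and the identity on the two flanks''. These are the candidates for a $[a,b]$-localized basis.

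First I would construct the candidate basis. Since $b-a\ge n$, the object $P=X^{b-a+1}$ contains every simple object of $\mathcal{D}$ as a summand (by the definition of $n$ together with Remark \ref{FrobRem}); in particular every simple summand of $P\otimes Z$ already occurs in $P$. Using semisimplicity, for each simple $W\in\operatorname{Irr}(\mathcal{D})$ I would choose isometries $\{t_{W,\alpha}\colon W\to P\otimes Z\}_{\alpha}$ exhausting the $W$-isotypic component of $P\otimes Z$ (so that $\sum_{W,\alpha}t_{W,\alpha}t_{W,\alpha}^{*}=1_{P\otimes Z}$) together with a single isometry $s_{W}\colon W\to P$, and set $b_{W,\alpha}:=t_{W,\alpha}\circ s_{W}^{*}\in\mathcal{D}(P,P\otimes Z)$. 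Since $s_{W}^{*}s_{W}=1_{W}$, this gives $\sum_{W,\alpha}b_{W,\alpha}b_{W,\alpha}^{*}=\sum_{W,\alpha}t_{W,\alpha}t_{W,\alpha}^{*}=1_{P\otimes Z}$.

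Next I would verify that $\{b_{W,\alpha}\}$, viewed inside $F_{a,b}(Z,\sigma)$, is a projective basis. For $x\in F^k_{a,b}(Z,\sigma)$ (these are dense as $k\to\infty$) the right $A$-valued inner product is precomposition, so $b_{W,\alpha}\triangleleft\langle b_{W,\alpha}\mid x\rangle=(1_{X^k}\otimes b_{W,\alpha}\otimes 1_{X^k})\circ(1_{X^k}\otimes b_{W,\alpha}\otimes 1_{X^k})^{*}\circ x$. Summing over $(W,\alpha)$ and pulling the sum into the middle tensor factor yields $(1_{X^k}\otimes(\sum b_{W,\alpha}b_{W,\alpha}^{*})\otimes 1_{X^k})\circ x=x$, because $\sum b_{W,\alpha}b_{W,\alpha}^{*}=1_{P\otimes Z}$ is precisely the identity on the target object of $F^k_{a,b}(Z,\sigma)$. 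Continuity of the projective-basis relation then extends the identity $\sum b_{W,\alpha}\langle b_{W,\alpha}\mid x\rangle=x$ to all of $F_{a,b}(Z,\sigma)$.

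The main work, and the step I expect to be the genuine obstacle, is checking localization in the sense of Definition \ref{localizable bimodule}: that each $b_{W,\alpha}$ commutes with $A_{[a,b]^{c}}$. By the net structure of the fusion spin chain every interval disjoint from $[a,b]$ lies in some $[a-k,a-1]$ or $[b+1,b+k]$, and these interval algebras generate $A_{[a,b]^{c}}$, so it suffices to check commutation with $y\in A_{[a-k,a-1]}$ and $y\in A_{[b+1,b+k]}$ inside the model $F^k_{a,b}$. For a right-hand $y=1_{X^{k+b-a+1}}\otimes\tilde y$ the right action is $F\triangleleft y=1_{X^k}\otimes b_{W,\alpha}\otimes\tilde y$, while the left action conjugates by the half-braiding $\sigma_{Z,X^k}$ appearing in the formula for $F^k_{a,b}$; the crucial point is that naturality of $\sigma_{Z,-}$ in its second variable, $(\tilde y\otimes 1_{Z})\circ\sigma_{Z,X^k}=\sigma_{Z,X^k}\circ(1_{Z}\otimes\tilde y)$, lets $\tilde y$ slide through the defect $Z$, after which the two half-braidings cancel and $y\triangleright F=1_{X^k}\otimes b_{W,\alpha}\otimes\tilde y=F\triangleleft y$. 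For a left-hand $y$, $\tilde y$ acts on the leftmost $X^k$ and never meets the half-braiding (which only touches the rightmost $X^k$ and $Z$), so $\sigma_{Z,X^k}^{*}\sigma_{Z,X^k}=1$ cancels immediately and again $y\triangleright F=F\triangleleft y$. The delicate bookkeeping is tracking exactly which tensor factors $\sigma_{Z,X^k}$ acts on and confirming that the naturality identity above is precisely what forces the commutation; once that is in place, localization in $[a,b]$ follows and the lemma is proved.
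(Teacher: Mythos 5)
Your proof is correct and takes essentially the same approach as the paper's: construct the basis inside $F^{0}_{a,b}(Z,\sigma)$ with $\sum_{i} b_{i}\circ b_{i}^{*}=1_{X^{b-a+1}\otimes Z}$ (you build the $b_{W,\alpha}=t_{W,\alpha}\circ s_{W}^{*}$ explicitly where the paper merely asserts such a collection exists), push it through the isometric inclusions into the inductive limit, and verify localization by the same two-sided case analysis in which operators supported to the left commute trivially while operators supported to the right require naturality of the half-braiding $\sigma_{Z,-}$ to slide through $Z$ before the two half-braidings cancel. No gaps.
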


\begin{proof}

If $b-a\ge n$, then all simple objects occur as a summand of $X^{b-a+1}$. Thus there is a projective basis for $F^{0}_{a,b}(Z,\sigma)$ as a right $A_{[a,b]}$ correspondence. Indeed, pick any finite collection of morphisms $\{b_{i}\}\subset F^{0}_{a,b}(Z,\sigma)=\mathcal{D}(X^{b-a+1},X^{b-a+1}\otimes Z)$ with 

$$\sum_{i} | b_{i}\rangle_{A_{[a,b]}}\  \langle b_{i}| = \sum_{i} b_{i}\circ b^{*}_{i}=1_{X^{b-a+1}\otimes Z}=id_{F^{0}_{[a,b]}}$$

But since the inclusion $F^{0}_{a,b}(Z,\sigma)\hookrightarrow F^{k}_{a,b}(Z,\sigma)$ is a Hilbert module isometry, the image of the $b_{i}$ satisfies 

\begin{align*}
&\sum_{i} | 1_{X^{k}}\otimes b_{i}\otimes 1_{X^{k}}\rangle_{A_{[a-k,b+k]}}\  \langle 1_{X^{k}}\otimes b_{i}\otimes 1_{X^{k}}|\\
&= \sum_{i} (1_{X^{k}}\otimes b_{i}\otimes 1_{X^{k}})\circ (1_{X^{k}}\otimes b^{*}_{i}\otimes 1_{X^{k}})\\
&=1_{X^{2k+b-a+1}\otimes Z}=id_{F^{k}_{[a,b]}(Z,\sigma)}
\end{align*}

Since this is true for all $k$, the image $j_{a,b}(b_{i})$ in the inductive limit $F_{a,b}(Z,\sigma)$ is also a projective basis. Now, to see it satisfies the localization condition, let $x\in A_{[c,d]}\cong \mathcal{D}(X^{d-c+1}, X^{d-c+1})$ with $d<a$. Then to see its action on $j_{a,b}(b_i)$, set $k=a-c$. Then the inclusion of $x$ into $A_{[a-k,b+k]}$ is given by $x\otimes 1_{X^{b+a-c-d}}\in A_{[c,b+a-c]}=A_{[a-k,b+k]}$. We compute

\begin{align*} 
i_{c,d}(x)\triangleright j_{a,b}(b_{i})&=j_{a-k,b+k}(x\otimes 1_{X^{b-d+k}}\triangleright 1_{X^{k}}\otimes b_{i} \otimes 1_{X^{k}})\\
&=j_{a-k,b+k}(x\otimes 1_{X^{a-d}}\otimes 1_{X^{b-c}}\triangleright 1_{X^{k}}\otimes b_{i} \otimes 1_{X^{k}})\\
&=j_{a-k,b+k}((x\otimes 1_{X^{a-d}}\otimes 1_{X^{b-c}})\circ (1_{X^{k}}\otimes b_{i} \otimes 1_{X^{k}}))\\
&=j_{a-k,b+k}( (1_{X^{k}}\otimes b_{i} \otimes 1_{X^{k}})\circ (x\otimes 1_{X^{a-d}}\otimes 1_{X^{b-c}}\circ)\\
&=j_{a,b}(b_{i})\triangleleft i_{c,d}(x)
\end{align*}

Now we check the case for $b<c$, and we set $k=d-b$. Then $[a-k,b+k]$ contains both $[a,b]$ and $[c,d]$. We obtain

\begin{align*} 
i_{c,d}(x)\triangleright j_{a,b}(b_{i})&=j_{a-k,b+k}(1_{X^{c-a+k}}\otimes x\triangleright 1_{X^k}\otimes b_{i}\otimes 1_{X^{k}})\\
&=j_{a-k,b+k}(1_{X^{c-a+k}}\otimes x \triangleright 1_{X^{d-b}}\otimes b_{i} \otimes 1_{X^{d-b-c}})\\
&=(1_{X^{k+b-a}}\otimes \sigma^{*}_{Z,X^{k}})\circ (1_{X^{c-a+k}}\otimes x) \circ (1_{X^{k+b-a}}\otimes \sigma_{Z,X^{k}})\circ(1_{X^{k}}\otimes b_i \otimes 1_{X^{k}})\\
&=j_{a-k,b+k}( (1_{X^{k}}\otimes b_{i} \otimes 1_{X^{k}})\circ (x\otimes 1_{X^{c-a+k}}))\\
&=j_{a,b}(b_{i})\triangleleft i_{c,d}(x)
\end{align*}

In the second to last step we have crucially used naturality of the half-braiding.

\end{proof}

\begin{lem}\label{transportability}
For any two intervals $[a,b]$ and $[c,d]$ of length greater than $n$ and any object $(Z,\sigma)\in \mathcal{Z}(\cD)$, $F_{a,b}(Z,\sigma)\cong F_{c,d}(Z,\sigma)$.
\end{lem}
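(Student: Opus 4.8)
The plan is to exhibit explicit unitary ``charge transporters'' built from the half-braiding, after reducing to the case of a one-site enlargement of the localization interval. First I would reduce: given two intervals $[a,b]$ and $[c,d]$ of length greater than $n$, the interval $[e,f]:=[\min(a,c),\max(b,d)]$ also has length greater than $n$ and contains both, and is reached from each of them by finitely many one-site enlargements of the left or right endpoint. Since isomorphism is transitive, it suffices to construct, for every interval $[a,b]$ of length greater than $n$, unitary bimodule isomorphisms $F_{a,b}(Z,\sigma)\cong F_{a,b+1}(Z,\sigma)$ and $F_{a,b}(Z,\sigma)\cong F_{a-1,b}(Z,\sigma)$. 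The left enlargement is the easy case: the new site is adjoined at the far left, away from the insertion point of $Z$ (which sits between $b$ and $b+1$), so the maps $f\mapsto 1_X\otimes f$ at each finite level are compatible with the connecting maps and already define the required isometry, with no half-braiding needed and left $A$-linearity immediate since the right tail $X^{k}$ carrying the twist is untouched.

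The right enlargement is where the half-braiding enters. Writing $\ell=b-a$ and recalling $F^k_{a,b}(Z,\sigma)=\cD(X^{2k+\ell+1},X^{k+\ell+1}\otimes Z\otimes X^{k})$, I would define $u_k\colon F^k_{a,b}(Z,\sigma)\to F^k_{a,b+1}(Z,\sigma)$ by
\[
u_k(f):=\bigl(1_{X^{k+\ell+1}}\otimes \sigma_{Z,X}\otimes 1_{X^{k}}\bigr)\circ (f\otimes 1_X).
\]
Conceptually this adjoins the new site as an extra tensorand and then uses the half-braiding to commute the $Z$-strand past it, moving the insertion point of $Z$ from between $b$ and $b+1$ to between $b+1$ and $b+2$, exactly as required for localization in $[a,b+1]$. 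I would then check that these maps are compatible with the connecting maps $\nu_k(f)=1_X\otimes f\otimes 1_X$ of both inductive systems---an immediate consequence of the bifunctoriality of $\otimes$---so that they assemble into a single linear map $u\colon F_{a,b}(Z,\sigma)\to F_{a,b+1}(Z,\sigma)$ of the inductive limits. Unitarity of $\sigma_{Z,X}$ gives $u_k(f)^*\circ u_k(g)=(f^*\circ g)\otimes 1_X$, which is precisely the image of $\langle f\mid g\rangle$ under the inclusion $A_{[a-k,b+k]}\hookrightarrow A_{[a-k,b+k+1]}$; hence $u$ preserves the $A$-valued inner products, and right $A$-linearity is immediate since $u_k$ is a post-composition. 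Surjectivity (hence unitarity) I would obtain either by constructing the reverse transporter with $\sigma^{-1}_{Z,X}$ or by checking that $u$ carries the localized projective basis of Lemma \ref{GettingPP-bases} to one for $F_{a,b+1}(Z,\sigma)$.

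The crux is left $A$-linearity, i.e. $u(x\triangleright \xi)=x\triangleright u(\xi)$ for $x\in A$. Fixing a local $x$ and a level $k$ large enough to contain its support, the left actions on $F^k_{a,b}(Z,\sigma)$ and $F^k_{a,b+1}(Z,\sigma)$ are both defined by sweeping $Z$ all the way past the right tail via $\sigma_{Z,X^{k}}$, conjugating by $x$, and sweeping back. Because $u_k$ merely repositions $Z$ by one strand via $\sigma_{Z,X}$, the identity $u(x\triangleright\xi)=x\triangleright u(\xi)$ reduces---after expanding the twisted actions---to the multiplicativity (hexagon) relation $\sigma_{Z,X^{k+1}}=(1_X\otimes\sigma_{Z,X^{k}})\circ(\sigma_{Z,X}\otimes 1_{X^{k}})$ together with the naturality of the half-braiding, exactly the ingredients used in the final computation of Lemma \ref{GettingPP-bases}. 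This is the step I expect to be the main obstacle, and it is the one place where it is essential that $(Z,\sigma)$ be an object of $\mathcal{Z}(\cD)$ rather than merely of $\cD$: a bare object of $\cD$ carries no half-braiding and so cannot be transported across an intervening site. As a consistency check, this matches the conceptual picture of Theorem \ref{TheDrinfeldCenter}, where both $F_{a,b}(Z,\sigma)$ and $F_{a,b+1}(Z,\sigma)$ are localizable in $[a,b+1]$ and therefore lie in the image of the fully faithful functor $F_{a,b+1}$; the explicit transporter identifies their common preimage as $(Z,\sigma)$.
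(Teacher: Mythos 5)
Your proposal is correct, and it runs on the same engine as the paper's proof: explicit charge transporters built from the half-braiding, defined level-by-level on the inductive system $F^{k}_{a,b}(Z,\sigma)$, checked compatible with the connecting maps $\nu_{k}$, passed to the limit as an isometry of bimodules, and then promoted to a unitary. The difference is in the decomposition. The paper (first assuming $b\le d$) writes down a single closed-form map $\kappa_{k}$ from $F^{k}_{a,b}(Z,\sigma)$ directly into $F_{c,d}(Z,\sigma)$, sweeping $Z$ across all intervening strands at once via $\sigma_{Z,X^{d-b}}$; it asserts $A_{[a-k,b+k]}$-bimodularity ``by construction,'' treats the case $d<b$ by exchanging the roles of the intervals and replacing $\sigma$ by $\sigma^{*}$, and offers an alternative unitarity argument via full faithfulness (an isometry between simple objects is unitary, then extend by semisimplicity). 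You instead factor the transport into one-site enlargements through the containing interval $[\min(a,c),\max(b,d)]$, which removes the case split and reduces the only nontrivial verification---left $A$-linearity---to the hexagon relation $\sigma_{Z,X^{k+1}}=(1_{X}\otimes\sigma_{Z,X^{k}})\circ(\sigma_{Z,X}\otimes 1_{X^{k}})$ together with the interchange law for a single added strand; this makes explicit precisely the step the paper waves through. What the paper's route buys is a closed-form transporter $v$, which is reused later (in the corollary on localized bases and in the proof that $F_{a,b}$ is a braided functor); what your route buys is smaller, independently checkable steps, at the cost of the final isomorphism being a composite. One small point to tighten: for the left-end enlargement $f\mapsto 1_{X}\otimes f$ your reverse-transporter trick is unavailable, so unitarity of that step needs your other option---the image of a localized projective basis $\{b_{i}\}$ from Lemma \ref{GettingPP-bases} satisfies $\sum_{i}(1_{X}\otimes b_{i})\circ(1_{X}\otimes b_{i})^{*}=1$, hence is again a projective basis and the isometry is onto---or the paper's simplicity argument; with that supplied, your proof is complete.
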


\begin{proof}
First assume $b\le d$. We recall the building blocks of the inductive limit model $$F^{k}_{a,b}(Z,\sigma):=\mathcal{D}( X^{2k+b-a+1}, X^{ k+b-a+1}\otimes Z\otimes  X^{k}).$$ 

\noindent For a given $k$, choose $m$ such that $[a-k,b+k]\subseteq [c-m,d+m]$. Then we consider the map

$\kappa_{k}: F^{k}_{a,b}(Z,\sigma)\rightarrow F_{c,d}(Z,\sigma)$ by

$$\kappa_{k}(x):=j_{c+m,d-m}((1_{X^{b-c+m}}\otimes \sigma_{Z,X^{d-b}} \otimes 1_{X^{m}})\circ (1_{X^{a-k-c+m}}\otimes x\otimes 1_{X^{d-b-k+m}}))$$

\noindent Note that this does not depend on the choice of $m$. Furthermore by construction, this is $A_{[a-k,b+k]}$ bimodular. In order to show this extends to a well-defined bimodule intertwiner from $F_{a,b}(Z,\sigma)$, we have to show 
 for every $k$, it is compatible with the inclusions $\nu_{k}:F^{k}_{a,b}(Z,\sigma)\rightarrow F^{k}_{a,b}(Z,\sigma)$ in the sense the $\kappa_{k+1}\circ \nu_{k}= \kappa_{k}$. %We claim that for $k<l$, the inclusion $\iota_{k,l}:A_{[a-k,b+k]}\hookrightarrow A_{[a-l,b+l]}$ intertwines $\kappa_{k}$ and $\kappa_{l}$. 
Choose $m$ such that $[a-k-1,b+k+1]\subseteq [c-m,d+m]$. Let  for $x\in A_{[a-k,b+k]}$,  recall $\nu_k(x)=1_{X}\otimes x \otimes 1_{X}$. Then we have

\begin{align*}
\kappa_{k+1}\circ \nu_{k}(x)&=j_{c+m,d-m}((1_{X^{b-c+m}}\otimes \sigma_{Z,X^{d-b}} \otimes 1_{X^{m}})\circ (1_{X^{a-k-1-c+m}}\otimes \nu_k(x)\otimes 1_{X^{d-b-k-1+m}}))\\
&=j_{c+m,d-m}((1_{X^{b-c+m}}\otimes \sigma_{Z,X^{d-b}} \otimes 1_{X^{m}})\circ (1_{X^{a-k-c+m}}\otimes x)\otimes 1_{X^{d-b-k+m}}))\\
&=\kappa_{k}(x)
\end{align*}

\noindent Thus by \cite[Proposition 4.4]{https://doi.org/10.48550/arxiv.2207.11854}, the family of $\kappa_{k}$ extend to a bimodule intertwiner $$v:F_{a,b}(Z,\sigma)\cong F_{c,d}(Z,\sigma).$$ Since each $\kappa_{k}$ is an isometry, so is the extension.

We can see that $v$ is a unitary explicitly by exchanging the roles of the intervals $[a,b]$ and $[c,d]$, and using $\sigma^{-1}=\sigma^{*}$ in place of $\sigma$. Incidently, this is also how we build the unitary in the case $d<b$. 

Alternatively, if we first assume $(Z,\sigma)$ is a simple object in $\mathcal{Z}(\mathcal{D})$, then $F_{a,b}(Z,\sigma)$ and $F_{c,d}(Z,\sigma)$ are both simple objects in the C*-category of correspondences, since both $F_{a,b}$ and $F_{c,d}$ are fully faithful. Thus any isometry between them is a unitary, and we obtain the desired result for simple objects. Since $\mathcal{Z}(\mathcal{D})$ is semisimple and $F_{a,b}$ and $F_{c,d}$ respect direct sums, the general result follows.

\end{proof}

\begin{cor}
For any interval $[a,b]$ with $b-a\ge n$, $F_{a,b}(\mathcal{Z}(\mathcal{D}))\subseteq \textbf{DHR}(A)$.
\end{cor}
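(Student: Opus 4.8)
The plan is to read off the corollary from the two preceding lemmas. By definition, $F_{a,b}(Z,\sigma)\in\textbf{DHR}(A)$ requires that this correspondence be right finite and localizable, i.e.\ that there be a radius $R\ge 0$ for which it admits a projective basis localized in every ball of radius at least $R$. Right finiteness, together with localization in the single interval $[a,b]$, is exactly what Lemma \ref{GettingPP-bases} provides (using $b-a\ge n$). The only remaining content is to promote localization in the fixed interval $[a,b]$ to localization in \emph{all} sufficiently large balls, and this is where Lemma \ref{transportability} enters.

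First I would record the elementary observation that the localization property is invariant under bimodule isomorphism: if $v\colon X\to Y$ is a unitary bimodule intertwiner and $\{c_j\}$ is a projective basis of $Y$ localized in a region $F$, then $\{v^{-1}(c_j)\}$ is a projective basis of $X$ localized in $F$. The projective-basis identity transports because $v$ is inner-product preserving and $v^{-1}$ is right $A$-linear, and for $a\in A_{F^c}$ one has $a\,v^{-1}(c_j)=v^{-1}(a c_j)=v^{-1}(c_j a)=v^{-1}(c_j)\,a$ since $v^{-1}$ is a bimodule intertwiner and $\{c_j\}$ commutes with $A_{F^c}$. Next, since $L=\mathbbm{Z}\subseteq\mathbbm{R}$, any ball $B_U(x)$ is the interval $[x-\lfloor U\rfloor,\,x+\lfloor U\rfloor]$, whose length is $2\lfloor U\rfloor$. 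Setting the localization radius $R:=\lceil n/2\rceil$, every ball $F$ of radius $U\ge R$ is an interval $[c,d]$ with $d-c\ge n$.

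With these two points in hand the assembly is immediate: given such a ball $F=[c,d]$, Lemma \ref{GettingPP-bases} (applied to $[c,d]$) yields a projective basis of $F_{c,d}(Z,\sigma)$ localized in $F$, while Lemma \ref{transportability} supplies a unitary $F_{a,b}(Z,\sigma)\cong F_{c,d}(Z,\sigma)$. Transporting the localized basis back through this isomorphism by the first step produces a projective basis of $F_{a,b}(Z,\sigma)$ localized in $F$. As $F$ ranges over all balls of radius at least $R$, this shows $F_{a,b}(Z,\sigma)$ is localizable with localization radius $R$, hence lies in $\textbf{DHR}(A)$; since every object of $\mathcal{Z}(\mathcal{D})$ is of this form, we conclude $F_{a,b}(\mathcal{Z}(\mathcal{D}))\subseteq\textbf{DHR}(A)$. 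I do not expect any genuine obstacle here: the substantive work (building the localized bases and the transport unitaries) has already been carried out in the two lemmas, and the only point deserving a moment's care is the translation between the metric balls of $\mathbbm{Z}$ and the length hypothesis $d-c\ge n$ of the lemmas, resolved by choosing $R$ with $2R\ge n$.
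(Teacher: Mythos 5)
Your proposal is correct and follows essentially the same route as the paper: the paper likewise takes an arbitrary interval $[c,d]$ of length at least $n$, uses Lemma \ref{transportability} to get a unitary $v\colon F_{a,b}(Z,\sigma)\cong F_{c,d}(Z,\sigma)$, and pulls back the $[c,d]$-localized projective basis of Lemma \ref{GettingPP-bases} via $v^{*}$, verifying exactly your two transport computations (the basis identity via inner-product preservation and the commutation $av^{*}(b_i)=v^{*}(ab_i)=v^{*}(b_ia)=v^{*}(b_i)a$). Your extra remark pinning down the localization radius $R=\lceil n/2\rceil$ to match balls of $\mathbbm{Z}$ with intervals of length $\ge n$ is a harmless refinement the paper leaves implicit.
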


\begin{proof}
Let $[c,d]$ be any other interval with $b-a\ge n$, and $v:F_{a,b}(Z,\sigma)\cong F_{[c,d]}(Z,\sigma)$ the unitary bimodule isomorphism from the previous lemma. Then there exist a projective basis $\{b_i\}$ localized in $[c,d]$. Then $\{v^{*}(b_{i})\}\subseteq F_{a,b}(Z,\sigma)$. Then 
\begin{align*}\sum v^{*}(b_{i}) \langle v^{*}(b_{i}) | a\rangle_{F_{a,b}(Z,\sigma)} &=\sum   v^{*}(b_{i}) \langle b_{i} | v(a)\rangle_{F_{c,d}(Z,\sigma)}\\
&=\sum   v^{*}(b_{i} \langle b_{i} | v(a)\rangle_{F_{c,d}(Z,\sigma)})\\
&=v^{*}(v(a))=a
\end{align*}

Thus $\{v^{*}(b_{i})\}$ is a projective basis. Now, it is also localized in $[c,d]$ since for any $a\in A_{[c,d]^{c}}$ we have

$$av^{*}(b_i)=v^{*}(ab_i)=v^{*}(b_i a)=v^{*}(b_i)a.$$

\end{proof}

\begin{lem} For any $[a,b]$, the functor $F_{a,b}:\mathcal{Z}(\mathcal{D})\rightarrow \textbf{DHR}(A)$ is braided.
\end{lem}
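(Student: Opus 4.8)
The goal is to verify that $F_{a,b}$ intertwines the braiding $c^{\mathcal{Z}}$ of the Drinfeld center with the DHR braiding $u$ built in Lemma \ref{defineintertwine}; that is, for $(Z,\sigma),(W,\delta)\in\mathcal{Z}(\mathcal{D})$ I must show
$$u_{F_{a,b}(Z,\sigma),\,F_{a,b}(W,\delta)}\circ (F_{a,b})^{2}_{(Z,\sigma),(W,\delta)}=(F_{a,b})^{2}_{(W,\delta),(Z,\sigma)}\circ F_{a,b}\big(c^{\mathcal{Z}}_{(Z,\sigma),(W,\delta)}\big).$$
With the conventions of Example \ref{canmodcenter} the braiding on the center is the half-braiding itself, $c^{\mathcal{Z}}_{(Z,\sigma),(W,\delta)}=\sigma_{Z,W}\colon Z\otimes W\to W\otimes Z$. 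Since every map appearing is a bimodule intertwiner it is determined by its values on a projective basis, and since $F_{a,b}$ is fully faithful it suffices to identify the resulting morphism in $\mathcal{Z}(\mathcal{D})$ with $\sigma_{Z,W}$.

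The first step is to place the two objects into a separated configuration adapted to the one-dimensional DHR braiding. Using Lemma \ref{transportability} together with the corollary following it, $F_{a,b}(Z,\sigma)$ admits a projective basis $\{\beta_i\}$ localized in a left interval $I_L$ and $F_{a,b}(W,\delta)$ admits a projective basis $\{\gamma_j\}$ localized in a right interval $I_R$, with $I_L$ lying entirely to the left of $I_R$ and separated by more than the constant required in Lemma \ref{defineintertwine}. For this configuration the DHR braiding acts by the bare swap
$$u_{F_{a,b}(Z,\sigma),F_{a,b}(W,\delta)}(\beta_i\boxtimes_A\gamma_j)=\gamma_j\boxtimes_A\beta_i.$$
The heart of the matter is then to compute the two tensorators on these specific bases and to check that conjugating the swap by them produces exactly $F_{a,b}(\sigma_{Z,W})$.

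For this I pass to the finite-dimensional AF models, realizing $\{\beta_i\}$ and $\{\gamma_j\}$ through the level-$k$ bimodules $F^{k}_{a,b}$ and tracking them under the transportability maps $\kappa_k$ of Lemma \ref{transportability} (which carry an explicit half-braiding factor $\sigma_{Z,X^{\bullet}}$, respectively $\delta_{W,X^{\bullet}}$, on the buffer strands between the two intervals) and under the explicit tensorator
$$(F^{k}_{a,b})^{2}_{(Z,\sigma),(W,\delta)}(f\boxtimes g)=\big(1_{X^{k+b-a+1}\otimes Z}\otimes \delta^{*}_{W,X^{k}}\big)\circ(f\otimes 1_{W})\circ\big(1_{X^{k+b-a+1}}\otimes \delta_{W,X^{k}}\big)\circ g.$$
When $Z$ is localized far to the left and $W$ far to the right, the half-braidings introduced by $\kappa_k$ and by the tensorators act only on identity strands in the region separating $I_L$ and $I_R$; naturality of the half-braidings together with their hexagon axiom lets these intermediate factors cancel in pairs. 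After the swap the two objects exchange their left–right order, the analogous factors for $(F^{k}_{a,b})^{2}_{(W,\delta),(Z,\sigma)}$ reassemble, and the only morphism surviving the passage of $Z$ across $W$ is the single half-braiding $\sigma_{Z,W}$. Full faithfulness of $F_{a,b}$ then identifies the conjugated swap with $F_{a,b}(\sigma_{Z,W})=F_{a,b}(c^{\mathcal{Z}}_{(Z,\sigma),(W,\delta)})$, establishing the braided-functor identity above.

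I expect the main obstacle to be exactly this bookkeeping of half-braidings on the buffer strands: one must verify that every half-braiding factor carried by the transportability and tensorator maps on the $X^{\bullet}$ strands separating the two supports cancels by naturality and coherence, leaving precisely $\sigma_{Z,W}$ and no residual twist. A secondary point requiring care is matching the orientation convention ($I_L$ to the left of $I_R$, i.e. $x<y$ in Lemma \ref{defineintertwine}) with the sign of the half-braiding, so that the computation yields $\sigma_{Z,W}$ rather than its inverse.
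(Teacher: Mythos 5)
Your proposal is correct in outline and follows essentially the same route as the paper: reduce to projective bases, use transportability (Lemma \ref{transportability}) to place the two objects in far-separated intervals where the DHR braiding acts as the bare swap, and recover the half-braiding $\sigma_{Z,W}$ from the half-braiding factors carried by the transport unitaries and tensorators. The one substantive difference is organizational, and it matters because it sits exactly at the step you flag as the ``main obstacle.'' You propose to transport \emph{both} bases and then conjugate the swap by the tensorators, verifying that all buffer half-braidings cancel pairwise by naturality and the hexagon axiom. The paper instead keeps both bases $\{j_{a,b}(e_i)\}$ and $\{j_{a,b}(f_j)\}$ anchored in $[a,b]$, states the braided-functor identity on these anchored bases, and uses the far-transported basis $\{v^{*}(j_{c,d}(f^{\prime}_{l}))\}$ of $F_{a,b}(W,\delta)$ only as a resolution of the identity: expand $j_{a,b}(f_j)$ in it, apply the swap, re-expand in the anchored basis, and then collapse the resulting double sum of inner products using completeness of the projective basis, $\sum_{s} f_{s}\circ f_{s}^{*}=1$, together with the explicit half-braiding appearing in $v$ and in the left $A$-action on the AF model. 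This turns the coherence bookkeeping you anticipate into one finite computation and avoids having to undo two transports at the end; your version should also go through, but since this cancellation \emph{is} the entire content of the lemma, it must be carried out rather than asserted. One small convention slip: with the paper's tensorator direction $F^{2}_{X,Y}\colon F(X)\boxtimes F(Y)\rightarrow F(X\otimes Y)$, the identity to prove reads $(F_{a,b})^{2}_{(W,\delta),(Z,\sigma)}\circ u_{F_{a,b}(Z,\sigma),F_{a,b}(W,\delta)}=F_{a,b}(\sigma_{Z,W})\circ (F_{a,b})^{2}_{(Z,\sigma),(W,\delta)}$; as you wrote it the compositions only typecheck if you intend the inverse tensorator.
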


\begin{proof}
We present an argument which is essentially the same as \cite[Proposition 6.15]{2021arXiv211106378C}. Fix $I:=[a,b]$ with $b-a\ge n$, and let $\{e_{i}\}\subset F^{0}_{a,b}(Z,\sigma)=\mathcal{D}(X^{b-a+1},X^{b-a+1}\otimes Z)$ and $\{f_{j}\}\subset F^{0}_{a,b}(W,d)=\mathcal{D}(X^{b-a+1},X^{b-a+1}\otimes W) $ be projective bases, so that $\{j_{a,b}(e_{i})\}$ and $\{j_{a,b}(f_{j})\}$ are projective bases for $F_{a,b}(Z,\sigma)$ and $F_{a,b}(W,d)$ by Lemma \ref{GettingPP-bases}.

Then it suffices to show 

\begin{equation}\label{braid equation}
\begin{gathered}(F_{a,b})^{2}_{(W,d),(Z,\sigma)}\circ u_{F_{a,b}(Z,\sigma),F_{a,b}(W,d)}(j_{a,b}(e_{i})\boxtimes j_{a,b}(f_{j}))\\
= j_{a,b}\left(1_{X^{b-a+1}}\otimes \sigma_{Z, W})\circ (e_{i}\otimes 1_{W})\circ f_{j}\right)
\end{gathered}
\end{equation}

We compute the left hand side. First pick an interval $[c,d]$ with $b<<c$, and consider a projective basis $\{f^{\prime}_{j}\}$ of the corresponding $A_{[c,d]}$ module $F^{0}_{c,d}(W,d)$, so that $\{v^{*}(j_{c,d}(f^{\prime}_{j}))\}$ is a projective bases of $F_{a,b}(W,d)$ localized in $[c,d]$ as in the proof of Lemma \ref{transportability}.

Then 

\begin{align*}
    &u_{F_{a,b}(Z,\sigma),F_{a,b}(W,d)}(j_{a,b}(e_{i})\boxtimes j_{a,b}(f_{j}))\\
    &=u_{F_{a,b}(Z,\sigma),F_{a,b}(W,d)}\left(\sum_{l} j_{a,b}(e_{i})\boxtimes v^{*}(j_{c,d}(f^{\prime}_{j})) \langle v^{*}(j_{c,d}(f^{\prime}_{j}))\ | j_{a,b}(f_{j}))\rangle\right)\\
    &=\sum_{l} v^{*}(j_{c,d}(f^{\prime}_{j})) \boxtimes j_{a,b}(e_{i})\langle v^{*}(j_{c,d}(f^{\prime}_{j}))\ | j_{a,b}(f_{j}))\rangle\\
    &=\sum_{l,s} j_{a,b}(f_{s})\langle j_{a,b}(f_s)\ |v^{*}(j_{c,d}(f^{\prime}_{j}))\rangle\ \boxtimes\ j_{a,b}(e_{i})\langle v^{*}(j_{c,d}(f^{\prime}_{j}))\ | j_{a,b}(f_{j}))\rangle\\
    &=\sum_{s,l} j_{a,b}(f_{s})\ \boxtimes\ \langle j_{a,b}(f_s)\ |v^{*}(j_{c,d}(f^{\prime}_{j}))\rangle j_{a,b}(e_{i})\langle v^{*}(j_{c,d}(f^{\prime}_{j}))\ | j_{a,b}(f_{j}))\rangle
\end{align*}

We but using the definitions, we see the term 

\begin{align*}
&\sum_{l}\langle j_{a,b}(f_s)\ |v^{*}(j_{c,d}(f^{\prime}_{j}))\rangle j_{a,b}(e_{i})\langle v^{*}(j_{c,d}(f^{\prime}_{j}))\ | j_{a,b}(f_{j})\rangle\\
&=j_{a,b}\left(   (f^{*}_{s}\otimes 1_{Z})  \circ 1_{X^{b-a+1}}\otimes \sigma_{Z,W})\circ (e_{i}\otimes 1_{W})\circ f_{j}\right)
\end{align*}

\noindent Therefore we can evaluate the left hand side of equation \ref{braid equation} to get

\begin{align*}
    &(F_{a,b})^{2}_{(W,d),(Z,\sigma)}\circ u_{F_{a,b}(Z,\sigma),F_{a,b}(W,d)}(j_{a,b}(e_{i})\boxtimes j_{a,b}(f_{j}))\\
    &=j_{a,b}\circ (F^{0}_{[a,b]})^{2}_{(W,d),(Z,\sigma)}\left( \sum_{s} f_{s}\boxtimes (f^{*}_{s}\otimes 1_{Z})  \circ 1_{X^{b-a+1}}\otimes \sigma_{Z,W})\circ (e_{i}\otimes 1_{W})\circ f_{j}\right)\\
    &=j_{a,b}\left((1_{X^{b-a+1}}\otimes \sigma_{Z, W})\circ (e_{i}\otimes 1_{W})\circ f_{j}\right)
    \end{align*}
\end{proof}

\begin{thm}\label{braided equivalence} (c.f. Theorem \ref{ThmC})
For any interval $[a,b]$ with $b-a\ge n$, $F_{a,b}:\mathcal{Z}(\mathcal{D})\rightarrow \textbf{DHR}(A)$ is a braided equivalence.
\end{thm}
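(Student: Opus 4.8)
The plan is to observe that almost everything needed has already been assembled. By Theorem~\ref{TheDrinfeldCenter} the functor $F_{a,b}\colon \mathcal{Z}(\mathcal{D})\to \textbf{Bim}(A)$ is fully faithful; the corollary preceding this theorem shows it lands inside $\textbf{DHR}(A)$; and the previous lemma shows it is braided. A fully faithful braided C*-tensor functor is automatically a braided equivalence onto its essential image, so the whole statement reduces to proving that $F_{a,b}\colon \mathcal{Z}(\mathcal{D})\to \textbf{DHR}(A)$ is \emph{essentially surjective}.

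To prove essential surjectivity, let $W\in\textbf{DHR}(A)$ with localization radius $R_W$. Choose an interval $I=[a',b']$ with $b'-a'\ge \max(n,R_W)$, so that $W$ is localizable in $I$, and write $B:=A_{I^c}=A_{(-\infty,a')}\otimes A_{(b',\infty)}$. By the characterization in Theorem~\ref{TheDrinfeldCenter} applied to $[a',b']$, the bimodule $W$ lies in the image of $F_{a',b'}$ precisely when $\textbf{Res}_B(W)$ lies in the image of $L^{a'}\boxtimes R^{b'}$, where the restriction uses the conditional expectation $E_B\colon A\to B$ furnished just after that theorem. Since the identity bimodule of $B$ equals $L^{a'}(\mathbbm{1})\otimes R^{b'}(\mathbbm{1}^{mp})=L^{a'}\boxtimes R^{b'}(\mathbbm{1}\boxtimes\mathbbm{1}^{mp})$, any finite multiple of it lies in that image. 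Hence it suffices to prove the key claim: \emph{if $W\in\textbf{DHR}(A)$ is localizable in $I$ with $b'-a'\ge n$, then $\textbf{Res}_B(W)$ is isomorphic to a finite direct sum of copies of the identity $B$-bimodule.} Granting the claim, $W\cong F_{a',b'}(Z,\sigma)$ for some $(Z,\sigma)\in\mathcal{Z}(\mathcal{D})$, and Lemma~\ref{transportability} transports this to $W\cong F_{a,b}(Z,\sigma)$, yielding essential surjectivity of the specific functor $F_{a,b}$.

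For the key claim I would argue as follows. Let $\{b_i\}_{i=1}^k$ be a projective basis of $W$ commuting with $B$. Since each $b_i$ commutes with every $c\in B$, a direct computation (or Corollary~\ref{localization of inner product} together with the algebraic Haag duality of Proposition~\ref{Haagduality}) gives $\langle b_i\,|\,b_j\rangle\in Z_A(A_{I^c})=A_I$ for all $i,j$. Thus $p:=(\langle b_i\,|\,b_j\rangle)_{ij}$ is a projection in $M_k(A_I)$, and in the associated amplimorphism model $W\cong pA^{\oplus k}$ the left $B$-action is diagonal, $c\cdot(p\vec v)=p(c\vec v)$, because $\langle b_i\,|\,c b_j\rangle=p_{ij}c$. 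Consequently $\textbf{Res}_B(W)$ is generated as a $B$-$B$ bimodule by the central vectors $\{b_i\}$. Restricting $E_B$ to the finite-dimensional algebra $A_I$ yields a faithful tracial state valued in the scalars $\mathbbm{C}\subseteq B$, so the $B$-valued Gram matrix $E_B(p)$ is a positive scalar matrix; orthonormalizing the $b_i$ accordingly exhibits $\textbf{Res}_B(W)\cong B^{\oplus m}$ with $m=\operatorname{rank}E_B(p)$, a finite multiple of the identity bimodule.

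The main obstacle is this last step: rigorously controlling the interplay between the $A$-valued inner product, the conditional expectation $E_B$, and the left and right $B$-actions, so as to confirm that $\textbf{Res}_B(W)$ is genuinely a multiple of the identity bimodule rather than merely resembling one. In particular one must verify that $E_B$ restricted to $A_I$ is the (categorical) trace landing in the scalars, and that the central vectors span $\textbf{Res}_B(W)$ over $B$, which uses that $A$ is generated by $A_I$ together with $B$. Once this structural identification is secured, the remaining verifications---positivity, orthonormalization, and matching with $L^{a'}\boxtimes R^{b'}(\mathbbm{1}^{\oplus m}\boxtimes \mathbbm{1}^{mp})$---are routine.
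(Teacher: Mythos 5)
Your overall skeleton is the same as the paper's: reduce to essential surjectivity (full faithfulness, landing in $\textbf{DHR}(A)$, and braidedness are already established), test membership in the image of $F_{a',b'}$ via the restriction criterion of Theorem \ref{TheDrinfeldCenter}, and use Lemma \ref{transportability} to move between intervals. However, your ``key claim'' is false, so the argument has a genuine gap that cannot be patched as stated. The restriction $\textbf{Res}_B(W)$ of a DHR bimodule to $B=A_{(-\infty,a')}\otimes A_{(b',\infty)}$ is \emph{not} in general a finite multiple of the identity $B$-bimodule; it is only guaranteed to be a direct sum of the simple bimodules $L^{a'}(Y_i)\boxtimes R^{b'}(Y_j^{mp})$, and the nontrivial ones really occur. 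The simplest counterexample is $W=A$ itself, the identity bimodule, which lies in $\textbf{DHR}(A)$ (localizable in every interval with projective basis $\{1\}$): by Theorem \ref{TheDrinfeldCenter} its restriction to $B$ is
$\bigoplus_{Y,Z\in \text{Irr}(\mathcal{D})} \bigl(L^{a'}(Y)\boxtimes R^{b'}(Z^{mp})\bigr)^{\oplus N_{Y,Z}}$ with $N_{Y,Z}=\dim\mathcal{D}(Y\otimes X^{b'-a'+1}\otimes Z,\,X^{b'-a'+1})$, and these multiplicities are nonzero for nontrivial $Y$, $Z$ (take $Z=\overline{Y}$ and use Frobenius reciprocity plus the fact that $X^{b'-a'+1}$ contains every simple). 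Your claim holds only in the degenerate case $\mathcal{D}=\textbf{Hilb}_{f.d.}$; were it true in general, full faithfulness of $F_{a',b'}$ would force $\textbf{DHR}(A)$ to collapse, contradicting the very equivalence $\textbf{DHR}(A)\cong\mathcal{Z}(\mathcal{D})$ you are trying to prove.

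The precise point where your reasoning breaks is the sentence asserting that $\textbf{Res}_B(W)$ is \emph{generated as a $B$-$B$ bimodule} by the central vectors $\{b_i\}$. What the projective basis gives is generation as a right $A$-module, $W=\overline{\sum_i b_i\triangleleft A}$, and the $B$-bimodule $\overline{\sum_i b_i\triangleleft B}$ is strictly smaller in general: for these categorical nets the quasi-local algebra is \emph{not} generated by $A_I\cup A_{I^c}$; indeed $\overline{\operatorname{span}}(A_I\cdot A_{I^c})$ is the proper subalgebra corresponding to the $(\mathbbm{1},\mathbbm{1})$-summand of the realization, namely $\Delta_{a',b'}$ applied to $A_{(-\infty,a')}\otimes A_{(b',\infty)}$ times $A_I$. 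Equivalently, in your amplimorphism picture $\textbf{Res}_B(W)\cong p\,(\textbf{Res}_B A)^{\oplus k}$ with coordinate-wise $B$-actions, and $\textbf{Res}_B A$ already contains all the nontrivial sectors, so cutting by $p$ cannot produce a multiple of the identity. (Your observations that $\langle b_i\,|\,b_j\rangle\in A_I$ and that $E_B$ is scalar-valued on $A_I$ are correct, but they only identify the sub-bimodule $\overline{\sum_i b_i\triangleleft B}\cong B^{\oplus m}$, not all of $\textbf{Res}_B(W)$.) The repair is exactly the paper's argument: decompose $A$ as a $B$-bimodule into simples $L^{a'}(Y_i)\boxtimes R^{b'}(Y_j^{mp})$; for each central basis vector $b_k$ the map $x\mapsto b_k\triangleleft x$ is a $B$-bimodule intertwiner on each simple summand, hence zero or a multiple of an isometry; the images of these maps span $W$, so $\textbf{Res}_B(W)$ is a direct sum of objects in the image of $L^{a'}\boxtimes R^{b'}$. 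That weaker conclusion---membership of the restriction in the image of $L^{a'}\boxtimes R^{b'}(\mathcal{D}\boxtimes\mathcal{D}^{mp})$, not its triviality---is what Theorem \ref{TheDrinfeldCenter} actually requires, and it is all one can expect.
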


\begin{proof}
The only thing left to prove is that $F_{a,b}$ is essential surjectivity onto $\textbf{DHR}(A)$. But since the replete image of $F_{a,b}$ is the same as $F_{c,d}$ in $\textbf{Bim}(A)$, it suffices to show any bimodule $W\in \textbf{DHR}(A)$ is in the image of $F_{c,d}$ for some sufficiently large interval $[c,d]$. Let $W\in \textbf{DHR}(A)$, and choose a basis $\{b_{i}\}$ localized in some in $[c,d]$. By Theorem \ref{TheDrinfeldCenter}, it suffices to show that $W$ lies in the image $L^{c}\boxtimes R^{d}(\mathcal{D}\boxtimes \mathcal{D}^{mp})$ when considered as a $A_{(-\infty,c)}\otimes A_{(d,\infty)}$ bimodule. 

Note that $A$ decomposes as an $A_{(-\infty,c)}\otimes A_{(d,\infty)}$ bimodule via

$$A\cong \bigoplus _{i,j\in \text{Irr}(\cD)} (L^{c}(Y_{i})\boxtimes R^{d}(Y^{mp}_{j}))^{\oplus N_{i,j}}$$

\noindent for some non-negative integers $N_{i,j}$. Now note that since each $[c,d]$-localized basis element $b_{k}\in W$ is $A_{(-\infty,c)}\otimes A_{(d,\infty)}$-central, so the space $b_{k}\triangleleft \left((L^{a}\boxtimes R^{b})(Y_{i}\boxtimes Y^{mp}_{j})\right)$ for each of the $N_{i,j}$ copies of $(L^{a}\boxtimes R^{b})(Y_{i}\boxtimes Y^{mp}_{j})$ in $A$ is a sub $A_{(-\infty,c)}\otimes A_{(d,\infty)}$ bimodule of $W$, and the span as these range over all localized basis elements $b_{k}$ and all $i,j$ is all of $W$. 

But the map $$(L^{a}\boxtimes R^{b})(Y_{i}\boxtimes Y^{mp}_{j})\mapsto b_{k}\triangleleft \left((L^{a}\boxtimes R^{b})(Y_{i}\boxtimes Y^{mp}_{j})\right)$$ is a bounded algebraic bimodule intertwiner, hence is an intertwiner of correspondences $(L^{a}\boxtimes R^{b})(Y_{i}\boxtimes Y^{mp}_{j})\rightarrow W$. But $(L^{a}\boxtimes R^{b})(Y_{i}\boxtimes Y^{mp}_{j})$ is irreducible by fully faithfulness of $L^{a}\boxtimes R^{b}$, so the above map is either a scalar multiple of an isometry (in which case $(L^{a}\boxtimes R^{b})(Y_{i}\boxtimes Y^{mp}_{j})$ is isomorphic to its image) or $0$. But the images of these maps span $X$, and since $X$ itself is semisimple, the images of $(L^{a}\boxtimes R^{b})(Y_{i}\boxtimes Y^{mp}_{j})$ exhaust possible simple summands of $W$.

Thus when we restrict $W$ to an $A_{(-\infty, c)}\otimes A_{(d,\infty)}$ bimodule, then $W$ is a direct sum of the $(L^{a}\boxtimes R^{b})(Y_{i}\boxtimes Y^{mp}_{j})$, hence in the image of $F_{[c,d]}$ as claimed.

\end{proof}

We can immediately use this to approach the problem described in the introduction of distinguishing  quasi-local algebras up to bounded spread isomorphism. The following example is the standard example of global symmetry: spin flips.

\begin{ex}{\textbf{Ordinary spin system}}. Let $d\in \mathbbm{N}$ and consider the onsite Hilbert space $\mathbbm{C}^{d}$, which we view as having a trivial onsite categorical symmetry. The fusion category is $\textbf{Hilb}_{f.d.}$, and the object $X=\mathbbm{C}^{d}$ is clearly strongly tensor generating.

The resulting net $A_{d}$ over $\mathbbm{Z}$ is then the usual net of all local operators, and the quasi-local algebra is the UHF C*-algebra $M_{d^{\infty}}$. By Theorem \ref{braided equivalence}, we have $\textbf{DHR}(A_{d})\cong \mathcal{Z}(\textbf{Hilb}_{f.d.})\cong \textbf{Hilb}_{f.d.}$ as braided tensor categories. 
\end{ex}

\begin{ex}\label{genspinflip}{\textbf{Generalized spin flip}}. Let $G$ be an abelian finite group. Consider the onsite Hilbert space $K:=\mathbbm{C}^{|G|}$, and the action of $G$ on $K$ which permutes the standard basis vectors, i.e. the left regular representation. $K$, as an object in $\textbf{Rep}(G)$, contains all isomorphism classes of simples, hence is a strongly tensor generating object in $\textbf{Rep}(G)$ (with $n=1$). 

Then we consider the net of symmetric observables constructed as above, which we denote $A^{G}$. It is easy to see that the resulting UHF algebra is $M_{|G^{\infty}|}$. In particular as C*-algebras, we have an isomorphism of the quasi-local algebras $A^{G}\cong A_{|G|}$. In particular, for any groups $G$ and $H$ of the same order $A^{G}\cong A^{H}$.

However, by Theorem \ref{braided equivalence}, $\textbf{DHR}(A^{G})\cong \mathcal{Z}(\textbf{Rep}(G))$. This implies that even though $A^{\mathbbm{Z}/2\mathbbm{Z}}\cong A_{2}$ as C*-algebras, there is no isomorphism with bounded spread between these. Similarly, at the level of algebras $A^{\mathbbm{Z}/4\mathbbm{Z}}\cong A^{\mathbbm{Z}/2\mathbbm{Z}\times \mathbbm{Z}/2\mathbbm{Z}}$, but there is no bounded spread isomorphism between these because the underlying fusion categories $\textbf{Hilb}_{f.d.}(\mathbbm{Z}/4\mathbbm{Z})$ and $\textbf{Hilb}_{f.d}(\mathbbm{Z}/2\mathbbm{Z}\times \mathbbm{Z}/2\mathbbm{Z})$ are not Morita equivalent.

\end{ex}

\bigskip

\subsection{Examples of QCA}

Recall that if $\mathcal{C}$ is a braided C*-tensor category, $\textbf{Aut}_{br}(\mathcal{C})$ is the group of  unitary isomorphism monoidal natural isomorphism classes of braided (unitary) monoidal equivalences of $\mathcal{C}$. We have the following corollary to the previous section.

\begin{cor} (c.f. Theorem \ref{ThmC})
Let $A$ denote the fusion spin chain constructed from a fusion category $\mathcal{D}$ and strongly tensor generating object $X$. Then there is a homomorphism $\textbf{DHR}:\textbf{QCA}(A)/\textbf{FDQC}(A)\rightarrow \textbf{Aut}_{br}(\mathcal{Z}(\mathcal{D}))$.
\end{cor}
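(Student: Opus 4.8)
The plan is to assemble the pieces already established in this section with the braided refinement of the DHR functor; no new analytic input is needed. First I would invoke Proposition \ref{Haagduality}, which shows the fusion spin chain $A$ satisfies algebraic Haag duality and in particular weak algebraic Haag duality. This is exactly the hypothesis required to run the braided version of Theorem \ref{TheoremA} (c.f. Theorem \ref{ThmB}): that result guarantees both that $\textbf{DHR}(A)$ carries a canonical braiding and that every $\alpha\in\textbf{QCA}(A)=\textbf{Net}_{L}(A,A)$ induces a \emph{braided} unitary monoidal autoequivalence $\textbf{DHR}(\alpha)$ of $\textbf{DHR}(A)$. Functoriality of $\textbf{DHR}$ then yields a group homomorphism
$$\textbf{DHR}:\textbf{QCA}(A)\rightarrow\textbf{Aut}_{br}(\textbf{DHR}(A)),$$
and since $\textbf{FDQC}(A)$ lies in the kernel of the underlying monoidal homomorphism by Theorem \ref{TheoremA} — the natural isomorphism $\eta$ built there is compatible with the braiding, the identity being a braided autoequivalence — it lies in the kernel here as well. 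Hence the map descends to $\textbf{QCA}(A)/\textbf{FDQC}(A)$.

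Next I would transport this homomorphism along the identification of $\textbf{DHR}(A)$ with the Drinfeld center. By Theorem \ref{braided equivalence}, for any interval $[a,b]$ with $b-a\ge n$ the functor $F_{a,b}:\mathcal{Z}(\cD)\rightarrow\textbf{DHR}(A)$ is a braided unitary monoidal equivalence. Conjugation by $F_{a,b}$, sending a braided autoequivalence $\Phi$ of $\textbf{DHR}(A)$ to $F_{a,b}^{-1}\circ\Phi\circ F_{a,b}$, therefore induces a group isomorphism $\textbf{Aut}_{br}(\textbf{DHR}(A))\cong\textbf{Aut}_{br}(\mathcal{Z}(\cD))$; because $\textbf{Aut}_{br}$ is taken up to unitary monoidal natural isomorphism, this isomorphism is independent of the chosen interval $[a,b]$ and of the chosen quasi-inverse. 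Composing the two maps produces the desired
$$\textbf{DHR}:\textbf{QCA}(A)/\textbf{FDQC}(A)\rightarrow\textbf{Aut}_{br}(\mathcal{Z}(\cD)).$$

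I do not expect a genuine obstacle, as all the content has been discharged in the earlier theorems and the corollary is essentially an assembly. The only points requiring care are bookkeeping: that conjugation by a braided equivalence respects composition and preserves the braided structure, and that passing to monoidal-natural-isomorphism classes leaves the construction well defined. The conceptual point worth emphasizing is that one must use the \emph{braided} strengthenings — weak algebraic Haag duality (Proposition \ref{Haagduality}) together with the braidedness of $F_{a,b}$ (Theorem \ref{braided equivalence}) — rather than merely the monoidal statement of Theorem \ref{TheoremA}; it is precisely these that upgrade the codomain from $\textbf{Aut}_{\otimes}$ to $\textbf{Aut}_{br}$.
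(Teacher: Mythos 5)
Your proposal is correct and matches the paper's own route: the paper states this corollary as an immediate assembly of Proposition \ref{Haagduality} (algebraic Haag duality for fusion spin chains), the braided refinement of the DHR functor (Theorem \ref{ThmB}), and the braided equivalence $F_{a,b}:\mathcal{Z}(\cD)\cong\textbf{DHR}(A)$ of Theorem \ref{braided equivalence}, exactly as you do. Your added bookkeeping remarks — that $\textbf{FDQC}(A)$ remains in the kernel since triviality in $\textbf{Aut}_{br}$ is still tested by monoidal natural isomorphism, and that conjugation by $F_{a,b}$ gives a well-defined isomorphism $\textbf{Aut}_{br}(\textbf{DHR}(A))\cong\textbf{Aut}_{br}(\mathcal{Z}(\cD))$ independent of choices — are accurate and fill in precisely what the paper leaves implicit.
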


The goal in this section is to find examples of QCA that map onto specific braided autoequivalences of the center. Let $\cD$ be a unitary fusion category and $X$ a strongly tensor generating object, and let $A$ denote the net over $\mathbbm{Z}$.  Note that any unitary autoequivalence of $\mathcal{C}$ induces a braided unitary autoequivalence of the center $\widetilde{\alpha}\in \text{Aut}_{br}(\mathcal{Z}(\cD))$ \cite{MR3242743}. More specifically, if $(Z, \sigma)\in \mathcal{Z}(\cD)$ and $\alpha\in \text{Aut}_{\otimes}(\cC)$, then define

$$(\alpha(Z), \sigma^{\alpha}),$$

\noindent where $\sigma^{\alpha}_{\alpha(Z), X}: \alpha(Z)\otimes X\cong X\otimes \alpha(Z)$ is defined as the composition

\[
\begin{tikzcd}
\alpha(Z)\otimes X \arrow{r}{can}  & \alpha(Z\otimes \alpha^{-1}(X))\arrow{r}{\alpha(\sigma_{Z,\alpha^{-1}(X)})}& \alpha(\alpha^{-1}(X)\otimes Z)\arrow{r}{can} & X\otimes \alpha(Z).
\end{tikzcd}
\]

\noindent Here, $\text{can}$ denotes the canonical isomorphisms built from the monoidal structure on the functor $\alpha$. It is easy to check that the assignment $\widetilde{\alpha}(Z,\sigma):=(\alpha(Z), \sigma^{\alpha})$ extends naturally to a braided monoidal equivalence of $\mathcal{Z}(\cD)$. Then $\alpha\mapsto \widetilde{\alpha}$ gives a homomorphism from $\textbf{Aut}_{\otimes}(\cD)\rightarrow \textbf{Aut}_{br}(\mathcal{Z}(\cD))$, whose image is denoted $\textbf{Out}(\cD)$. 

Let $\text{Stab}(X)$ be the group whose objects are monoidal equivalence classes of unitary monoidal autoequivalences of $\cD$ such that $\alpha(X)\cong X$. For any $\alpha\in \text{Stab}(X)$, we will build a QCA on $A$ with spread $0$, whose induced action on $\mathcal{Z}(\cD)$ is given by $\widetilde{\alpha}$. Recall $A_{[a,b]}:=\cD(X^{b-a+1}, X^{b-a+1})$. Then applying $\alpha$ to the morphisms in $\cD$ and conjugating by the tensorator of $\alpha$, we get the map

$$\widehat{\alpha}: \cD(X^{b-a+1}, X^{b-a+1})\mapsto \cD(\alpha(X)^{b-a+1}, \alpha(X)^{b-a+1}) $$

Choosing an isomorphism $\eta:\alpha(X)\cong X$, and define the homomorphism

$$Q^{[a,b]}_{\alpha}: A_{[a,b]}\cong \cD(X^{b-a+1}, X^{b-a+1}) \rightarrow \cD(X^{b-a+1}, X^{b-a+1})\cong A_{[a,b]}  $$

by 

$$Q^{[a,b]}_{\alpha}(f):= (\eta^{\otimes b-a})\circ \widehat{\alpha}(f)\circ ((\eta^{*})^{\otimes b-a+1})$$

These isomorphisms are clearly compatible with inclusions, and thus assemble into a QCA with $0$-spread

$$Q_{\alpha}: A\rightarrow A.$$

\noindent By construction, the assignment only depends on the choice of $\eta$ up to a depth one circuit. Furthermore, from our analysis in the previous section, it is clear that $\textbf{DHR}(Q_{a})\cong \widetilde{\alpha}\in \textbf{Out}(\cD)\le \mathcal{Z}(\cD)$.

\begin{cor} (c.f. Theorem \ref{ThmC})
Suppose that the tensor generating object $X$ is stable under any monoidal autoequivalence of $\cC$. Then the image of the $\textbf{DHR}$ homomorphism $\textbf{QCA}(A)/\textbf{FDQC}(A)\rightarrow \textbf{Aut}_{br}(\mathcal{Z}(\cD))$ contains the subgroup $\textbf{Out}(\cD)$. In particular, if $\textbf{Out}(\cD)$ is non-abelian, then so is $\textbf{QCA}(A)/\textbf{FDQC}(A)$.
\end{cor}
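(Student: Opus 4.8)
The plan is to assemble the corollary directly from the explicit QCA construction $\alpha \mapsto Q_{\alpha}$ given immediately above, together with the braided equivalence $\textbf{DHR}(A) \cong \mathcal{Z}(\cD)$ of Theorem \ref{braided equivalence}. The key observation is that the hypothesis ``$X$ is stable under any monoidal autoequivalence of $\cD$'' is exactly the statement $\text{Stab}(X) = \textbf{Aut}_{\otimes}(\cD)$, so the construction of $Q_{\alpha}$---which a priori only applies to $\alpha \in \text{Stab}(X)$---is in fact available for every $\alpha \in \textbf{Aut}_{\otimes}(\cD)$.

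First I would fix an arbitrary $\alpha \in \textbf{Aut}_{\otimes}(\cD)$ and invoke the preceding construction to produce the spread-zero QCA $Q_{\alpha} \in \textbf{QCA}(A)$ together with the identity $\textbf{DHR}(Q_{\alpha}) = \widetilde{\alpha}$ in $\textbf{Aut}_{br}(\mathcal{Z}(\cD))$. Since $\textbf{Out}(\cD)$ is by definition the image of the homomorphism $\textbf{Aut}_{\otimes}(\cD) \to \textbf{Aut}_{br}(\mathcal{Z}(\cD))$, $\alpha \mapsto \widetilde{\alpha}$, letting $\alpha$ range over all of $\textbf{Aut}_{\otimes}(\cD)$ shows that every element of $\textbf{Out}(\cD)$ lies in the image of $\textbf{DHR}$. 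Thus $\textbf{Out}(\cD) \subseteq \operatorname{im}(\textbf{DHR})$ as sets, and since both are subgroups of $\textbf{Aut}_{br}(\mathcal{Z}(\cD))$, $\textbf{Out}(\cD)$ is a subgroup of the image, which is the first assertion.

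For the non-abelian consequence I would argue purely group-theoretically: $\operatorname{im}(\textbf{DHR})$ is a subgroup of $\textbf{Aut}_{br}(\mathcal{Z}(\cD))$ containing $\textbf{Out}(\cD)$, so if $\textbf{Out}(\cD)$ contains two non-commuting elements then so does $\operatorname{im}(\textbf{DHR})$; since $\operatorname{im}(\textbf{DHR})$ is a homomorphic image of $\textbf{QCA}(A)/\textbf{FDQC}(A)$ and homomorphic images of abelian groups are abelian, the group $\textbf{QCA}(A)/\textbf{FDQC}(A)$ cannot be abelian.

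The only substantive point---and the main obstacle---is the identity $\textbf{DHR}(Q_{\alpha}) = \widetilde{\alpha}$, asserted as ``clear'' in the construction. Verifying it requires tracing $Q_{\alpha}$ through the explicit AF-model $F_{a,b}$ of the equivalence $\mathcal{Z}(\cD) \cong \textbf{DHR}(A)$: one checks that conjugating each local algebra $\cD(X^{m},X^{m})$ by $\alpha$ and by the chosen unitary $\eta : \alpha(X) \cong X$ carries the localized bimodule $F_{a,b}(Z,\sigma)$ to $F_{a,b}(\alpha(Z),\sigma^{\alpha})$, i.e. sends the half-braiding $\sigma$ to the half-braiding $\sigma^{\alpha}$ defining $\widetilde{\alpha}(Z,\sigma)$. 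Well-definedness of the resulting class in $\textbf{QCA}(A)/\textbf{FDQC}(A)$ is precisely the stated independence of $Q_{\alpha}$ from the choice of $\eta$ up to a depth-one circuit, so no ambiguity survives the passage to the quotient.
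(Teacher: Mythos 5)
Your proposal is correct and takes essentially the same route as the paper: the paper states this corollary with no separate proof because it follows immediately from the preceding $Q_{\alpha}$ construction once one observes, as you do, that the stability hypothesis means $\text{Stab}(X)=\textbf{Aut}_{\otimes}(\cD)$, so every $\widetilde{\alpha}\in\textbf{Out}(\cD)$ is realized as $\textbf{DHR}(Q_{\alpha})$, and the non-abelian conclusion is the same elementary fact that homomorphic images of abelian groups are abelian. Your caveat that the identity $\textbf{DHR}(Q_{\alpha})\cong\widetilde{\alpha}$ is the only substantive point (asserted as ``clear'' in the paper) is also an accurate reading of where the real content lies.
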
 

\begin{ex}\label{non-abelian}{\textbf{Non-abelian $\mathbbm{Z}/2\mathbbm{Z}\times \mathbbm{Z}/2\mathbbm{Z} $-symmetric QCA}} (c.f. Corollary \ref{CorD}). We now give a concrete example. We consider on ordinary spin system, coarse-grained so that the on-site Hilbert space consists of two qubits $$K:=\mathbbm{C}^{2}\otimes \mathbbm{C}^{2}$$

Let $G:=\mathbbm{Z}/2\mathbbm{Z}\times \mathbbm{Z}/2\mathbbm{Z} $ act on $K$ where each copy of $\mathbbm{Z}/2\mathbbm{Z}$ acts by a spin flip on the corresponding tensor factor. This defines a global, on-site symmetry. Viewing $K\in \textbf{Rep}(\mathbbm{Z}/2\mathbbm{Z}\times \mathbbm{Z}/2\mathbbm{Z})$, we see that $K$ is in fact the regular representation, and thus is characteristic (since it decomposes as a direct sum of all simple objects with multiplicity 1). 

Thus, the image of $\textbf{DHR}$ for the resulting net contains $$\textbf{Out}(\textbf{Rep}(\mathbbm{Z}/2\mathbbm{Z}\times \mathbbm{Z}/2\mathbbm{Z}))\supseteq \textbf{Out}(\mathbbm{Z}/2\mathbbm{Z}\times \mathbbm{Z}/2\mathbbm{Z})\cong S_{3}. $$

In this case, we can implement this $S_{3}$ action explicitly on the original Hilbert space. Using the standard qubit basis, we consider the basis for $\mathbbm{C}^{2}\otimes \mathbbm{C}^{2}$, we consider the vectors in $\mathbbm{C}^{2}$

$$|+\rangle:= \frac{1}{\sqrt{2}}(|0\rangle+|1\rangle)$$

$$|-\rangle:= \frac{1}{\sqrt{2}}(|0\rangle+|1\rangle)$$

and define the orthonomal basis of $\mathbbm{C}^{2}\otimes \mathbbm{C}^{2}$

$$|a\rangle:= |+\rangle \otimes |+\rangle  $$
$$|1\rangle:= |+\rangle \otimes |-\rangle  $$
$$|2\rangle:= |-\rangle \otimes |+\rangle  $$
$$|3\rangle:= |-\rangle \otimes |-\rangle  $$

Then for any $g\in S_{3}$, consider the unitary  $U_{g}$ on $\mathbbm{C}^{2}\otimes \mathbbm{C}^{2}$,  which fixes $|a\rangle$ and permutes $\{|1\rangle, |2\rangle, |3\rangle\}$ accordingly.

Then conjugation by the product of $\text{Ad}(U_g)$ over all sites gives a spread $0$ QCA on the algebra of $\mathbbm{Z}/2\mathbbm{Z}\times \mathbbm{Z}/2\mathbbm{Z}$ symmetric operators, which cannot be disentangled by a symmetric finite circuit. Note that even though this QCA is defined on the full spin system and preserves the symmetric subalgebra, it does not commute with the group action.

\end{ex}

\subsection{2+1D topological boundaries theories}

Let $\cC$ be a unitary modular tensor category. Recall a \textit{Lagrangian} algebra is a commutative, connected separable algebra object (or Q-system) $A\in \cC$ such that $\text{dim}(A)^{2}=\text{dim}(\mathcal{C})$. Equivalently, the category of local modules $\cC^{loc}_{A}\cong \textbf{Hilb}_{f.d}$. In this case, the category $\cC_{A}$ of right $A$-modules is a fusion category, and the central functor $\cC\rightarrow \mathcal{Z}(\cC_{A})$ is a braided equivalence. We can view Lagrangian algebras as parameterizing ``ways $\cC$ can be realized as a Drinfeld center of a fusion category", and the fusion category in question is $\cC_{A}$.

From a physical perspective, if we view $\cC$ as the topological order of a 2+1D theory, then topological (gapped) boundaries are characterized by Lagrangian algebras $A\in \mathcal{C}$ \cite{MR3246855, MR3063919}. The fusion category $\cC_{A}$ is the fusion category of topological boundary excitations.

 We define the groupoid $\textbf{TopBound}_{2+1}$ as follows:

    \begin{itemize}
        \item 
        Objects are pairs $(\cC,A)$ where $\cC$ is a unitary modular tensor category and $A$ is a Lagrangian algebra.
        \item
        Morphisms $(\cC,A)\rightarrow (\cD,B)$ are pairs $(\alpha, \eta)$, where $\alpha:\cC\cong \cD$ is a unitary braided equivalence and $\eta:\alpha(A)\cong B$ is a unitary isomorphism of algebra objects. These morphisms are taken up to the equivalence relation $(\alpha, \eta)\sim (\beta, \lambda)$ if there is a monoidal natural isomorphism $\delta: \alpha\cong \beta$ such that $\lambda \circ \delta_{A}=\eta$ (see  \cite[Definition 4.1]{MR3933137}.)
        Composition is induced from the natural composition of autoequivalences. 
    \end{itemize}

In this section, we will give a construction of a 1D net of algebras from the data of the pair $(\cC, A)$ which is functorial from $\textbf{TopBound}_{2+1}\rightarrow \textbf{Net}_{\mathbbm{Z}}/\sim_{\textbf{FDQC}}$. Recall that there is a forgetful functor $\textbf{Forget}: \textbf{TopBound}_{2+1}\rightarrow \textbf{BrTens}$ that simply forgets the choice of Lagrangian algebra. We have the following theorem, which allows us to realize many examples of braided equivalences between concrete quasi-local algebras.

\begin{thm}
    There exists a functor $\textbf{B}:\textbf{TopBound}_{2+1}\rightarrow \textbf{Net}_{\mathbbm{Z}}$ such that $\textbf{DHR}\circ \textbf{B}\cong \textbf{Forget}$ as functors $\textbf{TopBound}_{2+1}\rightarrow \textbf{BrTens}$.
\end{thm}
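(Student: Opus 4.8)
The plan is to define $\textbf{B}$ by passing to the Morita-dual fusion category and then invoking Theorem \ref{braided equivalence}. On objects, given $(\mathcal{C},A)\in\textbf{TopBound}_{2+1}$, I form the unitary fusion category $\mathcal{C}_A$ of right $A$-modules; since $A$ is Lagrangian, the central functor supplies a canonical unitary braided equivalence $\Phi_{\mathcal{C},A}\colon \mathcal{C}\to\mathcal{Z}(\mathcal{C}_A)$. I take the canonical generator $X_A:=\bigoplus_{V\in\operatorname{Irr}(\mathcal{C}_A)}V$, which is strongly tensor generating with $n=1$ and self-dual (duality permutes $\operatorname{Irr}(\mathcal{C}_A)$), and define $\textbf{B}(\mathcal{C},A)$ to be the fusion spin chain over $\mathbbm{Z}$ built from $(\mathcal{C}_A,X_A)$ as in Proposition \ref{fusionspinchain}. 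Fixing a reference interval $[a,b]$ of length at least $n$, Theorem \ref{braided equivalence} gives a braided equivalence $F_{a,b}\colon\mathcal{Z}(\mathcal{C}_A)\to\textbf{DHR}(\textbf{B}(\mathcal{C},A))$, and I set the candidate natural-isomorphism component to be $\Psi_{\mathcal{C},A}:=F_{a,b}\circ\Phi_{\mathcal{C},A}\colon \mathcal{C}\to\textbf{DHR}(\textbf{B}(\mathcal{C},A))$. This already realizes $\textbf{DHR}\circ\textbf{B}\cong\textbf{Forget}$ at the level of objects.

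On morphisms, a representative $(\alpha,\eta)\colon(\mathcal{C},A)\to(\mathcal{D},B)$ induces a unitary monoidal equivalence $\alpha_*\colon\mathcal{C}_A\to\mathcal{D}_B$ by sending a right $A$-module $M$ to the right $B$-module $\alpha(M)$ with action twisted through $\eta\colon\alpha(A)\cong B$. Since $\alpha_*$ is an equivalence it permutes the simples, so $\alpha_*(X_A)\cong X_B$; choosing such a unitary $\theta$, I define $\textbf{B}(\alpha,\eta)$ to be the spread-$0$ net isomorphism given on each local algebra $\mathcal{C}_A(X_A^{n},X_A^{n})$ by $f\mapsto \theta^{\otimes n}\circ\widehat{\alpha_*}(f)\circ(\theta^{*})^{\otimes n}$, exactly as in the construction of the zero-spread QCA $Q_\alpha$ from a stabilizing autoequivalence in the preceding subsection. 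As there, the dependence on $\theta$ and on the representative of $(\alpha,\eta)$ within its $\sim$-class is only up to a depth-one circuit, so $\textbf{B}$ is well defined and functorial as a map into $\textbf{Net}_{\mathbbm{Z}}/\sim_{\textbf{FDQC}}$; choosing representatives lifts it to $\textbf{Net}_{\mathbbm{Z}}$, and because $\textbf{DHR}$ annihilates $\textbf{FDQC}$ the composite $\textbf{DHR}\circ\textbf{B}$ is independent of all these choices.

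The heart of the proof is naturality: I must show that the square with horizontal arrows $\Psi_{\mathcal{C},A},\Psi_{\mathcal{D},B}$, left arrow $\alpha$, and right arrow $\textbf{DHR}(\textbf{B}(\alpha,\eta))$ commutes up to braided monoidal natural isomorphism. I would factor this into two identifications. First, functoriality of the Drinfeld center under Morita equivalence (standard, see \cite{MR3242743}) gives $\mathcal{Z}(\alpha_*)\circ\Phi_{\mathcal{C},A}\cong\Phi_{\mathcal{D},B}\circ\alpha$ as braided equivalences. Second, I must identify $\textbf{DHR}(\textbf{B}(\alpha,\eta))$ with $\mathcal{Z}(\alpha_*)$ under the $F_{a,b}$-identifications of the two DHR categories with their centers; this is precisely the computation already carried out for autoequivalences (the verification that $\textbf{DHR}(Q_\alpha)\cong\widetilde{\alpha}$), now run for an equivalence between two distinct categories, using the explicit AF-model $F^{k}_{a,b}$ and the compatibility of $\widehat{\alpha_*}$ with the half-braidings defining $\mathcal{Z}(\alpha_*)$. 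Composing the two identifications yields $\textbf{DHR}(\textbf{B}(\alpha,\eta))\circ\Psi_{\mathcal{C},A}\cong\Psi_{\mathcal{D},B}\circ\alpha$, which is the desired naturality.

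The main obstacle I anticipate is this second identification: tracking how the spread-$0$ net isomorphism $\textbf{B}(\alpha,\eta)$ acts on the localizable bimodules $F^{k}_{a,b}(Z,\sigma)$ and checking that the induced braided-autoequivalence datum matches $(\alpha_*(Z),\sigma^{\alpha_*})$ — that is, that the half-braiding is transported correctly — requires a careful diagram chase through the definitions of $F^{k}_{a,b}$, its tensorators, and the induced half-braiding on the center, with the intertwiner $\theta$ inserted at each step. Everything else (the object-level braided equivalence via $\Phi$ and $F_{a,b}$, well-definedness up to $\textbf{FDQC}$, and Morita-functoriality of the center) is either immediate from the cited results or routine bookkeeping of the kind already performed in the preceding subsections.
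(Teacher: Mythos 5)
Your proposal is correct, and its skeleton --- build a fusion spin chain over the Morita dual $\mathcal{C}_A$ and identify its DHR category with $\mathcal{Z}(\mathcal{C}_A)\cong\mathcal{C}$ via Theorem \ref{braided equivalence} --- is the same as the paper's, but the implementation genuinely differs. The paper takes as generator the free module $F_A(X_{\mathcal{C}})=X_{\mathcal{C}}\otimes A$ on the sum of the simples of $\mathcal{C}$, not your $X_A=\bigoplus_{V\in\operatorname{Irr}(\mathcal{C}_A)}V$, precisely so that the local algebras can be rewritten as $\mathcal{C}_A\bigl(F_A(X_{\mathcal{C}})^{n},F_A(X_{\mathcal{C}})^{n}\bigr)\cong\mathcal{C}(X_{\mathcal{C}}^{n},X_{\mathcal{C}}^{n}\otimes A)$, a model phrased entirely in terms of the data $(\mathcal{C},A)$, with multiplication given by $m_A$ and the right-hand inclusion given by the braiding of $\mathcal{C}$. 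Since a morphism of $\textbf{TopBound}_{2+1}$ is literally a pair $(\alpha,\eta)$ of data on $(\mathcal{C},A)$, in that model $\textbf{B}(\alpha)$ is obtained by applying $\alpha$ and conjugating by $\eta$ and a chosen unitary $\alpha(X_{\mathcal{C}})\cong X_{\mathcal{D}}$ (braidedness of $\alpha$ is used exactly where the braiding sits in the inclusions), and the identification $\textbf{DHR}(\textbf{B}(\alpha))=[\alpha]$ then follows directly from the previous section, with no need to introduce the induced equivalence $\alpha_*\colon\mathcal{C}_A\to\mathcal{D}_B$ or the square $\mathcal{Z}(\alpha_*)\circ\Phi_{\mathcal{C},A}\cong\Phi_{\mathcal{D},B}\circ\alpha$. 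Your route instead stays inside the Morita dual: it buys verbatim reuse of the $Q_\alpha$ construction and a clean conceptual factorization (spin chain, plus DHR equals center, plus center functoriality), at the cost of two extra lemmas --- that $\alpha_*$ is a unitary monoidal equivalence (note $\otimes_A$ is itself defined using the braiding of $\mathcal{C}$, so this too requires $\alpha$ braided, a point worth stating), and the center-functoriality compatibility --- both of which are standard and true. Finally, your step identifying $\textbf{DHR}(\textbf{B}(\alpha,\eta))$ with $\mathcal{Z}(\alpha_*)$ under the $F_{a,b}$-identifications is deferred as a diagram chase, but this is the same verification the paper itself leaves implicit when asserting $\textbf{DHR}(Q_\alpha)\cong\widetilde{\alpha}$ and $\textbf{DHR}(\textbf{B}(\alpha))=[\alpha]$, so your level of rigor at the crux matches the paper's; the paper's choice of model simply makes that crux shorter, which is what its choice of generator was engineered to achieve.
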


\begin{proof}

    To build $G$, let $(\cC,A)\in \textbf{TopBound}_{2+1} $. Choose the object 
    $$X_{\cC}=\bigoplus_{Z\in \text{Irr}(\cC)} Z\in \cC $$. Then as described above $\cC\cong \mathcal{Z}(\cC_{A})$, and we have a forgetful functor $F_{A}:\cC\rightarrow \cC_{A}$, which is equivalent to the free module functor $Z\mapsto Z\otimes A$.
    We consider fusion spin chain net as in Proposition \ref{fusionspinchain} with fusion category $\cC_{A}$ and generator $F_{A}(X_{\cC})$. This is a strong tensor generator for $\cC_{A}$ since the forgetful functor $F_{A}$ from the center is always dominant, so all simple objects in $\mathcal{C}_{A}$ are already summands of $F_{A}(X_{\cC})$..

    We denote this net over $\mathbbm{Z}$ by $\textbf{B}(\cC,A)$. For an interval with $n$ points, the local algebra is 
    
    $$\cC_{A}(F_{A}(X_{\cC})^{ n},F_{A}(X_{\cC})^{ n})\cong \cC(X^{ n}_{\cC}, X^{ n}_{\cC}\otimes A)$$

   \noindent (see, e.g. \cite{MR3687214}). In the latter model, composition is given by 
    
$$f\cdot g= (1_{X^{n}_{\cC}}\otimes m)\circ (f\otimes 1_{A})\circ g,$$ where $m:A\otimes A\rightarrow A$ is the multiplication. The inclusions $$\cC_{A}(F_{A}(X_{\cC})^{n},F_{A}(X_{\cC})^{n})\hookrightarrow \cC_{A}(F_{A}(X_{\cC})^{ n+1},F_{A}(X_{\cC})^{ n+1})$$ \noindent given by tensoring $1_{F_{A}(X_{\cC})}$ on the left and right are given in our alternate model by sending, for $f\in \cC(X^{\ n}_{\cC}, X^{n}_{\cC}\otimes A) $,
    
$$f\mapsto 1_{X_{\cC}}\otimes f$$
and
$$f\mapsto (1_{X^{n}_{\cC}}\otimes \sigma_{A,X_{\cC}})\circ (f\otimes 1_{X_{\cC}})$$

\noindent respectively, where $\sigma_{A,X_{\cC}}:A\otimes X_{\cC}\cong X_{\cC}\otimes A$ is the braiding in $\cC$.

We will now extend the assignment $(\cC,A)\mapsto \textbf{B}(\cC,A)$ to a functor. Suppose 
$$(\cC,A),(\cD,B)\in \textbf{TopBound}_{2+1}  $$ \noindent and let $\alpha: \cC\cong \cD$ be a braided equivalence with $\alpha(A)\cong B$ as algebra objects. Choose a specific (unitary) algebra isomorphism $\eta: \alpha(A)\cong B$. Then $\alpha(X_{\cC})\cong X_{\cD}$ since both are simply a direct sum of all the simple objects. Pick a such a unitary and call it $\nu$.

Then, using the monoidal structure on $\alpha$, we get an algebra homomorphism

$$\widehat{\alpha}:\cC(X^{n}_{\cC}, X^{n}_{\cC}\otimes A)\mapsto \cD(\alpha(X_{\cC})^{n},\alpha(X_{\cC})^{n}\otimes \alpha(A)).$$

Then for $f\in \cC(X^{n}_{\cC}, X^{n}_{\cC}\otimes A)  \textbf{B}(\cC,A)_{[a,a+n]}$, we define

$$\textbf{B}(\alpha)(f):= ((\nu^{*})^{\otimes n}\otimes \eta)\circ \widehat{\alpha}(f)\circ ((\nu^{*})^{\otimes n})\in \cD(X^{n}_{\cD}, X^{n}_{\cD}\otimes B).$$

Since $\eta$ is an algebra isomorphism, it is easy to see that this is an isomorphism of local algebras. Since $\alpha$ is a braided monoidal equivalence, this is compatible with the left and right inclusions and thus extends to an isomorphism of quasi-local algebras with spread $0$. This gives us a morphism  $\textbf{B}(\alpha)\in \textbf{Net}_{\mathbbm{Z}}(\textbf{B}(\cC,A),\textbf{B}(\cD,B))$. By construction this only depends on the choice of $\eta$ up to a finite depth (in fact, depth one) circuit.

Now, consider $\textbf{DHR}\circ \textbf{B}: \textbf{TopBound}_{2+1}\rightarrow \textbf{BrTens}$. But since $F_{A}:\cC\rightarrow \cC_{A}$ factors through an equivalence with the center $\widetilde{F}_{A}:\cC\rightarrow \mathcal{Z}(\cC_{A})$ \cite{MR3039775}. Thus from the previous section, we have an equivalence $\textbf{DHR}(\textbf{B}(\cC,A))\cong \cC$, and under this identification, $\textbf{DHR}(\textbf{B}(\alpha))=[\alpha]$. 

\end{proof}

\bibliographystyle{alpha}
{\footnotesize{
\bibliography{bibliography}}}
\end{document}